\newtheorem{theorem}{Theorem}[section]
\definecolor{codegreen}{rgb}{0,0.6,0}
\definecolor{codegray}{rgb}{0.5,0.5,0.5}
\definecolor{codepurple}{rgb}{0.58,0,0.82}
\definecolor{backcolour}{rgb}{0.95,0.95,0.92}
\lstdefinestyle{mystyle}{
    backgroundcolor=\color{backcolour},
    commentstyle=\color{codegreen},
    keywordstyle=\color{magenta},
    numberstyle=\tiny\color{codegray},
    stringstyle=\color{codepurple},
    basicstyle=\ttfamily\footnotesize,
    breakatwhitespace=false,
    breaklines=true,
    captionpos=b,
    keepspaces=true,
    numbers=right,
    numbersep=5pt,
    showspaces=false,
    showstringspaces=false,
    showtabs=false,
    tabsize=2
}
\newenvironment{descriptionFixed}[1][3cm]{
\begin{description}[labelwidth=#1,labelindent=0pt,leftmargin=0.5cm,align=left,itemsep=0pt]
}{
\end{description}
}
\newif\ifcomment
\newcommand{\junk}[1]{}
\newif\ifconf
\newcommand{\varIndex}{\ensuremath{l}}
\newcommand{\rvX}[1][\varIndex]{\ensuremath{X_{#1}}}
\newcommand{\rvY}[1][\varIndex]{\ensuremath{Y_{#1}}}
\newcommand{\rvZ}[1][\varIndex]{\ensuremath{Z_{#1}}}
\newcommand{\rvXSum}{\ensuremath{X}}
\newcommand{\rvYSum}{\ensuremath{Y}}
\newcommand{\expX}[1][\varIndex]{\ensuremath{\mu_{#1}}}
\newcommand{\expXtilde}[1][\varIndex]{\ensuremath{\tilde{\mu}_{#1}}}
\newcommand{\exptilde}{\ensuremath{\tilde{\mu}}}
\newcommand{\expXhat}[1][\varIndex]{\ensuremath{\hat{\mu}_{#1}}}
\newcommand{\exphat}{\ensuremath{\hat{\mu}}}
\newcommand{\varAll}[1][\varIndex]{\ensuremath{\sigma^2}}
\newcommand{\numberVars}{\ensuremath{k}}
\newcommand{\varEnum}{\ensuremath{\varIndex \in [\numberVars]}}
\newcommand{\sumEnum}[1][EMPTY]{\ensuremath{\sum_{\varEnum} #1}}
\RenewDocumentCommand{\sumEnum}{O{Empty}}{\ensuremath{\sum_{\varEnum} #1}}
\newcommand{\prodEnum}[1][EMPTY]{\ensuremath{\prod_{\varEnum} #1}}
\RenewDocumentCommand{\prodEnum}{O{Empty}}{\ensuremath{\prod_{\varEnum} #1}}
\newcommand{\flowonuvbyi}{\ensuremath{f_{i,(u,v)}}}
\newcommand{\capofuv}{\ensuremath{c_{(u,v)}}}
\newcommand{\Expec}[1][EMPTY]{\ensuremath{\mathsf{Ex}\left(#1\right)}}
\RenewDocumentCommand{\Expec}{O{EMPTY}}{\ensuremath{\mathsf{Ex}\left(#1\right)}}
\newcommand{\cExpec}[1][EMPTY]{\ensuremath{\mathsf{Ex}\left(#1\right)}}
\RenewDocumentCommand{\cExpec}{O{EMPTY} O{EMPTY}}{\ensuremath{\mathsf{Ex}_{#1}\left(#2\right)}}
\newcommand{\Prob}[1][EMPTY]{\ensuremath{\mathsf{Pr}\left(#1\right)}}
\RenewDocumentCommand{\Prob}{O{EMPTY}}{\ensuremath{\mathsf{Pr}\left(#1\right)}}
\newcommand{\Varia}[1][EMPTY]{\ensuremath{\mathsf{Var}\left(#1\right)}}
\RenewDocumentCommand{\Varia}{O{EMPTY}}{\ensuremath{\mathsf{Var}\left(#1\right)}}
\newcommand{\expF}[1][EMPTY]{\ensuremath{\exp\left(#1\right)}}
\RenewDocumentCommand{\expF}{O{EMPTY}}{\ensuremath{\exp\left(#1\right)}}
\newcommand{\expT}[1][EMPTY]{\ensuremath{e^{#1}}}
\RenewDocumentCommand{\expT}{O{EMPTY}}{\ensuremath{e^{#1}}}
\newcommand{\param}{\ensuremath{\theta}}
\newcommand{\paramalpha}{\ensuremath{\theta_{\alpha}}}
\newcommand{\parambeta}{\ensuremath{\theta_{\beta}}}
\newcommand{\pushright}[1]{\ifmeasuring@#1\else\omit\hfill$\displaystyle#1$\fi\ignorespaces}
\newcommand{\pushleft}[1]{\ifmeasuring@#1\else\omit$\displaystyle#1$\hfill\fi\ignorespaces}
\newcommand{\specialcell}[1]{\ifmeasuring@#1\else\omit$\displaystyle#1$\ignorespaces\fi}
\newcommand{\comm}[1][EMPTY]{ &{\qquad\qquad~\hfill\textnormal{[#1]}}}
\newcommand{\nosemic}{\renewcommand{\@endalgocfline}{\relax}}%
\newcommand{\dosemic}{\renewcommand{\@endalgocfline}{\algocf@endline}}%
\let\oldnl\nl%
\newcommand{\nonl}{\renewcommand{\nl}{\let\nl\oldnl}}%
\newcommand{\removelatexerror}{\let\@latex@error\@gobble}
 \newcommand{\tagIt}[1]{\refstepcounter{equation}\textnormal{({\theequation})} \label{#1}}
\newcommand{\SET}{\textbf{set~}}
\newcommand{\LET}{\textbf{let~}}
\newcommand{\COMPUTE}{\textbf{compute~}}
\newcommand{\UPDATE}{\textbf{update~}}
\newcommand{\eps}{\epsilon}
  \providecommand\BibTeX{{%
    \normalfont B\kern-0.5em{\scshape i\kern-0.25em b}\kern-0.8em\TeX}}}
\begin{document}
\title[Improved Multicommodity Flow Throughput]
{Improved Throughput for All-or-Nothing Multicommodity Flows with Arbitrary Demands}




\author{Anya Chaturvedi\mbox{$^\dagger$}}
\affiliation{\institution{School of Computing and Augmented Intelligence, Arizona State University, US}}
\email{anya.chaturvedi@asu.edu}
\author{Chandra Chekuri}
\affiliation{\institution{Dept. of Computer Science, University of Illinois at Urbana-Champaign, US}}
\email{chekuri@illinois.edu}
\authornote{[Research supported in part by NSF grant CCF-1910149.]}

\author{Mengxue Liu}
\affiliation{\institution{School of Computing and Augmented Intelligence, Arizona State University, US}}
\email{mengxueliu.hust231@gmail.com}
\author{Andr\'ea W.\ Richa}
\affiliation{\institution{School of Computing and Augmented Intelligence, Arizona State University, US}}
\email{aricha@asu.edu}

\authornote{[Research supported in part by
NSF CCF-1637393 and CCF-1733680, and DoD-ARO MURI
No.W911NF-19-1-0233 awards.]}

\author{Matthias Rost}
\affiliation{\institution{Faculty of Electrical Engineering and Computer Science, TU Berlin, Germany}}
\email{mrost@inet.tu-berlin.de}

\author{Stefan Schmid}
\affiliation{\institution{Faculty IV, TU Berlin, Germany} }
\email{stefan.schmid@tu-berlin.de}

\authornote{[Research supported by the ERC Consolidator project AdjustNet, grant agreement No.~864228.]}

\author{Jamison W. Weber\mbox{$^\dagger$}}
\affiliation{\institution{School of Computing and Augmented Intelligence, Arizona State University, US}}
\email{jwweber@asu.edu}



\newcommand{\stefan}[1]{\textbf{stefan: #1}}
\newcommand{\chandra}[1]{{\color{blue}Chandra: #1}}

\begin{abstract}
Throughput is a main performance objective in communication networks. 
This paper considers a fundamental maximum throughput routing problem ---
the \emph{all-or-nothing multicommodity flow (ANF)} problem --- in arbitrary
  \emph{directed} graphs and in the practically relevant but challenging setting where {\em demands can
  be (much) larger than the edge capacities}. 
 Hence, in addition to assigning requests to valid flows for each routed commodity, an admission control mechanism is required, 
 which prevents overloading the network when routing commodities. 

Formally, the input for the ANF problem is an edge-capacitated
  directed graph $G=(V,E)$, where $n=|V|,$ and $k$ source-destination node-pairs
  $(s_i,t_i)$ of demand $d_i> 0$ and weight $w_i > 0$. The
  goal is to route a maximum weight subset of the given pairs (i.e., the weighted {\em throughput}), respecting the edge capacities; a pair
  $(s_i,t_i)$ is routed if all of its demand $d_i$ is routed from
  $s_i$ to $t_i$ (this is the all-or-nothing aspect);  splitting (fractional)
  flows is allowed. 
  
  We make several contributions.
  On the theoretical side we 
  present a bi-criteria
  approximation randomized rounding framework for this NP-hard problem that achieves a constant approximation of the throughput while only violating the edge capacities by a logarithmic factor. We present two
  non-trivial linear programming relaxations that can be used in the framework.
  One is an
  exponential-size formulation (solvable in polynomial time using a separation oracle)
  that considers a ``packing'' view and
  allows a more flexible approach, while the other is a compact (polynomial-size) edge-flow formulation 
  that allows for easy solving via standard LP solvers.
  We prove the non-trivial "equivalence" of the
  two relaxations and
  highlight the advantages of each of the two approaches. Via these, we obtain a polynomial-time randomized 
  algorithm that yields an arbitrarily good approximation on the weighted throughput
  while violating the edge capacity constraints by at most
  an $O(\min\{k,\log n/\log \log n\})$ multiplicative factor. 
  We also describe a deterministic rounding algorithm by derandomization, 
  using the method of pessimistic estimators.

  We complement our theoretical results with a proof of concept empirical evaluation,
  considering a variety of network scenarios. 
  We study two different ways to solve the LP efficiently
   in terms of time and space:
  $(a)$ by solving the compact ANF formulation directly using an
  off-the-shelf solver, and $(b)$ by approximately solving the packing LP relaxation via a well-known
  multiplicative weight update (MWU) approach (based on Lagrangean relaxation) or via a faster MWU-based heuristic called permutation routing. We highlight the benefits of the ANF packing LP formulation by presenting some more general scenarios of interest to networking applications (such as routing along short paths or a small number of paths) that this formulation allows.
\end{abstract}

\maketitle
\renewcommand{\shortauthors}{A.\ Chaturvedi, C.\ Chekuri, M.\ Liu, A.\ W.\ Richa, M.\ Rost, S.\ Schmid, and J.\ W.\ Weber}


\section{Introduction} \label{sec:intro}

The study of routing and multicommodity flow problems is motivated by
many real-world applications, e.g., related to the optimization of
communication and traffic networks, as well as by the crucial role
flows and cuts play in combinatorial optimization~\cite{ChekuriE15}.
In this paper, we are interested in throughput optimization in the
context of communication networks serving multiple commodities.
Throughput is a most fundamental performance metric in many networks
\cite{mogul2012we}, and we are particularly interested in the
practically relevant scenario where flows have certain minimal
performance or quality-of-service requirements, in the sense that they
need to be served in an \emph{all-or-nothing} manner with respect to their respective demands.

Our problem belongs to the family of {\em all-or-nothing (splittable) multicommodity flow} 
problems.
In contrast to most existing literature, we consider a more realistic 
model in the following respects:
\begin{itemize}
    \item The underlying communication graph can be \emph{directed}.
      This is motivated by the fact that in most practical
      communication networks (e.g., optical networks or wireless
      networks), the available capacities in the different link
      directions can differ.

    \item A single commodity demand can be larger than the capacity of any single
      link or path.  Consider for example a bulk transfer, or the fact
      that traffic patterns are often highly skewed, with a small
      number of elephant flows consuming a significant amount of
      bandwidth resources~\cite{roy2015inside}.  Only {\em splittable}
      flows can serve such demands.
    
    \item The total demand can be larger than the network capacity. To
      make efficient use of the given network resources, we hence need
      a clever {\em admission control} mechanism, in addition to a
      routing algorithm.
\end{itemize} 

We define the {\em All-or-Nothing (Splittable) Multicommodity Flow
  (ANF)} problem as follows: It takes as input
a flow network modeled as a capacitated directed graph $G(V,E)$, where
$V$ is the set of nodes, $E$ is the set of edges, and each edge $e$
has a capacity $c_{e}>0$. Let $n=|V|$ and $m=|E|$.  We are given a set
of source-destination pairs $(s_i, t_i)$, where $s_i,t_i\in V$, $ i
\in [k]$\footnote{Let $[x]$ denote the set $\{1,\ldots ,x\}$, for any
positive integer $x$.}, each with a given (non-uniform) demand $d_i>0$
and weight $w_i>0$.  The edge capacities $c_{e}$, the demands $d_i$
and the weights $w_i$ can be arbitrary positive functions on $n$ and
$k$, for any $e\in E$ or $i\in [k]$.  A valid set of flows for
commodities $1,\ldots,k$ in $G$ (i.e., a valid {\em multicommodity
  flow}),
	must satisfy standard flow conservation constraints for each
        commodity~$i$, which imply that the amount of flow for
        commodity $i$ entering a node $v$ has to be equal to the flow
        for commodity $i$ leaving $v$, if $v\not= s_i,t_i$.
	The {\em load} of an edge $e$, given by the sum of the flows
        for all commodities on $e$, must not exceed the edge's
        capacity $c_{e}$.
	Commodity $i$ is {\em satisfied} if $d_i$ units of flow of
        this commodity can be successfully routed from $s_i$ to $t_i$  in the network.
        (See also our mixed integer program edge-flow formulation in Figure~\ref{fig:compact}).



We aim to maximize the total profit of a subset of 
commodities that can be concurrently satisfied in a valid multicommodity flow. Specifically, 
the goal is to find a subset $K'\subseteq [k]$ of commodities to be
concurrently satisfied such that the (weighted) {\em throughput},
given by $\sum_{i\in K'} w_i$, is maximized over all possible
$K'$. The flow can be {\em split} arbitrarily along many branching
routes (subject to flow conservation and edge capacity constraints)
and does not have to be integral.

The ANF problem was introduced in \cite{ChekuriKS13} as a relaxation
of the classical Maximum Edge-Disjoint Paths problem (MEDP) and is
known to be NP-Hard and APX-hard even in the restricted setting of
unit demands and when the underlying graph is a tree
\cite{ChekuriKS13,1997}. In directed graphs, even with unit demands,
the problem is hard to approximate to within an $n^{\Omega(1/c)}$
factor even when edge capacities are allowed to be violated by a
factor $c$~\cite{chuzhoy2007hardness}. When demands can exceed the
minimum capacity, strong lower bounds exist even in very restricted
settings \cite{ShepherdV}. Hence, the literature has followed a
bi-criteria optimization approach where edge capacities can be violated slightly. Namely, in this paper we seek an {\em
  $(\alpha,\beta)$-approximation algorithm}:
For parameters $\alpha \in (0,1]$ and $\beta \geq 1$, we seek a
  polynomial-time algorithm that outputs a solution to the ANF problem
  whose throughput is at least an $\alpha$ {\em fraction of the
    maximum throughput} and whose {\em load on any edge $e$ is at most
    $\beta$ times the edge capacity $c_{e}$}, with high
  probability.\footnote{ With probability at least $1-1/n^c$, where
  $c>0$ is a constant.} The parameter $\beta$ hence provides an upper
  bound on the {\em edge capacity violation ratio} (or {\em
    congestion}) incurred by the algorithm.


\subsection{Our Contributions}
	
This paper revisits a fundamental maximum throughput routing problem, 
the all-or-nothing multicommodity flow (ANF) problem, considering 
a more general and practical setting where the network topology can be 
an {\em arbitrary directed graph}, with {\em arbitrary, non-uniform commodity demands} that can be much larger than the edge capacities, in contrast to most of the existing work in the literature. 
This model is challenging as it not only requires a clever algorithm
to efficiently route the splittable commodities across the directed
and capacitated network, but also an admission control policy.

We make several contributions.  On the  theoretical side, we present  a bi-criteria
  approximation randomized rounding framework for this NP-hard problem that achieves a constant approximation of the throughput while only violating the edge capacities by a logarithmic factor.
More specifically,
  \begin{itemize}
  \item We present {\em two non-trivial ANF linear programming relaxations}: One is an
    exponential-size formulation (solvable in polynomial time using a
    separation oracle) that considers a {\em ``packing''} view and
    allows a more flexible approach, while the other is a {\em strengthened relaxation} of a
    {\em compact edge-flow} mixed integer program (MIP) formulation 
    that 
    allows for easy solving via
    standard LP solvers.  We prove the non-trivial {\em "equivalence"} of the two
    relaxations and highlight the advantages of each of the two
    approaches.
  \item Via these relaxations,
  we obtain a polynomial-time {\em randomized rounding} algorithm that yields an {\em $(1-\epsilon)$ throughput
      approximation}, for any $1/m \leq \epsilon<1$, with an {\em edge
      capacity violation ratio} (also referred to as {\em congestion}) of $O(\min\{k,\log n/\log \log n\})$,
    with high probability. 
  
  \item We also present a {\em deterministic rounding algorithm by
    derandomization}, using the method of pessimistic
    estimators. Contrary to most algorithms obtained this way, our
    derandomized algorithm is simple enough to be also of relevance in
    practice.
\end{itemize}
In addition, our packing framework for ANF has
interesting networking applications, beyond the specific model considered in this paper.
We discuss different examples, related to {\em unsplittable flows}, 
flows that are {\em split into a small number of paths},
{\em routing along disjoint paths} for fault-tolerance, using
{\em few edges for the flow}, or routing flow along  {\em short paths}.


As a proof of concept,
we show how to {\em engineer our algorithms} for
practical scenarios. To this end, 
we couple three algorithms that allow one to compute 
the relaxed LP solutions efficiently, in terms of time and space, with both our randomized and derandomized algorithms. 
The first algorithm {\em directly solves the compact ANF formulation} using an
  off-the-shelf solver, in our case CPLEX; the second algorithm approximately solves the packing LP relaxation via a well-known
  {\em multiplicative weight update (MWU)} approach, based on Lagrangean relaxation; the last and third algorithm is a faster MWU-based heuristic called {\em permutation routing}. 
We provide general guidelines about the relative
efficacy of these algorithms in specific real-world networks.
As a contribution to the research community, to ensure reproducibility
and facilitate follow-up work, we will release our implementation (source code)
and experimental artefacts 
with this paper.

\subsection{Novelty and Related Work}
	\label{sec:related}
   We presented preliminary results leading to this journal article 
   at two conferences, at IEEE INFOCOM 2019~\cite{Liu19}
   and at PERFORMANCE 2021~\cite{perf21anf}. 
   In particular, our compact edge-flow LP formulation first appeared in~\cite{Liu19} and led to $(1/3,O(\sqrt{k\log n})
	)$-approximation guarantees for the
	ANF problem for the case of {\em uniform} demands and weights
        in directed graphs 
        (note that while~\cite{Liu19} considered the case of uniform demands, there was also no restriction on how large these
        demands can be when compared to the edge capacities).  
        In this article, we significantly improve and generalize the
        randomized rounding framework outlined in~\cite{Liu19}, in
        several ways: $(a)$ We are able to achieve an arbitrarily good
        throughput approximation bound; $(b)$ our bound on the edge
        capacity violation ratio does not depend on the number of
        commodities $k,$\footnote{Unless $k$ is very small, $o(\log n/\log\log n)$ in which
        case we get an approximation bound of $k$.} and significantly
        improves on the bound of $O(\sqrt{k\log
          n})$in~\cite{Liu19}; and $(c)$ we were also able to
        accommodate arbitrary non-uniform demands and commodity
        weights.
	In addition, we provide a derandomized algorithm for the ANF problem and a more flexible packing MIP formulation for the ANF problem that leads to several interesting extensions of practical interest. Some of these ideas were sketched in our previous short paper~\cite{perf21anf},
	however, without any technical details, proofs or evaluations. 

	Other work on bi-criteria $(\alpha,\beta)$-approximation
        schemes for the ANF problem that are closely related to ours
        aims at keeping $\beta$ constant, while letting $\alpha$ be a
        function of $n$. The work of Chekuri et
        al.~\cite{ChekuriKS13,anf-journal,ChekuriKS05} 
        is the most
        relevant and was also the first to formalize the ANF
        problem. Their work implies an approximation algorithm for the
        general (weighted, non-uniform demands) ANF problem in {\em
          undirected graphs} with $\alpha = \Omega(1/\log^3 k)$ and
        $\beta = 1$. A requirement of their algorithm is that $\max_i
        d_i\leq \min_e c_{e}$.  This is a strong assumption, since it eliminates all
        (undirected) networks $G$ where the above assumption fails,
        such as for example complete graphs with unit edge capacities
        and demands $2\leq d_i\leq n-1$, for all $i$.
	Hence, besides the fact that our approximation guarantees
        differ from those of~\cite{ChekuriKS13} (we have constant
        $\alpha$ and logarithmic $\beta$, while they achieve constant
        $\beta$ at the expense of a polylogarithmic $1/\alpha$), our
        results also apply to {\em any directed graph} $G$, without
        any assumptions on how $d_i$ compares to individual edge
        capacities. We note that even in undirected graphs and
        unit demands, the ANF problem does not admit a constant factor
        approximation  if only constant congestion is allowed \cite{ACGKTZ}.
        Thus, obtaining a good throughput approximation even in restricted
        settings requires $\omega(1)$ congestion violation.

	The ANF problem gets considerably more challenging in directed
        graphs. 
        Chuzhoy et
        al.\ ~\cite{chuzhoy2007hardness} show that,
        even if restricted to unit
        demands, the throughput is hard to approximate to within 
        polynomial factors in directed graphs when constant congestion is allowed.
        In~\cite{ChekuriE15}, Chekuri and Ene consider a variation of
        the ANF problem --- the \emph{Symmetric All or Nothing Flow
        (SymANF)} problem --- in {\em directed} graphs with {\em
          symmetric unit demand pairs and unit edge capacities}, also
        aiming at constant $\beta$ and polylogarithmic $1/\alpha$. In
        SymANF, the input pairs are unordered and a pair $s_i, t_i$ is
        routed if and only if both the ordered pairs $(s_i, t_i)$ and
        $(t_i, s_i)$ are routed; the goal is to find a maximum subset
        of the given demand pairs that can be routed.  The authors
        provide a poly-logarithmic throughput approximation with constant
        congestion for SymANF, by extending the well-linked
        decomposition framework of \cite{ChekuriKS05} to the directed
        graph setting with symmetric demand pairs. However, their
        approach, like the one for undirected graphs is limited to the
        setting where $\max_i d_i \le \min_e c_e$.  As explained
        above, our work considers a more general network setting where
        demand pairs need not be symmetric and demands values can
        exceed the capacities. Further, our goal is to obtain an arbitrarily good 
        approximation of the throughput while relaxing the capacity violation
        which is different regime.

	The \emph{Maximum Edge-Disjoint Paths
        (MEDP)}~\cite{erlebach2001maximum} problem considers a set of
        pairs of nodes to be routable if they can be connected using
        edge-disjoint paths and aims at finding the largest number of
        routable pairs. The \emph{Unsplittable Flow Problem} (UFP) is
        a generalization of MEDP to non-uniform demands while requiring
        that all flow for a pair is routed along a single path.
        MEDP and UFP are classical routing problems and have been
        extensively studied in VLSI routing where the constraint of
        using a single path for connecting pairs is particularly
        important. MEDP and UFP  tend to be harder to approximate
        than ANF.
        For instance, even for unit demands
        and undirected graphs MEDP is hard to approximate to almost a 
        polynomial factor 
        \cite{CKN18}, and in directed graphs the
        problem is hard to approximate to within an $\Omega(m^{1/2-\epsilon})$-factor
        ~\cite{GURUSWAMI2003473}. MEDP and UFP have  been mostly considered
        under the no-bottleneck assumption, that is, when $\max_i d_i \le \min_e c_e$.
        Without this assumption UFP becomes hard to approximate to within an
        $m^{1/2-\epsilon}$ factor even for very restricted settings \cite{ShepherdV}.

	Finally, our work leverages randomized rounding techniques presented by Rost et al.~\cite{ifip18round,ccr19tw} in the different context of virtual network embedding problems
	(i.e., in their context, flow endpoints are subject to optimization).

	\subsection{Organization}%
	
	The remainder of the paper is organized as follows.
	We introduce our packing framework in Section~\ref{sec:anf-packing}
	and the compact edge-flow formulation in Section~\ref{sec:compact}.
    The multiplicative-weight-update (MWU) algorithm is described in Section~\ref{sec:mwu}, 
    our randomized rounding algorithm in Section~\ref{sec:randomized-algorithm}, and our derandomized algorithm in Section~\ref{sec:derandomization}. We discuss more general applications of our packing framework in Section~\ref{sec:extensions}.
    We report on simulation results in Section~\ref{sec:sim},
    and conclude in Section~\ref{sec:concl}.
	
\newcommand{\lpopt}{\text{OPT}_{\text{LP}}}
\newcommand{\lpval}{W_{\text{LP}}}
\newcommand{\ipopt}{\text{OPT}_{\text{IP}}}

\section{A Packing Framework for ANF}
\newcommand{\wmax}{w_{\max}}
\newcommand{\cE}{\mathcal{E}}
\label{sec:anf-packing}

We develop two non-trivial mixed integer programming (MIP)
formulations for the ANF problem, presented in this section and in Section~\ref{sec:compact}. In our approach, we compute their
linear programming~(LP) relaxation solutions in polynomial time and
then convert these solutions into integer solutions via appropriate
randomized rounding. In this section we present the first such MIP
formulation, that takes a ``packing'' view of the ANF problem and
allows for a more flexible approach, as we discuss below and in
Section~\ref{sec:extensions}. In this formulation, we will be packing
an entire flow assignment for each commodity $i$, selected from the
set of all possible valid flows between $s_i$ and $t_i$. Since the
number of possible flows will be exponential, this formulation has
exponential size, but we show that its LP relaxation can still be
solved in polynomial time via a separation oracle. This is akin to use
the path formulation for flows rather than the edge-based flow
formulation. This perspective allows one to easily see why the
randomized rounding framework for rounding paths easily generalizes to
rounding ``flows.''

Recall that the input is a directed graph $G=(V,E)$, with $n=|V|$ and
$m=|E|$, and with edge capacities $c: E \rightarrow \mathbb{Z}_+$ and
$k$ demand pairs
$(s_1,t_1),\ldots, (s_k,t_k)$. Each demand pair $i$ has an associated
non-negative weight $w_i$ and a non-negative integer demand $d_i$.  We
say that $f: E\rightarrow \mathbb{R}_+$ is a \emph{valid} flow for
pair $i$ if $f$ routes $d_i$ units from $s_i$ to $t_i$ in $G$ and
respects the edge capacities. Note that if pair $i$ cannot be routed
in isolation in $G$ then we may as well discard it (since there are no valid flows for $i$).  Let
$\mathcal{F}_i$ denote the set of all valid flows for pair $i$. Each
$\mathcal{F}_i$ is not necessarily a finite set.  However,
we can restrict attention to a finite set of flows by considering
the polyhedron of all feasible $s_i$-$t_i$ flows in  $G$ and
considering only the finitely many vertices of that polyhedron; any
valid flow can be expressed as a convex combination of the flows
defined by the polyhedron's vertices.

We now describe a mixed integer programming formulation that captures
the ANF problem. This formulation is very large: In general it can be
exponential in $n,m$ and $k$. For each $i$, we have a binary variable
$x_i$ to indicate whether commodity $i$ is routed or not.  For each
$i$ and each valid flow $f \in \mathcal{F}_i$, we have a variable
$y(f)$ to indicate the fraction of $x_i$ that is routed using the flow
$f$.  For a flow $f$ we let $f(e)$ denote the amount of flow on $e$
used by $f$; note that $f(e)$ is fixed, for each $f$ and $e$, and
hence is not a variable.
\begin{frame}{}
	\small
    \begin{figure}[ht]
        \begin{minipage}[b]{0.45\linewidth}
            \centering
                \begin{alignat*}{7} \max \sum_{i = 1}^{k} w_i x_i  \\
            	\sum_{f \in \mathcal{F}_i} y(f) = x_i \qquad && 1 \leq i \leq k \\
        	    \sum_{i = 1}^{k} \sum_{f \in \mathcal{F}_i} f(e)y(f) \leq c(e) && e \in E \\
				x_i \in \{0,1\} && 1 \leq i \leq k \\
             	y(f) \geq 0	\qquad && f \in \mathcal{F}_i, 1 \leq i \leq k
            \end{alignat*}
            $(a)$
        \end{minipage}
        \hspace{0.5cm}
        \begin{minipage}[b]{0.45\linewidth}
            \centering
                            \begin{alignat*}{7} \max \sum_{i = 1}^{k} w_i  \sum_{f \in \mathcal{F}_i} y(f) \\
            	\sum_{f \in \mathcal{F}_i} y(f) \leq 1 \qquad && 1 \leq i \leq k \\
            	\sum_{i = 1}^{k} \sum_{f \in \mathcal{F}_i} f(e)y(f) \leq c(e) && e \in E \\
            	~\\
             	y(f) \geq 0	\qquad && f \in \mathcal{F}_i, 1 \leq i \leq k
            \end{alignat*}
            $(b)$
        \end{minipage}
        \caption{$(a)$ Mixed integer programming formulation for ANF based on ``flow'' variables; $(b)$ its LP relaxation.}
            \label{fig:mip-flow}
            \label{fig:lp-flow}
    \end{figure}
\end{frame}
The following lemma is easy to see.
\begin{lemma}
  The formulation shown in Fig~\ref{fig:mip-flow}(a) is an exact
  formulation for the ANF problem.
\end{lemma}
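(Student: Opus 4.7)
The plan is to prove the lemma by establishing a two-way correspondence between feasible solutions of the MIP in Figure~\ref{fig:mip-flow}(a) and feasible solutions of the ANF problem, showing that the objective values agree in both directions. The key structural fact I would rely on is that for each pair $i$, the set of valid $s_i$-$t_i$ flows routing exactly $d_i$ units and respecting the edge capacities (if considered alone) forms a polytope $P_i$, so $\mathcal{F}_i$ can be taken as the finite set of vertices of $P_i$, and every valid flow is a convex combination of elements of $\mathcal{F}_i$.

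For the forward direction (MIP $\Rightarrow$ ANF), I would take a feasible MIP solution $(x,y)$ and define $K' = \{i : x_i = 1\}$. For $i \in K'$, the constraint $\sum_{f\in\mathcal{F}_i} y(f) = x_i = 1$ together with $y(f)\ge 0$ (which I would note is implicit for flow variables) exhibits the coefficients of a convex combination over $\mathcal{F}_i$. I would then define $g_i(e) = \sum_{f\in\mathcal{F}_i} f(e)\,y(f)$ and observe that $g_i$ inherits flow conservation at every internal node and routes exactly $d_i$ units from $s_i$ to $t_i$, because these are linear conditions preserved under convex combinations of elements of $\mathcal{F}_i$. For $i\notin K'$, the equation forces $y(f)=0$ for all $f\in\mathcal{F}_i$, so commodity $i$ contributes nothing. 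Summing the edge-load constraint across $i$ gives $\sum_{i\in K'} g_i(e) \le c(e)$, so $\{g_i\}_{i\in K'}$ is a valid multicommodity flow satisfying $K'$, and its weighted throughput equals $\sum_i w_i x_i$.

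For the reverse direction (ANF $\Rightarrow$ MIP), I would start with an ANF solution given by a set $K'\subseteq[k]$ and flows $\{g_i\}_{i\in K'}$. Each $g_i$ lies in $P_i$ (when viewed against the residual capacities $c$, which is a weaker polyhedron than the one with the joint capacity coupling), and hence can be written as a convex combination $g_i = \sum_{f\in\mathcal{F}_i} \lambda_{i,f}\,f$ with $\lambda_{i,f}\ge 0$ and $\sum_f \lambda_{i,f}=1$, by Carathéodory's theorem. Setting $x_i = 1$ and $y(f) = \lambda_{i,f}$ for $i\in K'$, and $x_i=0$, $y(f)=0$ otherwise, the first MIP constraint holds by construction, while the capacity constraint follows by linearity from $\sum_{i\in K'} g_i(e) \le c(e)$. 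The objective values match.

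The only subtlety (and the main thing I would take care to justify cleanly) is the polytope/convex-combination argument: one must be precise that the linear conditions defining a valid flow of value $d_i$ respecting $c$ are preserved by convex combinations, and that every valid flow really does decompose over vertices of $P_i$. Both are standard once $\mathcal{F}_i$ is identified with the vertex set of $P_i$, so I would simply invoke this identification (already noted in the paper's preceding paragraph) rather than re-derive it. Everything else is a direct verification, so I expect the proof to be short.
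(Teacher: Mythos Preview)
Your proposal is correct and is exactly the natural two-way correspondence one would write out; the paper itself provides no proof at all, simply stating that the lemma ``is easy to see,'' so your argument is more detailed than what the paper offers but matches what the authors clearly had in mind. The one small point worth tightening is the parenthetical about ``residual capacities'': what you need (and what you in fact use) is just that in any feasible ANF solution each individual $g_i$ satisfies $g_i(e)\le c(e)$, which follows immediately from $\sum_j g_j(e)\le c(e)$ and nonnegativity, so $g_i\in P_i$.
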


We will now focus on solving and rounding the LP relaxation of the
preceding formulation; we simplify it by eliminating the variables
$x_i$. See Fig~\ref{fig:lp-flow}(b).

\subsection{Solving the Packing LP Relaxation}
\label{sec:solving-packingLP}
It is not at first obvious that the LP relaxation of the ANF MIP can be
solved in polynomial time. There are two ways to see why this is
indeed possible.  One is to show via the Ellipsoid method that the
dual has an {\em efficient separation oracle for the dual LP} and the
other is to describe an {\em equivalent compact (polynomial-size)
  formulation} to the ANF LP.  In this section, we will present the
former approach, which gives us a more flexible formulation that leads
to interesting extensions and that also leads to simpler proofs. In
Section~\ref{sec:compact}, we will present our strengthened compact edge-flow formulation, of
size polynomial in $n$ and $k$, and show that its relaxation is
equivalent to the relaxation of the formulation in
Figure~\ref{fig:lp-flow}(b). The benefits of the compact formulation are
that it directly leads to simple randomized and derandomized algorithms,
that can be efficiently implemented, as we show in
Section~\ref{sec:sim}.

In Figure~\ref{fig:dual-lp}, we present the dual LP to the formulation
in Figure~\ref{fig:lp-flow}(b).  There are two types of variables: First,
for each of the capacity constraints, we associate a dual variable
$\ell_e$ and for each constraint limiting the total flow to $1$ we
associate a dual variable $z_i$.
(Recall that the value  $f(e)$ is a constant
and not  a variable.)

\begin{figure}[htb]
  \begin{center}
    \begin{align*} \min \sum_{e \in E} c(e) \ell_e + \sum_{i=1}^k z_i \\
      z_i + \sum_{e \in E} f(e) \ell_e & \ge w_i \quad \quad 1\le i
                                         \le k, f \in  \mathcal{F}_i\\
      \ell_e  &  \ge 0 \quad \quad e \in E \\
      z_i  &  \ge 0 \quad \quad 1 \le i \le  k
    \end{align*}
\end{center}
  \caption{Dual of the LP relaxation for ANF.}
  \label{fig:dual-lp}
\end{figure}

The following lemma shows that one can use a polynomial-time separation oracle for solving the dual LP.

\begin{lemma}
  There is a polynomial-time separation oracle for the dual LP.
\end{lemma}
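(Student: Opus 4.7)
The plan is to reduce the separation problem for the dual LP to $k$ independent min-cost flow computations.

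Given a candidate dual solution $(\ell, z)$ with $\ell_e, z_i \geq 0$ (the non-negativity constraints are trivially checkable), we need to decide whether the constraint
\[
z_i + \sum_{e \in E} f(e) \ell_e \;\ge\; w_i
\]
holds for every $i \in [k]$ and every $f \in \mathcal{F}_i$, and produce a violated inequality otherwise. First I would observe that for fixed $i$, the family of constraints indexed by $f \in \mathcal{F}_i$ is equivalent to the single inequality
\[
z_i + \min_{f \in \mathcal{F}_i} \sum_{e \in E} f(e) \ell_e \;\ge\; w_i,
\]
so it suffices to compute, for each $i$, the quantity $\mu_i := \min_{f \in \mathcal{F}_i} \sum_e f(e) \ell_e$ and compare $z_i + \mu_i$ against $w_i$.

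The key step is to recognize that computing $\mu_i$ is a standard min-cost flow problem. Recall $\mathcal{F}_i$ consists of all $s_i$-to-$t_i$ flows of value $d_i$ in $G$ respecting the capacities $c_e$. Since $\ell_e \ge 0$, interpreting $\ell_e$ as the cost per unit of flow on edge $e$ turns $\min_{f \in \mathcal{F}_i}\sum_e f(e)\ell_e$ into the problem of routing $d_i$ units from $s_i$ to $t_i$ in $G$ with capacities $c_e$ and edge costs $\ell_e$, minimizing total cost. This is the classical min-cost flow problem, solvable in strongly polynomial time; its feasibility is guaranteed because each pair $i$ is assumed to be individually routable (otherwise it is discarded at the outset, so $\mathcal{F}_i \neq \emptyset$). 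If the optimum value $\mu_i$ satisfies $z_i + \mu_i < w_i$, then any optimal flow $f^*$ gives us an explicit violated inequality to return; otherwise all constraints for commodity $i$ are satisfied.

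Carrying this out for each $i \in [k]$ requires $k$ min-cost flow computations on $G$, each polynomial in $n$ and $m$, so the total running time of the oracle is polynomial in the input size. If every commodity passes the check, the oracle declares $(\ell, z)$ feasible; otherwise it returns a separating hyperplane corresponding to the cheapest flow $f^*$ for some violating $i$. The only potential subtlety is ensuring the returned constraint is an inequality of the LP, which it is by construction since $f^* \in \mathcal{F}_i$ is a valid flow (it respects capacities and routes exactly $d_i$ units), so the dual inequality for the index pair $(i, f^*)$ is one of the constraints of the LP in Figure~\ref{fig:dual-lp}. I do not anticipate a real obstacle here; the main conceptual content is just matching the exponential family of dual constraints to a tractable optimization problem via min-cost flow.
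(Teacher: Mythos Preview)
Your proposal is correct and follows essentially the same approach as the paper: reduce the separation problem for the dual to computing, for each commodity $i$, a minimum-cost $s_i$--$t_i$ flow of value $d_i$ under capacities $c$ and edge costs $\ell_e$, then check whether $z_i$ plus that cost is at least $w_i$. Your write-up is somewhat more detailed (explicitly noting feasibility of $\mathcal{F}_i$ and that the returned $f^*$ is indeed a constraint of the LP), but the argument is the same.
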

\begin{proof}
  The dual LP is easily seen to reduce to $s$-$t$ minimum-cost flow. Given
  non-negative values for the variables $\ell_e, e \in E$ and
  $z_i, 1 \le i \le k$ we compute the minimum-cost flow for each pair
  $(s_i,t_i)$ of $d_i$ units with edge costs given by
  $\ell_e, e \in E$. Let this cost be $q_i$. The values are feasible
  for the dual iff $z_i + q_i \ge w_i$ for $1 \le i \le k$. If there
  is an $i$ for which $z_i + q_i < w_i$ the corresponding minimum cost
  flow $f$ for pair $i$ defines the violated constraint.  THe minimum-cost
  flow problem is poly-time solvable and hence there is a poly-time separation
  oracle for the dual LP.
\end{proof}

Standard techniques allow one to solve the primal LP from an optimum
solution to the dual LP. However, since the Ellipsoid method is
impractical, in Sections~\ref{sec:compact} and \ref{sec:mwu}, we present two efficient ways of solving the ANF packing LP in practice,
which we will use in our implementations.

\subsection{Rounding the Packing LP Relaxation}
\label{sec:packing-rouding}
In this section, we show how to round a (fractional) solution to the
primal ANF MIP formulation. We will need the following standard
Chernoff bound (see \cite{MotwaniR}):

\begin{theorem} \label{chernoff} Let $X_1, \dots, X_n$ be $n$
independent random variables (not necessarily distributed
identically), with each variable $X_i$ taking a value of $0$ or $v_i$
for some value $0 < v_i \leq 1$.  Let $X = \sum_{i = 1}^n X_i$ be their sum.
Then the following hold:
\begin{itemize}
  \item For $\mu \ge E[X]$ and $\delta > 0$,
    $\Pr [X \geq (1 + \delta)\mu] < \bigg( \frac{e^{\delta}}{(1+ \delta)^{(1+ \delta)}} \bigg)^\mu.$
  \item For $0 \le \mu \le E[X]$ and $\delta \in (0,1)$,
        $\Pr [X \leq (1 - \delta)\mu] < e^{-\delta^2 \mu/2}.$
\end{itemize}
\end{theorem}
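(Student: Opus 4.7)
The plan is to follow the classical moment-generating-function (Chernoff) argument, with one small adjustment to handle the fact that each $X_i$ takes values in $\{0, v_i\}$ rather than $\{0,1\}$. The hypothesis $v_i \le 1$ is precisely what allows the standard chord-convexity bound on $\mathsf{Ex}(e^{t X_i})$ to go through unchanged, so the whole proof reduces to the familiar Bernoulli case.

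For the upper tail, I would apply Markov's inequality to $e^{tX}$ for a parameter $t > 0$ to be chosen later, and factor $\mathsf{Ex}(e^{tX}) = \prod_i \mathsf{Ex}(e^{t X_i})$ using independence. Writing $p_i := \Prob[X_i = v_i]$, I have $\mathsf{Ex}(e^{t X_i}) = 1 - p_i + p_i e^{t v_i}$. Since $x \mapsto e^{tx}$ is convex and $v_i \in [0,1]$, the chord bound gives $e^{t v_i} \le 1 + v_i(e^t - 1)$, so $\mathsf{Ex}(e^{t X_i}) \le \exp(p_i v_i (e^t - 1))$. Multiplying over $i$ and using $\sum_i p_i v_i = \mathsf{Ex}(X) \le \mu$, Markov then yields
\[
\Prob[X \ge (1+\delta)\mu] \le \exp\bigl((e^t - 1)\mu - t(1+\delta)\mu\bigr),
\]
and the substitution $t = \ln(1+\delta)$ recovers the first claim exactly.

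For the lower tail, I would apply Markov to $e^{-tX}$ for $t > 0$ and re-run the same convexity argument to obtain $\mathsf{Ex}(e^{-t X_i}) \le \exp(p_i v_i (e^{-t} - 1))$. Now the sign of $e^{-t} - 1$ is negative, so the hypothesis $\mathsf{Ex}(X) \ge \mu$ is used in the right direction: $\sum_i p_i v_i (e^{-t} - 1) \le \mu(e^{-t} - 1)$. Choosing $t = -\ln(1-\delta)$ gives the raw bound $\bigl(e^{-\delta}/(1-\delta)^{1-\delta}\bigr)^\mu$, and I would close with the standard analytic inequality $(1-\delta)\ln(1-\delta) \ge -\delta + \delta^2/2$ on $(0,1)$ (immediate from the series $-\ln(1-\delta) = \delta + \delta^2/2 + \delta^3/3 + \cdots \ge \delta$) to produce the simplified form $e^{-\delta^2 \mu / 2}$.

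The only real subtlety is bookkeeping: the theorem allows $\mu$ to be any upper (respectively, lower) bound on $\mathsf{Ex}(X)$, not $\mathsf{Ex}(X)$ itself, so the relaxations $\mathsf{Ex}(X) \le \mu$ and $\mathsf{Ex}(X) \ge \mu$ must be invoked at exactly the step where the sign of $e^{\pm t} - 1$ keeps the inequality going the right way. Once that is pinned down, the rest is routine calculus, and the argument is essentially identical to the Bernoulli case treated in \cite{MotwaniR}.
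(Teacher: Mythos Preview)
Your proposal is correct and follows precisely the moment-generating-function route the paper adopts in Appendix~\ref{sec:concentration-bounds} (Theorems~\ref{thm:chernoff-lower} and~\ref{thm:chernoff-upper}): Markov on $e^{\theta X}$, independence to factor, the convexity bound $\Expec[e^{\theta X_i}] \le \exp((e^\theta-1)\mu_i)$, the substitution $\theta = \ln(1\pm\delta)$, and the analytic inequality $-\delta-(1-\delta)\ln(1-\delta)\le -\delta^2/2$ for the lower-tail simplification. The paper itself only cites \cite{MotwaniR} for Theorem~\ref{chernoff} and reserves the detailed argument for the extended versions in the appendix, but those match your outline step for step.
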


Randomly rounding a feasible solution to the LP relaxation is
straightforward, and is very similar to the standard rounding via the
path formulation for the Maximum Edge Disjoint Problem (MEDP) pioneered in
the work of Raghavan and Thompson~\cite{RaghavanT}. Once the LP
relaxation is solved, we consider the support of the solution. For
each pair $i$, the LP relaxation identifies some $h_i$ flows
$f_{i_1}, f_{i_2}, \ldots, f_{i_{h_i}} \in \mathcal{F}_i$ along with
non-negative values $y(f_{i_1}),\ldots,y(f_{i_{h_i}})$ such that their
sum is at most $1$. The randomized algorithm simply picks for each $i$
independently, at most one of the flows in its support where the
probability of picking $f_{i_j}$ is exactly $y(f_{i_j})$.  Note that
the probability that one chooses to route pair $i$ is exactly
$\sum_{f \in \mathcal{F}_i} y(f)\leq 1$.

We will analyze the algorithm with respect to the weight of the
LP solution $\sum_{i=1}^k w_i \sum_{j=1}^{h_j}y(f_{i,j})$. We
refer to this quantity as $\lpval$. We refer to the value of
an optimum LP solution as $\lpopt$ and the value of an optimum
integer solution as $\ipopt$. We observe that $\lpopt \ge \ipopt$
and $\lpopt \ge \lpval$. Note that when solving the formulation in Figure~\ref{fig:lp-flow}(b) or the compact formulation presented in Section~\ref{sec:compact}, the LP solution obtained will be optimal and hence $\lpval = \lpopt$; however, the solution obtained via the multiplicative-weight update algorithm of Section~\ref{sec:mwu} may only approximate $\lpopt$ and hence one could indeed have $\lpopt> \lpval$. We will also assume that $\lpopt\geq \wmax$, since we can discard from consideration any commodity $i$ that cannot be routed alone in the network, as it will never be part of a feasible solution of the MIP formulation, and hence $\wmax\leq \ipopt\leq \lpopt$.

\begin{lemma}
  \label{lem:random-weight}
  Let $Z$ be the (random) weight of the pairs chosen to be routed by
  the algorithm.  Then $E[Z] = \lpval$ and $\Pr[Z < (1-\delta) \lpval]
  < e^{-\frac{\delta^2}{2}\frac{\lpval}{\wmax}}$.  In particular,
  $\Pr[Z < (1-\delta) \lpval] < e^{-\delta^2/2}$.
\end{lemma}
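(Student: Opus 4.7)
The plan is to directly apply the Chernoff bound stated in Theorem~\ref{chernoff}, after a simple rescaling so that the random variables fit its hypotheses. Since the rounding step is performed independently across commodities, the weight collected decomposes nicely into a sum of independent $\{0, w_i\}$-valued contributions, which is the right setup for the lower-tail Chernoff bound.

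First, for each pair $i$ let $Z_i$ be the random variable that equals $w_i$ if pair $i$ is routed by the rounding procedure and $0$ otherwise; then $Z = \sum_{i=1}^k Z_i$, and the $Z_i$'s are mutually independent because the algorithm draws one flow per pair independently across pairs. By construction, the probability that pair $i$ is routed equals $\sum_{f \in \mathcal{F}_i} y(f) = \sum_{j=1}^{h_i} y(f_{i_j})$, so
\begin{align*}
E[Z] \;=\; \sum_{i=1}^k w_i \sum_{j=1}^{h_i} y(f_{i_j}) \;=\; \lpval,
\end{align*}
which gives the first part of the lemma.

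For the tail bound, I would rescale by $\wmax$: define $X_i = Z_i/\wmax$ and $X = \sum_i X_i = Z/\wmax$. Each $X_i$ takes value $0$ or $v_i := w_i/\wmax \in (0,1]$, and the $X_i$'s remain independent, so Theorem~\ref{chernoff} applies. Setting $\mu = E[X] = \lpval/\wmax$, the lower-tail estimate yields
\begin{align*}
\Pr[Z < (1-\delta)\lpval] \;=\; \Pr[X < (1-\delta)\mu] \;<\; e^{-\delta^2 \mu/2} \;=\; e^{-\frac{\delta^2}{2}\frac{\lpval}{\wmax}}.
\end{align*}
The ``in particular'' statement then follows from the assumption, stated just before the lemma, that $\lpval \ge \wmax$ (which holds because $\lpopt \ge \wmax$ after discarding commodities that cannot be routed alone, and $\lpval$ is taken to match or closely approximate $\lpopt$), so that $\lpval/\wmax \ge 1$ and the exponent weakens to at most $-\delta^2/2$.

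I do not foresee a genuine obstacle here; the only care needed is (i) to verify the independence of the $Z_i$'s (which follows because the $k$ per-commodity random draws are independent by design of the rounding), and (ii) to perform the $\wmax$-rescaling so that the $v_i \le 1$ hypothesis of Theorem~\ref{chernoff} is met — applying Chernoff directly to the $Z_i$'s, whose ``successful'' values $w_i$ may exceed $1$, would be incorrect.
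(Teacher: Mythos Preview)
Your proof is correct and follows essentially the same approach as the paper: compute $E[Z]$ by linearity using the independence of the per-commodity routing decisions, then rescale by $\wmax$ so that each summand lies in $[0,1]$ and apply the lower-tail Chernoff bound from Theorem~\ref{chernoff}. The only differences are cosmetic (variable naming), and you are in fact slightly more explicit than the paper in justifying the ``in particular'' clause via $\lpval \ge \wmax$.
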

\begin{proof}
  Let $Y_i$ be the indicator for pair $i$ being chosen to be routed.
  We have $Z = \sum_{i=1}^k w_i Y_i$.  The rounding algorithm implies
  that $\Pr[Y_i=1] = \sum_{f \in \mathcal{F}_i} y(f)$.  Hence, by
  linearity of expectation, $E[Z] = \sum_i w_i E[Y_i] = \sum_i w_i
  \sum_{f \in \mathcal{F}_i} y(f) = \lpval$.  Let $Z_i = \frac{w_i}{\wmax} Y_i$;
  note that $Z_i \le 1$ and $\sum_i Z_i = \frac{1}{\wmax} Z$.
  Let $Z' = \sum_i Z_i$. Since $Z'$ is a sum of independent random variables,
  each of which in $[0,1]$, we can apply the lower-tail Chernoff bound
  for $Z'$, and obtain a lower-tail bound for $Z$.
  $$\Pr[Z < (1-\delta) \lpval] = \Pr[Z' < (1-\delta) \lpval/\wmax]
  = \Pr[Z' < (1-\delta) E[Z']] \le e^{-\frac{\delta^2}{2} (\lpval/\wmax)}.$$
\end{proof}
\begin{lemma}
  \label{lem:overflow-prob}
  For $m \ge 9$ and $b > 1$ the probability that the total flow on an
  edge $e$ is more than $ (3b \ln m/ \ln \ln m) c(e)$ is at most
  $e^{-1.5 b \ln m - \frac{3b \ln b \ln m}{\ln \ln m} - 1}$. Via the union bound, the probability that the
  total flow on \emph{any} edge $e$ is more than $(3b \ln m/\ln \ln m)
  c(e)$ is at most $e^{-(1.5 b -1)\ln m - \frac{3b \ln b \ln m}{\ln \ln m} - 1}$.
\end{lemma}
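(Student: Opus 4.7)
The plan is to apply the upper-tail Chernoff bound of Theorem~\ref{chernoff} to the random load on a single fixed edge $e$, and then take a union bound over $E$.

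First I would set up the random variables: for each commodity $i$, let $Y_i^e$ denote the random flow contributed by $i$ on $e$ after the rounding. Since the rounding selects at most one flow from $\mathcal{F}_i$, independently across $i$, and every valid $f \in \mathcal{F}_i$ respects $f(e) \le c(e)$, the normalized variables $X_i := Y_i^e / c(e)$ are independent across $i$ and take values in $[0,1]$. By the LP capacity constraint at $e$,
\[
\mu \;:=\; E\!\Bigl[\sum_{i=1}^k X_i\Bigr] \;=\; \frac{1}{c(e)} \sum_{i=1}^k \sum_{f \in \mathcal{F}_i} f(e)\, y(f) \;\le\; 1.
\]
Theorem~\ref{chernoff} is phrased for bivalued summands, but each $X_i$ can be decomposed into mutually exclusive (hence negatively associated) bivalued indicators sharing the same marginals, which preserves the Chernoff upper tail; equivalently, one can invoke the multiplicative Chernoff--Hoeffding bound for independent $[0,1]$-valued variables.

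Setting $t := 3b \ln m / \ln \ln m$ and using $\mu = 1$ (replacing a smaller $\mu$ by $1$ only weakens the tail bound for a fixed target $t$, in a direction that is still safe for our upper-bound purposes), Chernoff yields
\[
\Pr\!\Bigl[\sum_i X_i \ge t\Bigr] \;\le\; \frac{e^{t-1}}{t^{\,t}} \;=\; \exp\bigl(t - 1 - t \ln t\bigr).
\]
Using $\ln t = \ln(3b) + \ln \ln m - \ln \ln \ln m$ together with the identity $t \ln \ln m = 3b \ln m$, the exponent simplifies to
\[
-\,3 b \ln m \;-\; \frac{3 b \ln(3b)\,\ln m}{\ln \ln m} \;+\; \frac{3 b \ln m\,(1 + \ln\ln\ln m)}{\ln \ln m} \;-\; 1 .
\]
Splitting $\ln(3b) = \ln b + \ln 3$ extracts the desired $-\,3 b \ln b \ln m/\ln \ln m$ term from the middle summand. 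The remaining task is to absorb the residual $(3 b \ln m/\ln \ln m)(1 + \ln\ln\ln m - \ln 3)$ into a weakening of the leading $-3 b \ln m$ down to $-1.5 b \ln m$; after dividing through by $3 b \ln m$ and setting $u := \ln \ln m$, this reduces to the elementary inequality $\ln u - u/2 \le \ln 3 - 1$, which holds for every $u > 0$ (the left-hand side attains its maximum $\ln 2 - 1 \approx -0.31$ at $u = 2$, and $\ln 3 - 1 \approx 0.10$).

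Finally, the union bound over the $m$ edges multiplies the per-edge failure probability by $m = e^{\ln m}$, worsening the exponent by $+\ln m$ and producing the stated $-(1.5 b - 1)\ln m$ term. I expect the only real obstacle to be the bookkeeping in the exponent calculation: the specific constant $1.5$ (rather than the naive $3$) in $-1.5 b \ln m$ is exactly the amount of slack one has to surrender from the leading term in order to dominate the slowly-growing $\ln \ln \ln m$ correction uniformly in $m \ge 9$.
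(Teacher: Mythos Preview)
Your proposal is correct and follows essentially the same route as the paper's proof: normalize the per-commodity contributions by $c(e)$, apply the upper-tail Chernoff bound with $\mu=1$ and target $(1+\delta)=t=3b\ln m/\ln\ln m$, expand the exponent $t-1-t\ln t$, and absorb the $\ln\ln\ln m$ correction into the leading term via the elementary inequality $\ln u - u/2 \le \ln 3 - 1$ (the paper phrases this as $\ln\ln m - \ln\ln\ln m \ge \tfrac12\ln\ln m$ together with $1-\ln 3<0$, which is the same thing). Your remark that Theorem~\ref{chernoff} is stated only for bivalued summands while the rounded contributions are genuinely $[0,1]$-valued is a point the paper glosses over; the standard convexity argument you allude to does close that gap.
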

\begin{proof}
  Let $X_{e}$ be the random variable indicating the total flow on edge
  $e$. Let $X_{e,i}$ be the flow on $e$ from the flow chosen for pair
  $i$. We have $X_e = \sum_{i=1}^k X_{e,i}$ and moreover the variables
  $X_{e,i}$, $1 \le i \le k$ are independent by the algorithm.  Note
  that $0 \le X_{e,i} \le c(e)$ since each flow in $\mathcal{F}_i$ is
  a valid flow by definition. Further
  $$E[X_e] = \sum_{i} E[X_{e,i}] = \sum_{i=1}^k \sum_{f \in
    \mathcal{F}_i} f(e) y(f) \le c(e).$$
  We now apply the Chernoff bound to see that $\Pr[X_e > (3b \ln m/\ln \ln m)c(e)] \le \frac{e^{\delta}}{(1+ \delta)^{(1+ \delta)}}$ where
  $(1+\delta) = 3b \ln m/\ln \ln m$; we note that the standard bound
  has all variables bounded in $[0,1]$ while all our variables are
  in $[0,c(e)]$ but we can simply scale all variables by $c(e)$.
  We have $\frac{e^{\delta}}{(1+ \delta)^{(1+ \delta)}} = e^{\delta - (1+\delta)\ln (1+\delta)}$. We consider the expression $\delta - (1+\delta)\ln (1+\delta)$, where
  $(1+\delta) = 3b \ln m/\ln \ln m$.

  \begin{eqnarray*}
    \delta - (1+\delta)\ln (1+\delta) & = & (1+\delta)(1 - \ln (1+\delta)) - 1\\
    &=& (3b \ln m/\ln \ln m) (1 - \ln (3b \ln m/\ln \ln m))) - 1\\
    & = & (3b\ln m/\ln \ln m) (1 - \ln 3 - \ln b - \ln \ln m + \ln \ln \ln m) - 1 \\
    & \le & (3b\ln m/\ln \ln m) (- \ln b - \frac12 \ln \ln m) - 1 \\
    & \le & -1.5 b \ln m  - 3b \ln b \ln m/\ln \ln m - 1.
  \end{eqnarray*}
  In the above we used the fact that $\ln \ln m - \ln \ln \ln m \ge
  \frac12\ln\ln m$ for $m \ge 9$.

  The second part follows easily via the union bound over all the $m$ edges.
\end{proof}

We can now put together the preceding lemmas to derive our bicriteria
approximation. We will henceforth assume that $m \ge 9$.  Let $S$ be
the random set of pairs routed by the algorithm.  Let $\cE_1$ be the
event that $w(S) < (1-\delta)\lpval$. From
Lemma~\ref{lem:random-weight}, $\Pr[\cE_1] \le e^{-\delta^2/2}$.  Let
$\cE_2$ be the event that there is some edge $e$ such that the flow on
$e$ is more than $(3b \ln m/\ln \ln m)c(e)$.  From
Lemma~\ref{lem:overflow-prob} $\Pr[\cE_2] \le e^{-1.5 b \ln m - 3b \ln
  b \ln m/\ln \ln m - 1}$. For $b=1$ and $m \ge 9$ we see that
$\Pr[\cE_2] \le e^{-12}$. Chossing $\delta = 1/2$, $\Pr[\cE_1] \le
0.8825$. Thus $\Pr[\cE_1 \text{ or } \cE_2] \le 0.9$.  This implies
that with probability $\ge 0.1$, the set $S$ of routed pairs satisfies
the property that $w(S) \ge 0.5 \lpval$ \emph{and} the congestion of
every edge is at most $3 \ln m /\ln \ln m$. In other words, if
$\lpval = \lpopt \le \ipopt$, we obtain
a $(1/2,3 \ln m/\ln \ln m)$-bicriteria approximation with probability
at least $0.1$.  One can boost the success probability by repetition.
If the rounding is repeated $10 c \ln m$ times, then with probability
at least $1- (1 - 0.1)^{10 c \ln m} \ge 1- m^{-c}$ (in other words with
high probability), one of the rounded solutions is a $(1/2,3 \ln m/\ln
\ln m)$-bicriteria approximation

We now refine the preceding argument to show that the quality of the
rounded solution can get arbitrarily close to $\lpval$ but with lower
probability, and examine the trade-off required in the congestion and
number of repetitions required.  Suppose we want $w(S) \ge (1-\eps)
\lpval$ for some small $0 < \eps < 1/2$. Let $\cE_1$ be the event
that this does not happen. From Lemma~\ref{lem:random-weight}, we have that
$\Pr[\cE_1] \le e^{-\eps^2/2}$. 
Let $\cE_2$ be the
event that some edge congestion exceeds $3b \ln m/\ln \ln m$.
Lemma~\ref{lem:overflow-prob} allows us to upper bound this
probability.
Suppose we choose $b$ such that $\Pr[
{\cE_2}] \le \eps^2/6$.
Then 
$$\Pr[\bar{\cE_1} \cap \bar{\cE_2}] = 1 - \Pr[\cE_1 \cup \cE_2] \ge (1- \Pr[\cE1] - \Pr[\cE_2]) \ge 1-e^{-\eps^2/2} - \eps^2/6 \ge \eps^2/6.$$
This would yield a $(1-\eps,3b \ln m/\ln \ln m)$ bicriteria approximation
with probability at least $\eps^2/6$ and one can boost this via
repeating $O(\frac{1}{\eps^2} \ln m)$ times to get the approximation
with high probability. Thus it remains to estimate $b$ such
that $\Pr[\cE_2] \le \eps^2/6$. From Lemma~\ref{lem:overflow-prob},
it suffices to choose $b$ such that
$$(1.5b-1)\ln m + 3b \ln b \ln m/\ln m \ln m +1 \ge \ln (6/\eps^2).$$
In particular it suffices to have
$b \ge c \frac{\ln(1/\eps)}{\ln m}$ for some fixed constant $c$.
Thus for all $\eps \ge 1/m$ a fixed constant $b$ (e.g., $b=1.85$) suffices!
Note however that the number of repetitions grows as $\Omega(1/\eps^2)$
to guarantee a good solution with high probability.

\begin{theorem}
	\label{thm:main-theorem-without-alteration}
  For $m \ge 9$ and any $1/m\leq \eps <1$,
  there is a polynomial-time randomized
  algorithm that yields a $(1-\eps, O(\ln m/\ln \ln m + \ln (1/\eps)
  /\ln m))$-approximation with high probability. Moreover, by setting $\eps = 1/m$, we guarantee a
  $O(1-1/m, O(\ln m/\ln \ln m))$-approximation with high probability.
\end{theorem}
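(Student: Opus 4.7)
The plan is to combine the two Chernoff-style concentration bounds already established in Lemmas \ref{lem:random-weight} and \ref{lem:overflow-prob} with a standard probability-amplification argument. First I would solve the packing LP relaxation in polynomial time, either via the separation-oracle approach of Section \ref{sec:solving-packingLP} or via the compact formulation of Section \ref{sec:compact}, obtaining a feasible solution of value $\lpval = \lpopt \geq \ipopt$. I would then apply the randomized rounding scheme described in Section \ref{sec:packing-rouding}: independently for each commodity $i$, pick at most one flow $f \in \mathcal{F}_i$ from the LP support, where $f = f_{i_j}$ is chosen with probability $y(f_{i_j})$.

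Next, fix $\eps \in (0,1/2)$ and let $\cE_1$ be the event that $w(S) < (1-\eps)\lpval$ and $\cE_2$ the event that some edge $e$ carries flow exceeding $(3b \ln m/\ln \ln m) c(e)$. Lemma \ref{lem:random-weight} gives $\Pr[\cE_1] \leq e^{-\eps^2/2}$, and Lemma \ref{lem:overflow-prob} bounds $\Pr[\cE_2]$ in terms of the parameter $b$. The core calculation is to pick the smallest $b$ for which $\Pr[\cE_2] \leq \eps^2/6$; inverting the exponent in Lemma \ref{lem:overflow-prob}, namely requiring $(1.5b-1)\ln m + 3b \ln b \ln m/\ln \ln m + 1 \geq \ln(6/\eps^2)$, I expect it will suffice to take $b = 1 + c'\ln(1/\eps)/\ln m$ for some absolute constant $c'$. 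This translates into a per-edge congestion bound of $O(\ln m/\ln \ln m + \ln(1/\eps)/\ln m)$.

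With these ingredients in place, the union bound together with the elementary inequality $1 - e^{-\eps^2/2} \geq \eps^2/3$ for $\eps \leq 1/2$ yields $\Pr[\bar{\cE_1} \cap \bar{\cE_2}] \geq \eps^2/3 - \eps^2/6 = \eps^2/6$. To boost this constant-probability success to high probability, I would independently repeat the rounding $T = \Theta(\ln m / \eps^2)$ times and return the best output satisfying both the throughput and the congestion bound; the probability that no trial succeeds is at most $(1-\eps^2/6)^T \leq e^{-T\eps^2/6} \leq m^{-c}$ for any desired constant $c$. Finally, substituting $\eps = 1/m$ makes the term $\ln(1/\eps)/\ln m$ a constant, so the congestion bound collapses to $O(\ln m/\ln \ln m)$, yielding the second assertion.

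The main obstacle I anticipate is the algebraic massaging of the overflow-lemma exponent to extract the precise dependence of $b$ on $\eps$ and $m$. Because $(1.5b-1)\ln m + 3b \ln b \ln m/\ln \ln m$ grows superlinearly in $b$, one must be careful both to avoid overestimating $b$ (which would inflate the congestion bound beyond the claimed $\ln(1/\eps)/\ln m$ term) and to avoid underestimating the required number of repetitions (which would break the high-probability guarantee). The remaining steps --- the LP solve, the rounding, the two-event union bound, and the boosting --- follow mechanically from pieces already in place.
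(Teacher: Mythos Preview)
Your proposal is correct and matches the paper's own proof essentially step for step: the same two events $\cE_1,\cE_2$, the same use of Lemmas~\ref{lem:random-weight} and~\ref{lem:overflow-prob}, the same choice of $b$ via inverting the overflow exponent $(1.5b-1)\ln m + 3b\ln b\ln m/\ln\ln m + 1 \ge \ln(6/\eps^2)$, the same lower bound $\Pr[\bar{\cE_1}\cap\bar{\cE_2}]\ge \eps^2/6$, and the same $\Theta(\ln m/\eps^2)$ repetitions for amplification. The only cosmetic difference is that the paper writes the union-bound step as $1-\Pr[\cE_1]-\Pr[\cE_2]\ge 1-e^{-\eps^2/2}-\eps^2/6\ge \eps^2/6$, whereas you phrase it via $\Pr[\bar{\cE_1}]-\Pr[\cE_2]\ge \eps^2/3-\eps^2/6$; both rely on the same inequality $1-e^{-\eps^2/2}\ge \eps^2/3$ for $\eps\le 1/2$.
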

 
Noting that it is trivial to get a $(1,k)$-approximation by simply routing all the commodities at full demand, we get the following corollary, stating our full approximation guarantees:

\begin{corollary}
\label{cor:bestapprox}
For $m \ge 9$ and any $1/m\leq \eps <1$,
there is a polynomial-time randomized
  algorithm that yields a $(1-\eps, \min\{k,O(\ln m/\ln \ln m)\}))$-approximation with high probability.
\end{corollary}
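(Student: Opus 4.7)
The plan is to combine Theorem~\ref{thm:main-theorem-without-alteration} with a trivial routing strategy and take whichever gives the better congestion guarantee. The corollary is really a case analysis on the magnitude of $k$ relative to $\ln m / \ln \ln m$.

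First, I would observe that Theorem~\ref{thm:main-theorem-without-alteration} already delivers, for any $\eps \ge 1/m$, a $(1-\eps,\, O(\ln m / \ln \ln m + \ln(1/\eps)/\ln m))$-approximation with high probability. Since $\eps \ge 1/m$ implies $\ln(1/\eps) \le \ln m$, the second congestion term is $O(1)$ and is absorbed into $O(\ln m/\ln \ln m)$. So from the randomized rounding route we get a $(1-\eps,\, O(\ln m/\ln \ln m))$-approximation in polynomial time with high probability.

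Next, I would describe the trivial deterministic strategy that attains a $(1,k)$-approximation. For each commodity $i$, since we have already discarded any $i$ that cannot be routed alone, there exists a valid flow $f_i \in \mathcal{F}_i$ that routes $d_i$ units from $s_i$ to $t_i$ while respecting the edge capacities, i.e., $f_i(e) \le c(e)$ for every edge $e$. Routing every commodity simultaneously along $f_i$ therefore yields total throughput $\sum_{i=1}^k w_i \ge \ipopt$ (so throughput ratio $1$), while the load on any edge $e$ is $\sum_{i=1}^k f_i(e) \le k \cdot c(e)$. Hence the congestion is at most $k$, giving a $(1,k)$-approximation, and in fact a $(1-\eps,k)$-approximation trivially.

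Finally, given the input, the algorithm computes both solutions (or just the relevant one based on whether $k$ is above or below the threshold $\Theta(\ln m/\ln \ln m)$) and returns the one with smaller congestion. Its throughput is at least $1-\eps$ of the optimum, and its congestion is at most $\min\{k,\, O(\ln m/\ln \ln m)\}$, which proves the corollary. No single step is a serious obstacle here; the only thing to be careful about is stating the trivial bound correctly, namely that each $f_i$ is individually feasible (so each $f_i(e) \le c(e)$), which is exactly the standing assumption used throughout Section~\ref{sec:anf-packing} after discarding commodities with no valid flow.
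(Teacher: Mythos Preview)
Your proposal is correct and follows exactly the paper's approach: the paper simply notes that a trivial $(1,k)$-approximation is obtained by routing all commodities at full demand, and combines this with Theorem~\ref{thm:main-theorem-without-alteration}. Your write-up actually spells out the trivial bound more carefully than the paper does, which is fine.
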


 We describe a different rounding
approach in Appendix~\ref{sec:alteration}, using an {\em alteration approach}, that may also be of interest in certain settings and gives a
better tradeoff in terms of repetitions.

\section{Compact Edge-Flow Formulation}
\label{sec:compact}
As we saw, one can solve the ANF packing LP via the Ellipsoid method. While this leads to a polynomial-time algorithm for solving the LP,  implementing such algorithm would not be trivial nor be very efficient in practice. In this section, we present an alternative polynomial-size compact edge-flow formulation for the ANF problem, which can be solved more efficiently in practice than the packing LP. In Section~\ref{sec:mwu}, we present another approach for solving the packing LP more efficiently, albeit only approximately. Both approaches 
were evaluated in simulations in Section~\ref{sec:sim}.

We present our general compact edge-flow based MIP formulation for the ANF problem\footnote{The compact MIP formulation presented here generalizes the one in our conference version~\cite{Liu19} to accommodate arbitrary demands and commodity weights.} in Figure~\ref{fig:compact}. 
We use an indicator variable $f_{i} \in \{0,1\}$ to indicate whether a commodity $i$ is successfully routed through $G$.
Next, we denote $f_{i,e} \in [0,1]$ as the fraction of flow for commodity $i$ allocated to a particular edge $e \in E$.
The total flow assigned to a fixed edge $e$ is given by $\sum_i d_i\cdot f_{i,e}$ and the total weighted throughput is given by $\sum_{i}w_{i}f_i$. 
Constraints~(\ref{const:value}-\ref{const:edgecap}) define the value of the total flow for each commodity $i$, 
enforce flow conservation for each $i$, and 
stipulate that no edge capacity is violated by the flow assignments. 
Constraint~(\ref{const:strenghten}) ensures that for a fixed commodity $i$, the ratio of flow assigned to an edge $e$ to the total flow of that commodity does not exceed the capacity of $e$: These constraints are actually redundant for the MIP formulation, but will {\em strengthen the LP relaxation} of Figure~\ref{fig:compact}, obtained by allowing each $f_i$ to assume any real value in $[0,1]$. In fact, (\ref{const:strenghten}) is crucial to establish the perhaps surprising "equivalence" between the LP relaxation of the ANF packing formulation
(Figure~\ref{fig:lp-flow}(b)) and the LP relaxation of the compact edge-flow MIP.
	
This formulation has size polynomial in $n$ and $k$ and hence can be solved in polynomial time (e.g., using the Ellipsoid method). Moreover, given the compact nature of the LP, one can use a standard LP solver in practice.

\begin{figure}[htb]

\begin{center}
		\begin{align}
	    \max \sum_{i=1}^{k}w_{i}f_{i} &\mbox{\ \ \ }&\ \  \\ 
		\sum_{(s_i,v) \in E}f_{i, (s_i,v)} &= f_i  &\forall i \in [k] \label{const:value}\\
		\sum_{(u, v) \in E}f_{i,(u, v)} &= \sum_{(v, u) \in E}f_{i, (v, u)} & \forall i \in [k] , \forall v \in V - \{s_i,t_i\} \label{const:flowcons}\\
		\sum_{i=1}^{k}f_{i,(u, v)}\cdot d_{i} &\leq c_{(u, v)} & \forall (u, v) \in E \label{const:edgecap}\\
		f_{i,(u, v)}\cdot d_{i} &\leq f_i \cdot c_{(u, v)} & \forall i \in [k], \hfill  \forall (u, v) \in E \label{const:strenghten}\\
		f_{i,(u, v)} &\geq 0 & \forall i \in [k], \forall (u, v) \in E \label{const:nonneg} \\
	    f_i &\in \{0, 1\} &\forall i \in [k] \label{const:int}
		\end{align}
		\end{center}
		\caption{Compact Edge-Flow ANF Formulation}
		\vspace{-12pt}
		\label{fig:compact}

\end{figure}

\label{sec:compact-equivalence} %
\paragraph
{\bf Equivalence with Packing Formulation.}
Here we prove that the packing formulation in Figure~\ref{fig:lp-flow}(b) is ``equivalent'' to the
compact formulation given in Figure~\ref{fig:compact}. When we say equivalent we mean the following:
Given a feasible solution to one LP we can obtain a feasible solution to the other LP of the same value.
We prove both directions below.

First, consider a feasible solution to the compact formulation. For commodity $i$, let
$f_i \in [0,1]$ be the total fraction of $d_i$ that is routed from $s_i$ to $t_i$, and let
$f_{i,e} \in [0,1]$ be fraction of $f_i$ that is assigned to edge $e \in E$, satisfying flow conservation and capacity constraints.  We first construct a
flow $g_i: E \rightarrow \mathbb{R}_+$ of $d_i$ units from $s_i$ to
$t_i$: We set $g_i(e) = d_i f_{i,e}/f_i$, for all $e\in E$. It is easy to
verify that $g_i$ is a flow of $d_i$ units from $s_i$ to
$t_i$. Moreover by the strengthening constraint
(\ref{const:strenghten}) in Figure~\ref{fig:compact}, we see that
$g_i(e) \le c(e)$ for all $e$ and hence $g_i$ is a feasible flow in
the capacities. Putting together these facts, $g_i \in \mathcal{F}_i$.
We obtain a feasible solution to the packing formulation as follows.
For each $i$ we set $x_i = f_i$ and we set $y(f) = x_i$ for $f = g_i$
and $y(f) = 0$ for every other $f \in \mathcal{F}_i$. In other words
we are using only one flow for each commodity $i$. The only
non-trivial fact to check is that this solution is feasible.
For this we need to verify that $\sum_i y(g_i) g_i(e) \le c(e)$ but this
easily follows from our definition of $g_i$'s and Constraint
(\ref{const:edgecap}) in Figure~\ref{fig:compact}. Since $x_i=f_i$
for all $i$, we see that the two solutions have the same  value.

Second, consider a feasible solution $y$ to the packing formulation
in Figure~\ref{fig:lp-flow}(b). Let $x_i$ be the amount routed for
commodity $i$ and for each flow $f \in \mathcal{F}_i$, $y(f)$ is the
amount routed on $f$ with $\sum_{f \in \mathcal{F}_i} y(f) = x_i$.  We
construct a feasible solution to the compact LP as follows.  For each
commodity $i$ we set $f_i = x_i$. For each $e \in E$ and each $i \in
[k]$, we set $f_{i,e} = \frac{1}{d_i}\sum_{f \in \mathcal{F}_i}f(e)y(f)$.
Note that $f_{i,e}$ is simply scaling by $d_i$ the total flow on $e$ from
all $f \in \mathcal{F}_i$.  Since each $f \in \mathcal{F}_i$ is a flow
of $d_i$ units from $s_i$ to $t_i$ and $\sum_{f \in \mathcal{F}_i}
y(f) = x_i$ we see that $f_{i,e}$, $e \in E$, corresponds to sending a
total of $x_i$ units of flow from $s_i$ to $t_i$.  We focus on
Constraints (\ref{const:edgecap}) and (\ref{const:strenghten}) in
Figure~\ref{fig:compact}. We observe that $\sum_i d_i f_{i,e} = \sum_i
d_i 
\frac{1}{d_i}\sum_{f \in
  \mathcal{F}_i}f(e) y(f) = \sum_i \sum_{f \in \mathcal{F}_i}y(f)f(e)$
and the last term is at most $c(e)$ from the feasiblity of given
solution for the packing formulation.  This proves that Constraint
(\ref{const:edgecap}) in Figure~\ref{fig:compact} is satisfied for
the solution we constructed. We observe that for each $f \in
\mathcal{F}_i$ and each $e \in E$ we have $f(e) \le c(e)$ since $f$ is
a feasible flow in the capacities. Thus $f(e)/d_i \le c(e)/d_i$ and
since $y(f) \ge 0$ for each $f \in \mathcal{F}_i$ we have $\sum_{f \in
  \mathcal{F}_i} y(f) f(e)/d_i \le c(e)/d_i \sum_{f \in \mathcal{F}_i}
y(f)$ which implies that $f_{i,e} d_i \le f_i c(e)$. Thus
the solution also satisfies (\ref{const:strenghten}) in
Figure~\ref{fig:compact}. 
This finishes the proof of the equivalence.

Hence, the results in Section~\ref{sec:anf-packing} that lead to Theorem~\ref{thm:main-theorem-without-alteration}
and Corollary~\ref{cor:bestapprox} also apply to a randomized rounding approach based on the compact formulation, as we explain in Section~\ref{sec:randomized-algorithm}.

\section{MWU Algorithm}
\label{sec:mwu}
While the compact edge-flow formulation can always be solved in polynomial time, one may run into space issues when attempting to solve it in practice:
 The disadvantage of using a
standard LP solver to solve the compact edge-flow LP relaxation is that the number
of variables is $km$ which is quadratic in the input size, and the
number of constraints is $m$.  Standard LP solvers often require space
proportional to $km^2$ which can be prohibitive even for moderate
instances (since it is almost cubic in input size). One advantage of
the packing LP formulation, although it is equivalent, to the compact
formulation is that one can use well-known multiplicative weight
update (MWU) based Lagrangean relaxation approaches to obtain a
$(1-\gamma)$-approximation, for any $0<\gamma<1$. Although the convergence
time can be slow depending on the accuracy required, the space
requirement is $O(k+m)$ which is linear in the input size. In
addition, there are several optimization heuristics based on the MWU
algorithm that can result in very efficient implementations in
practice.  Since the MWU framework is standard we only describe and
explain the algorithm here and state the known
guarantees on the number of iterations and time complexity, referring the reader to standard treatments
in the literature~\cite{AHK-survey} for a formal analysis on the correctness
guarantees.

\begin{figure}[htb]
	\begin{subfigure}[h]{0.5\linewidth}
		\begin{center}
			\begin{align*} (a) \max \sum_{i=1}^k w_i \sum_{f \in \mathcal{F}_i} y(f) & \\
				\sum_{i=1}^k \sum_{f \in \mathcal{F}_i} f(e) y(f)  & \le c(e)
				\quad \quad
				e \in E\\
				y(f) &  \ge 0 \quad \quad f \in \mathcal{F}_i, 1 \le i  \le k.
			\end{align*}
		\end{center}
	\end{subfigure}
	\begin{subfigure}[h]{0.4\linewidth}
		\begin{center}
			\begin{align*} (b) \min \sum_{e \in E} c(e) \ell_e \\
				\sum_{e \in E} f(e) \ell_e & \ge w_i \quad \quad 1\le i
				\le k, f \in  \mathcal{F}_i\\
				\ell(e)  &  \ge 0 \quad \quad e \in E \\
			\end{align*}
		\end{center}
	\end{subfigure}
	\caption{$(a)$ LP Relaxation with no constraint on total amount routed per commodity; $(b)$ its dual.}
	\label{fig:lp-flow-simplified}
\end{figure}

\smallskip
\noindent {\bf Algorithm description:} MWU based algorithms are
iterative and provide a way to obtain arbitrarily good relative
approximation algorithms for a large class of linear programs such as
packing, covering and mixed packing and covering LPs. In particular we
can apply it to the packing LP in Fig~\ref{fig:lp-flow}(b). The LP has two
types of packing constraints, one involving the capacities, and the
other involving the total amount of flow routed for each commodity. It
is useful to simplify the LP further in order to apply a clean packing
framework. For this purpose we alter the given graph $G=(V,E)$ as
follows. For each given demand pair $(s_i,t_i)$ we add a dummy source
$s'_i$ and connect it to $s_i$ with an edge $(s'_i,s_i)$ of capacity
equal to $d_i$. We replace the pair $(s_i,t_i)$ with the pair
$(s'_i,t_i)$, which ensures that the total amount of flow for the pair
is at most $d_i$, and further allows us to eliminate the first set of
constraints in Fig.~\ref{fig:lp-flow}(b). In the modified
instance we hence only have edge capacity constraints and the problem
becomes a pure maximum throughput problem that allows for a commodity
to be routed more than one total unit. The dual LP also simplifies in
a corresponding fashion. These are shown in
Fig~\ref{fig:lp-flow-simplified}. %

\begin {figure}[htb]
\begin{algorithm}[H]
	\caption{MWU for Multi-Commodity ANF Problem}
	\label{protocol}
	\begin{algorithmic}[1]
		\Statex{\textbf{Inputs:} Directed graph G(V,E), $c:E \to \mathbb{R}^{+}$, a set $S$ of $k$ pairs of commodities $(s_{i},t_{i})$ each with demand $d_{i}$ and  $\gamma \in \mathbb{R}^{+}$}
		\State{Change $G$ by adding dummy terminal $s'_i$ and edge$(s'_i,s_i)$ with capacity $d_i$. This ensures that we don't route more than $d_i$ units for pair $i$. We will assume this has been done and simply use $(s_i,t_i)$ instead of $(s'_i,t_i)$}
		\Statex{\textbf{Output:} Total flow $f_e$ on each $e$. $f(s'_i,s_i)/d_i$ gives the fraction of commodity $i$ that is routed}
		\State{Define a length/cost function $\ell:E \to \mathbb{R}^{+}$ and initialize $\ell_{e} \leftarrow 1, \forall e \in E$}
		\State{Define a function $f:E \to \mathbb{R}_{\geq 0} $ and initialize $f_{e} \leftarrow 0, \forall e \in E$}
		\State{Define $\eta \leftarrow \frac{\ln{|E|}}{\gamma} $}
		\Repeat
		\For{each commodity $i \in S$}
		\State{Compute min-cost flow of $d_i$ units from $s_i$ to $t_i$ with capacities $c(e)$ and cost given by $\ell$. (If no feasible flow then pair $i$ can be dropped.) Let this flow be defined by $g_i(e), e \in E$ and let cost of this flow be $\rho(i) = \sum_e \ell(e) g_i(e)$}
		\EndFor
		\State{Set $i^{*} \leftarrow \text{argmin}_{i\in S} \frac{\rho(i)}{w_i}$}
		
		\State{Compute $\delta \leftarrow \min_{e} \frac{\gamma}{\eta} \cdot \frac{g_{i^*}(e)}{c(e)}$}
		\For{ each $e$}
		
		\State{Set $f_{e} \leftarrow f_{e} + \delta g_{i^*}(e)$}
		\If{$f_{e} > c_{e}$}
		\State{Output $f$ and halt}
		
		\Else
		\State{Update $\ell_{e} \leftarrow \exp(\eta f_{e}/c_{e})$}
		\EndIf
		\EndFor
		\Until {termination}
	\end{algorithmic}
	\label{alg:MWU}
\end{algorithm}
\end{figure}

The MWU Algorithm~\ref{alg:MWU} solves the primal LP in
Fig~\ref{fig:lp-flow-simplified} in an iterative fashion as
follows. It takes as input an error parameter $\gamma \in (0,1)$ and its
goal is to output a feasible solution of value at least $(1-\gamma)$
times the optimum LP solution value.  Note that the primal LP has an
exponential number of variables but only $m$ non-trivial constraints
corresponding to the edges, so it maintains only an implicit
representation of the primal variables. The MWU algorithm can be
viewed as primal-dual algorithm as well and as such it maintains 
``weights'' (hence the name mutiplicative weights udate) for each edge
$e$ which correspond to the dual variables $\ell(e)$.  To avoid
confusion with the weights of commodities we use the term lengths. The
algorithm maintains lengths $\ell(e), e \in E$ which are initialized
to $1$. The algorithm roughly maintains the invariant that $\ell(e)$
is exponential in the current total flow $g(e)$ on edge $e$; more
formally, for a parameter $\eta = \ln m/\gamma$ the algorithm maintains
the invariant that $\ell(e) \simeq \exp(\eta f(e)/c(e))$ where $f(e)$
is the total flow on $e$.  
In each iteration the goal is to find a
good commodity/pair to route. To this end the algorithm computes for
each commodity $(s_i,t_i)$ a minimum-cost $s_i$-$t_i$ flow of $d_i$
units where the cost on $e$ is equal to $\ell(e)$. Let this cost be
$\rho(i)$. It then chooses the commodity $i^*$ that has the smallest
$\rho(i)/w_i$ ratio among all pairs, as the currently best commodity
to route. The algorithm then routes a small amount for $i^*$ along the
minimum cost flow computed in that iteration. This corresponds to the
step size $\delta$ which is chosen to be sufficiently small but not
too small to ensure the correctness of the algorithm. After routing
the flow for $i^*$ the lengths on the edges are updated to reflect the
increase in flow on the edges. The algorithm proceeds in this fashion
for several iterations until termination. One can terminate using
several different criteria while ensuring correctness. Here we
stop the algorithm when we try to route a commodity with the given
step size and realize that it violates some edge capacity.

\smallskip
\noindent {\bf Analysis of iterations, run-time and space:} The
algorithm's running time is dependent on the time to compute minimum-cost flow
and on the total number of iterations. It is known that the MWU
algorithm, as suggested above, terminates in $O(m \log m/\gamma^2)$
iterations. Each iteration requires computing $k$ minimum-cost flows.
Many algorithms are known for minimum-cost flow ranging from strongly
polynomial-time algorithms to polynomial-time scaling algorithms as
well as practically fast algorithms based on network-simplex.  Instead
of listing these we can upper bound the run-time by
$O(\text{MCF}(n,m) k m \log m/\gamma^2)$ where $\text{MCF}(n,m)$ is
min-cost flow running time on a graph with $n$ nodes and $m$ edges.  In
terms of space we observe that the algorithm only maintains the total
flow on each edge and for each commodity the total flow it has routed
as well as the lengths on the edges. This is $O(k+m)$. The algorithm
also needs space to compute minimum-cost flow and that depends on the
algorithm used for it. Most algorithms for minimum-cost flow use
space near-linear in the input graph.

The algorithm as described above is a plain "vanilla" implementation of
the general MWU algorithm. As such the running time is rather high and computing $k$ minimum-cost flows in each iteration is expensive. Several optimization can be done from both a theoretical
and a practical point of view. We do not discuss these issues in detail since this is not the main focus of this paper. We develop a simple
heuristic -- the {\em permutation routing} heuristic -- based on these ideas that has also theoretical justification, which will be discussed and used for the simulations
in Section~\ref{sec:sim}.

\section{Randomized Rounding Algorithm}
	\label{sec:randomized-algorithm}
	\allowdisplaybreaks

Algorithm~\ref{alg:randomrounding}
describes the randomized rounding algorithm that we will use in our simulations. This algorithm
performs randomized rounding on the total flow variables of the compact LP and therefore can be viewed as  a special case of the randomized rounding algorithm outlined in Section~\ref{sec:packing-rouding} (since we have proven that the set of feasible solutions to the compact LP can be viewed as a subset of the feasible solutions to the packing LP).
Algorithm~\ref{alg:randomrounding} leads to a simpler, more streamlined implementation (also because the randomized rounding approach will be based on a number of variables that is linear in the number of commodities) than if we were using the approach based on the rounding of the variables of the packing LP directly. We assume, as we did in Section~\ref{sec:anf-packing}, that we discard any commodity $i$ that cannot be routed by itself in $G$.

\begin{figure}[b!]
\removelatexerror
\begin{algorithm}[H]
\SetKwInOut{Input}{Input}\SetKwInOut{Output}{Output}
\SetKwFunction{ProcessPath}{ProcessPath}{}{}
\SetKwFunction{reverse}{reverse}{}{}
\SetKwFunction{LP}{LP}
\SetKwFunction{LP}{LP}

\Input{Directed graph $G(V, E)$ with edge capacities $c_{e}>0, \forall e \in E$; set of $k$ %
		pairs of commodities $(s_i, t_i)$, each with demand $d_{i}\geq 0$ and weight $w_{i}\geq 0$; $\epsilon \in (0,1]$}
\Output{The final values of $f_i$ and $f_{i,e}$ and $ \sum w_{i}f_i$}
			Let $\tilde{f}_i, \tilde{f}_{i,e}$, $\forall i\in [k], \forall e\in E$, be a feasible solution to compact LP. \label{algo:step_lp}\\
			For each $i \in [k]$, independently, set $f_i=1$ with probability $\tilde{f}_i$, otherwise set $f_{i}=0$.\label{algo:step_rounding}\\
		Rescale the fractional flow $\tilde{f}_{i,e}$ from the LP solution on edge $e$ for commodity $i$ by $\frac{1}{\tilde{f}_i}$: I.e., $f_{i,e} = \frac{\tilde{f}_{i,e}}{\tilde{f}_i}\cdot f_{i}$ and the flow for commodity $i$ on $e$ is given by $f_{i,e} d_i$.\label{algo:step_rescale}\\
			If $\sum_{i} w_{i}f_{i} \geq (1-\epsilon) \sum w_{i}\tilde{f}_i$
			and $\sum_i f_{i,e}d_i\leq (3b \ln m/\ln\ln m)c(e)$ for all $e\in E$, return the corresponding flow assignments given by $f_{i}$ and $f_{i,e}, \forall i\in [k]$ and $e \in E$. Otherwise, repeat steps \ref{algo:step_rounding} and \ref{algo:step_rescale},  $O((\ln m)/\epsilon^2)$ times. \label{algo:step_repeat}
		\caption{Randomized Rounding Algorithm}
		\label{alg:randomrounding}
\end{algorithm}
\end{figure}

We use randomized rounding to round the total fraction $\tilde{f}_i$ of $d_i$ that the compact LP routes for commodity $i$
to $f_i=1$, with probability  $\tilde{f_{i}}$, and to 0 otherwise. If we set $f_i$ to 1, then in order to satisfy flow conservation constraints (e.g., constraint (\ref{const:flowcons}) of Figure~\ref{fig:compact}),
we need to re-scale all the $\tilde{f}_{i,e}$ values by $1/\tilde{f}_i$, obtaining the flows $f_{i,e}$  (if $f_i=0$ then $f_{i,e}=0$, for all $e \in E$). We repeat Steps \ref{algo:step_rounding}-\ref{algo:step_rescale}
of Algorithm~\ref{alg:randomrounding} 
$\Theta((\ln m)/\epsilon^2)$ times or until we obtain the desired $((1-\epsilon),3b\ln m/\ln\ln m)$-approximation bounds, amplifying the probability of getting a desired outcome.

Given the equivalence that we showed between the packing and the compact LP, which implied among other things that the two LPS have optimal solutions of the same value and  that Algorithm~\ref{alg:randomrounding} corresponds to the packing randomized rounding approach described in Section~\ref{sec:anf-packing} when restricted to the subset of solutions to the compact LP,
we get the following corollary to
Theorem~\ref{thm:main-theorem-without-alteration}:

\begin{corollary}
	\label{cor:randomized-rounding-result}
Algorithm~\ref{alg:randomrounding}, when run on an optimum solution to the compact LP, achieves a 
{\em $((1-\eps),3b\ln m/\ln\ln m)$-approximation} for the ANF problem on {\em arbitrary networks} with high probability, for a suitable constant $b>1/m$, e.g. $b=1.85$, and any $1/m\leq \eps<1$. %
\end{corollary}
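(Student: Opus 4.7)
The plan is to derive the corollary as a direct consequence of Theorem~\ref{thm:main-theorem-without-alteration} by exhibiting Algorithm~\ref{alg:randomrounding} as an instantiation of the general packing-LP rounding from Section~\ref{sec:packing-rouding}, applied to the image of the compact LP solution under the equivalence map constructed in Section~\ref{sec:compact-equivalence}. The key observation is that the equivalence proof does more than match objective values: it explicitly converts each feasible compact-LP solution $(\tilde{f}_i,\tilde{f}_{i,e})$ into a feasible packing-LP solution $(x_i,y(\cdot))$ that uses a \emph{single} flow $g_i\in\mathcal{F}_i$ per commodity. I will show that running the packing rounding on this single-flow solution produces exactly the distribution on outputs that Algorithm~\ref{alg:randomrounding} produces on $(\tilde{f}_i,\tilde{f}_{i,e})$, at which point the lemmas of Section~\ref{sec:packing-rouding} apply verbatim.

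First, recall from Section~\ref{sec:compact-equivalence} that given an optimum compact LP solution $(\tilde{f}_i,\tilde{f}_{i,e})$, for each $i$ with $\tilde{f}_i>0$ we can define $g_i(e)=d_i\,\tilde{f}_{i,e}/\tilde{f}_i$. The strengthening constraint~(\ref{const:strenghten}) guarantees $g_i(e)\le c(e)$, and flow conservation carries over by scaling, so $g_i\in\mathcal{F}_i$. Setting $x_i=\tilde{f}_i$, $y(g_i)=\tilde{f}_i$, and $y(f)=0$ for all other $f\in\mathcal{F}_i$ yields a feasible solution to the packing LP of Fig.~\ref{fig:lp-flow}(b) with the same objective value, so $\lpval=\sum_i w_i\tilde{f}_i$ equals the optimum LP value used in Section~\ref{sec:packing-rouding}.

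Next I will verify that a single iteration of Steps~\ref{algo:step_rounding}--\ref{algo:step_rescale} of Algorithm~\ref{alg:randomrounding} is identically distributed to the packing rounding applied to $(x,y)$. Independently for each $i$, the packing rounding picks $g_i$ with probability $y(g_i)=\tilde{f}_i$ and nothing with probability $1-\tilde{f}_i$; Algorithm~\ref{alg:randomrounding} sets $f_i=1$ with the same probability $\tilde{f}_i$. Conditioned on selecting commodity $i$, the packing rounding contributes $g_i(e)=d_i\tilde{f}_{i,e}/\tilde{f}_i$ to the load on $e$, while Algorithm~\ref{alg:randomrounding} contributes $d_i\,f_{i,e}=d_i\,\tilde{f}_{i,e}/\tilde{f}_i$ by Step~\ref{algo:step_rescale}. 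Hence the two procedures induce the same joint distribution over (selected pairs, per-edge loads), and the random weight $Z=\sum_i w_i f_i$ has $E[Z]=\lpval$ as in Lemma~\ref{lem:random-weight}.

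With the identification complete, I will invoke Theorem~\ref{thm:main-theorem-without-alteration}: a single trial succeeds, i.e., simultaneously satisfies $\sum_i w_i f_i\ge(1-\epsilon)\lpval$ and $\sum_i f_{i,e}d_i\le (3b\ln m/\ln\ln m)\,c(e)$ on every edge, with probability at least $\epsilon^2/6$ provided $b$ is chosen as in the theorem (e.g., any fixed $b\ge 1.85$ suffices when $\epsilon\ge 1/m$). Step~\ref{algo:step_repeat} of Algorithm~\ref{alg:randomrounding} then repeats the trial $\Theta((\ln m)/\epsilon^2)$ times, boosting the failure probability to at most $(1-\epsilon^2/6)^{\Theta((\ln m)/\epsilon^2)}\le m^{-c}$ for any desired constant $c$, which gives the claimed high-probability bound. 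The only nonroutine step is confirming the single-flow-per-commodity conversion preserves both the LP value and the feasibility of constraint~(\ref{const:edgecap}) under the rescaling used in Step~\ref{algo:step_rescale}; this is precisely what the equivalence of Section~\ref{sec:compact-equivalence} delivers.
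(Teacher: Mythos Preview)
Your proposal is correct and follows essentially the same approach as the paper: the corollary is presented there as an immediate consequence of Theorem~\ref{thm:main-theorem-without-alteration} via the equivalence of Section~\ref{sec:compact-equivalence}, and you have simply made explicit the single-flow-per-commodity conversion and the distributional identity between Algorithm~\ref{alg:randomrounding} and the packing rounding, which the paper leaves implicit.
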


In our implementations, we will also run Algorithm~\ref{alg:randomrounding} using
the solution output by the MWU algorithm, which only guarantees a $(1-\gamma)$ approximation on the throughput for $\gamma\in (0,1)$.

In that case, we let $\tilde{f_{i}}=f(s'_i,s_i)/d_i$, where $f(s'_i,s_i)$ is as defined in Algorithm~\ref{alg:MWU}, and the values of $\tilde{f}_{i,e}$ are defined according to the flows chosen for each commodity $i$. Note that the throughput approximation guarantee for Algorithm~\ref{alg:randomrounding} in this case will be $(1-\epsilon)(1-\gamma)$.

Another advantage of 
Algorithm~\ref{alg:randomrounding} is that it leads to a surprisingly simple derandomized algorithm, as we will see  in Section~\ref{sec:derandomization}, that was also implemented for our simulations.

\junk{ %
We state below the exact formulation of the variation of Chernoff-type bounds that we will use in our analysis, both when proving the approximation bounds on throughput for Algorithm \ref{alg:randomrounding}  (Theorem \ref{theorem:throughput}) and when presenting our derandomization framework in Section~\ref{sec:derandomization}. The proofs of this variant of Chernof's bound for non-Poisson binary random variables is given in the Appendix but similar bounds can also be found in classic textbooks as, e.g.,~\cite{mitzenmacher-upfal-2017}.

\begin{restatable}{fact}{thmChernoffLowerBound}
\label{thm:chernoff-lower}
Let $\rvXSum$ be the sum of $\numberVars$ independent random variables $\rvX[1],\ldots,  \rvX[\numberVars]$ with $\rvX \in [0,1]$ for $\varEnum$.
Denoting by $\expXtilde \leq \expX = \Expec[X_\ell]$ lower bounds on the expected value of random variable $\rvX$, $\varEnum$, the following holds for any $\delta \in (0,1)$ with $\exptilde = \sumEnum[\expXtilde]$ and $\param = \ln(1-\delta)$;
\begin{align}
\Prob[\rvXSum \leq (1-\delta) \cdot \tilde{\mu}] \overset{(a)}{\leq}
\expT[-\param \cdot (1-\delta) \cdot \tilde{\mu}] \cdot {\prodEnum[\Expec[\expT[\param \cdot \rvX]]]} \overset{(b)}{\leq}  e^{-\delta^2\cdot \tilde{\mu}/2 } \label{app:eq:chernoff:lower}
\end{align}
\end{restatable}

We are now ready to state our main result bounding the throughput approximation, in the presence of non-uniform commodity demands and weights:
\begin{theorem}\label{theorem:throughput}
        	The probability of achieving less than $1-2/3\cdot \sqrt{w_{\max}/w_{LP}} \geq 1/3$ of the maximum throughput
        	is at most $e^{-2/9} \leq 65/81$, for $n\geq 9$.
\end{theorem}

\begin{proof}
We must calculate the difference between the throughput $w_{ALG}$ and the expected throughput $\Expec[w_{ALG}]=w_{LP}=\sum_i w_i \tilde{f_i}$ of Algorithm~\ref{alg:randomrounding} to observe $\alpha$ and the related probabilities. Let $w_{max}=\max_i w_i$.
We can w.l.o.g.\footnote{Without loss of generality.}\ assume that each of the $k$ commodities can be satisfied if routed by itself in $G$, and hence $w_{OPT}\geq w_{\max}$. We define the following (normalized) variables:
Let $X_i = w_i / w_{\max}$ if commodity $i$ is routed and $X_i = 0$ otherwise. Then $\Prob[X_i = w_i / w_{\max}] = \tilde{f}_i$ and $\Prob[X_i = 0] = 1- \tilde{f}_i$. Let $X = \sum_{i} X_i$. Then we have $\Expec[X] = \sum_{i} \Expec[X_i] = \sum_{i} \tilde{f}_i \cdot w_i / w_{\max}$ and that $w_{ALG} = X \cdot w_{\max}$. Since $\Expec[w_{ALG}]=w_{LP}$ is an upper bound on the optimal achievable throughput $w_{OPT}$, we also have $w_{max}\leq w_{OPT} \leq w_{LP}$.

Noting that the variables $X_i$ conform with the requirements of %
Fact~\ref{thm:chernoff-lower} and setting
$\delta=2/3\cdot\sqrt{w_{\max}/w_{LP}}$ and
$\exptilde = \Expec[w_{ALG}]/w_{\max}= w_{LP}/w_{\max} = \Expec[X]$,
we obtain
\[%
\Prob[X < (1-2/3\cdot\sqrt{w_{\max}/w_{LP}})\cdot \Expec[X]]  \leq e^{-2/9}
\label{thm2:eq1}
\]%
As $w_{ALG} = X \cdot w_{\max}$,
we have
\[%
 \Prob[w_{ALG} < (1-2/3\cdot\sqrt{w_{\max}/w_{LP}}) \cdot w_{LP}] \\ \geq\Prob[w_{ALG}<(1- 2/3\cdot\sqrt{w_{\max}/w_{LP}}) \cdot w_{OPT}]
\label{thm2:eq2}
\]%
which in turn implies that
\[
\Prob[w_{ALG}< (1-2/3\cdot\sqrt{w_{\max}/w_{LP}}) \cdot w_{OPT}]\leq e^{-2/9}
\]
Lastly, by using the Taylor series expansion of the function $f(x)=e^x$, we obtain that $e^x \leq 1 + x + x^2/2$ holds for x < 0. Accordingly, $e^{-2/9} \leq 65/81$ holds.
\end{proof}
Now, we bound the edge capacity violation ratio of a single iteration of Steps \ref{algo:step_rounding}-\ref{algo:step_rescale} of Algorithm \ref{alg:randomrounding}. We will use the Bernstein concentration bound~\cite{2013} (as stated in Fact~\ref{thm:bernstein-inequality-specific}) in order to prove a logarithmic bound on the edge capacity violation ratio while also handling non-uniform demands and weights, generalizing and improving our bounds in~\cite{infocom19}.

\begin{restatable}[Bernstein Concentration Bound]{fact}{thmBernsteinSpecific}
\label{thm:bernstein-inequality-specific}
Given is a collection $\{\rvY\}_{\varIndex \in [\numberVars]}$ of $\numberVars \in \mathbb{N}$ independent binary random variables, i.e., $\rvY \in \{0,1\}$, with $\expX = \Expec[\rvY]$ for $\varEnum$.
Let $\rvX = a_\varIndex \cdot \rvY$ for constant  $0 < a_\varIndex \leq M$. The following holds for any $t > 0$ with \mbox{$\param = \left(\frac{t}{\sumEnum[a^2_\varIndex \cdot (\expX - \expX^2)] + M \cdot \frac{t}{3}}\right)$}:
\begin{alignat}{5}
\Prob[\sum_{\varEnum} \left(\rvX - \Expec[\rvX]\right) \geq t] & \overset{(a)}{\leq} &&
\expT[-\param \cdot t]  \cdot  \prodEnum[\Expec[\expT[\param \cdot(\rvX - \Expec[\rvX])]]] \\
& \overset{(b)}{\leq} &&
\expF[- \frac{t^2/2}{\sumEnum[a^2_\varIndex \cdot (\expX - \expX^2)] + M \cdot t/3}]\,
\label{eq:Bernstein(b)}
\end{alignat}
\end{restatable}

This brings us to our edge capacity violation theorem:

\begin{theorem} \label{thereom:v4}
		Given an edge $e\in E$, consider the total flow $F^e=\sum_i d_i f_{i,e}$ on $e$, where $f_{i,e}$ is the flow on edge $e$ for commodity $i$, resulting from one iteration of Steps \ref{algo:step_rounding}-\ref{algo:step_rescale} of Algorithm~\ref{alg:randomrounding}.
		The probability that $F^e$ exceeds $c_e$ by a factor of at least $(1+j\ln{n})$
		is upper bounded by $1/n^j$, where $j\geq\frac{6}{\ln n}$.
	\end{theorem}

	\begin{proof}
		Fix an edge $e \in E$ and a commodity $i \in [k]$.
		With probability $1-\tilde{f}_{i}$, the flow on edge $e$
		for commodity $i$ is set to 0, i.e., $f_{i,e}=0$.
		With probability $\tilde{f}_{i}$, the flow on edge $e$ for commodity $i$ is
		set to $\tilde{f}_{i,e} \cdot \frac{1}{\tilde{f}_i}\cdot d_{i}$. 		Then the expected value of $f_i$ and the flow for commodity $i$ on edge $e$  are respectively
		\begin{equation}
		\Expec[f_i] = \tilde{f}_i,  \mbox{\ \ and\ \ }
		\Expec[f_{i,e}\cdot d_{i}] =
		d_{i}\cdot((\tilde{f}_{i,e} \cdot \frac{1}{\tilde{f}_i} )\cdot  \tilde{f}_i+ 0 \cdot (1-\tilde{f}_i)) = \tilde{f}_{i,e}\cdot d_{i}
		\end{equation}

		We have that $F^e = \sum_{i}{f_{i,e}\cdot d_{i}}$ and
		that $\Expec[F^e] = %
		\sum_{i, \tilde{f}_{i,e} \neq 0}\tilde{f}_{i,e}\cdot d_{i}\leq c_e$.
		In order to bound $\Prob[F^{e}\geq (1+j\ln n)\cdot c_{e} ]$, we apply Bernstein's inequality as stated in Fact~\ref{thm:bernstein-inequality-specific} with
		\begin{enumerate}
		\item $m=k$,
		\item Indicator r.v. $Y_{i}$, for all $i\in [k]$, such that $Y_i=1$ if $f_i=1$, and 0 otherwise; it follows that $\mu_{i} = \Expec[Y_{i}]=\tilde{f}_i$
		\item Constants $a_i=\frac{\tilde{f}_{i,e}\cdot d_{i}}{\tilde{f}_i},\forall i \in [k] $
		\item Constant $M=\max_i a_i=\max_i\frac{\tilde{f}_{i,e}\cdot d_{i}}{\tilde{f}_{i} }\leq c_{e}$ (inequality follows from Constraint~\ref{const:strenghten} of LP)
		\item Parameter $t= (j\ln n)\cdot c_{e} $, where $j$ is a positive constant.
		\end {enumerate}
		Let the exponent on the upper bound given by Bernstein's inequality in Equation~\ref{eq:Bernstein(b)} be $z$, which upon substituting the above values becomes
		\begin{equation}
		    z = - \frac{(j\ln n)^2\cdot c^{2}_{e}}{2\sum_i \frac{\tilde{f}^2_{i,e}\cdot d^{2}_{i}}{\tilde{f}_i^2}(\tilde{f}_i-\tilde{f}^2_i) + [2\max_{i}(\frac{\tilde{f}_{i,e}\cdot d_{i}}{\tilde{f}_i})\cdot(j\ln n)\cdot c_{e}]/3}
		    \label{thm4:proof:bernstein_Substitute}
		\end{equation}
	Since $\frac{\tilde{f}_{i,e}\cdot d_{i}}{\tilde{f}_{i} }\leq c_{e}$, we obtain
		\begin{equation}
		    z \leq  -
		\frac{(j\ln n)^2\cdot c^{2}_{e}}{2\sum_{i} c_{e}(\frac{\tilde{f}_{i,e}\cdot d_{i}}{\tilde{f}_i})\tilde{f}_i(1-\tilde{f}_i) + (2jc_{e}^{2}\ln n)/3} \leq  - \frac{(j\ln n)^2\cdot c^{2}_{e}}{2c_{e}\sum_{i} \tilde{f}_{i,e}\cdot d_{i} + (2jc_{e}^{2}\ln n)/3}
		\label{thm4:proof:capacity}
		\end{equation}
		where we arrive at the second inequality by bounding $(1-\tilde{f}_{i})$ by 1.
		Since
		$\sum_{i}d_i \tilde{f}_{i,e}\leq c_{e}$, we obtain
		\begin{alignat}{7}
		z & \leq &&-
		\frac{(j\ln n)^2\cdot c^{2}_{e}}{2c_{e}^{2} + (2jc_{e}^{2}\ln n)/3} = \label{thm4:proof:bound_z1}%
		- \frac{3(j\ln n)^2}{6 + 2j\ln n}
		\end{alignat}
		If we assume that $6\leq j\ln n$, which holds e.g. for any $j\geq 2$ and $n\geq 8$, we have
		\begin{alignat}{2}
		z & \leq && - \frac{3(j\ln n)^2}{j\ln n + 2j\ln n}\leq  -
		\frac{(j\ln n)^2}{j\ln n} = -
		j\ln n
		\label{thm4:proof:jlogn}
		\end{alignat}

		We now upper bound the failure probability as follows:
	   \begin{alignat}{4}
	   & &&\Prob[\sum_{i \in [k]} \left(X_{i} - \Expec[X_{i}]\right) \geq t]  =%
	 \Prob[\sum_{i \in [k]}\left (\frac{\tilde{f}_{i,e}\cdot d_{i}}{\tilde{f}_i}\cdot f_{i}- \Expec[\frac{\tilde{f}_{i,e}\cdot d_{i}}{\tilde{f}_i}\cdot f_{i}]\right)\geq t]& \label{thm4:prob_calculation_rhs1}= \\
	&=&& \Prob[\sum_{i \in [k]}f_{i,e} d_i - \Expec[\sum_{i \in [k]}f_{i,e} d_i]\geq (j\ln n)\cdot c_{e} ] \leq  e^{- j \ln n} \label{thm4:prob_calculation_rhs2}
	\end{alignat}
	Hence, since $\Expec[F_e]\leq c_e$,
	\begin{alignat}{7}
	& && \Prob[F^{e}- \Expec[F^{e}]\geq (j\ln n)\cdot c_{e} ] \leq  e^{- j\ln n} %
	 \Rightarrow  \Prob[F^{e}\geq (1+j\ln n)\cdot c_{e} ] \leq \frac{1}{n
	^{j}}\label{thm4:proof:prob_final}
		\end{alignat}
		\end{proof}

	This result is for a particular edge $e$. To extend this result for the entire network $G$, we compute the union bound over all edges, of which there are at most $n^{2}$. Thus, we obtain the following bound for the edge capacity violation ratio $\beta$:
	\begin{corollary}\label{newbeta}
	The probability that one execution of Steps \ref{algo:step_rounding}-\ref{algo:step_rescale} of Algorithm \ref{alg:randomrounding} exceeds any of the edge capacity constraints by a factor of at least $(j\ln n+1)$ is at most $\frac{1}{n^{j-2}}$ for any constant $j\geq 3$ and all $n\geq 4$.
	\end{corollary}

	Using Theorem \ref{theorem:throughput} and Corollary \ref{newbeta}, we have that for $j=3$ and any $n\geq 9$, the failure probability for a single round of execution of Steps \ref{algo:step_rounding}-\ref{algo:step_rescale} of the algorithm --- i.e. the probability of not finding a solution with a 1/3-approximation on the throughput and an edge capacity violation ratio of $(3\ln n+1)$ within a single round of the randomized algorithm --- is bounded from above by $65/81 + 1/9 = 74/81$ (since $1/n^{j-2}=1/n\leq 1/9$). The probability of finding a feasible solution within $c \log{n}$ rounds of Algorithm~\ref{alg:randomrounding}, where $c$ is a constant, is then bounded from below by $1-(74/81)^{c\log n}=1-\frac{1}{n^{b}}$, where $b$ is a constant. Hence, Algorithm \ref{alg:randomrounding} gives a solution with a $1/3$-approximation on the throughput and an edge capacity violation ratio of at most $3\ln n+1$ with high probability.

	Hence, choosing $j=3$ and assuming the number of nodes to be at least 9
	 to satisfy the requirements of Corollary~\ref{newbeta}, we obtain our main result:
	\begin{theorem}
	Assuming $n \geq 9$, the randomized rounding algorithm finds an $(\alpha,\beta)$-approximate solution with $\alpha = 1-2/3\cdot \sqrt{w_{\max}/w_{LP}} \geq 1/3$ and $\beta = (3\ln n+1)$ within $c\log{n}$ iterations with high probability, where $c$ is a positive constant.%
	\end{theorem}
} %

\section{An Efficient Deterministic Algorithm}
\label{sec:derandomization}
In this section, we give a derandomization of Algorithm \ref{alg:randomrounding}. Our derandomized algorithm is particularly attractive for its simplicity and efficiency in practice (see Section~\ref{sec:sim}), unlike most existing derandomized algorithms in the literature whose implementations in practice are cumbersome and ineffective. 
Our deterministic algorithm leverages the method of pessimistic estimators first introduced by Raghavan~\cite{Raghavan1988Derandomization} to efficiently compute conditional expectations, which will guide the construction of the $(\alpha,\beta)$-approximate solution. Given the analysis in Section~\ref{sec:packing-rouding} and Corollary~\ref{cor:randomized-rounding-result}, in the forthcoming analysis, we always assume $\alpha = 1-1/m$ and $\beta = 3b\ln m/\ln\ln m$ for $m \geq 9$ and $b = 1.85$.

\newcommand{\failurestyle}[1]{\mathsf{#1}}

\newcommand{\arbitraryFunction}{\ensuremath{\failurestyle{f}}}

\newcommand{\fail}[1][fail]{\ensuremath{\failurestyle{#1}}}
\newcommand{\est}{\ensuremath{\failurestyle{est}}}
\newcommand{\estP}{\ensuremath{\failurestyle{est^{\alpha}_{\beta}}}}
\newcommand{\estA}{\ensuremath{\failurestyle{est}_{\failurestyle{\alpha}}}}
\newcommand{\estB}[1][(u,v)]{\ensuremath{\failurestyle{est}^{#1}_{\failurestyle{\beta}}}}

\newcommand{\estAF}{\ensuremath{\failurestyle{est}_{\failurestyle{\alpha}}}}
\newcommand{\estBF}[1][(u,v)]{\ensuremath{\failurestyle{est}^{#1}_{\failurestyle{\beta}}}}

\RenewDocumentCommand{\estA}{O{\rvZ[1], \ldots, \rvZ[k]} O{(u,v)}}{\ensuremath{\failurestyle{est}_{\failurestyle{\alpha}}(#1)}}

\RenewDocumentCommand{\estB}{O{\rvZ[1], \ldots, \rvZ[k]} O{(u,v)}}{\ensuremath{\failurestyle{est}^{#2}_{\failurestyle{\beta}}(#1)}}

We first introduce the following notation.
Let $z_i = 0$ if Algorithm~\ref{alg:randomrounding} has not selected commodity $i$ to be routed, and let $z_i = 1$ if $i$ was admitted. Now, let  $\fail(z_1,\ldots,z_k) \to \{0,1\}$ denote the failure function of not constructing an $(\alpha,\beta)$-approximate solution, i.e., $\fail(z_1, \ldots, z_k) = 1$ if and only if the constructed solution either does not achieve an $\alpha$-fraction of the LP's (weighted) throughput or the capacity of some edge is exceeded by a factor larger than $\beta$. We use $Z_i$ to denote the $\{0,1\}$-indicator random variable for whether commodity $i$ is routed in one execution of Steps 3-4 of  Algorithm~\ref{alg:randomrounding},
i.e., $\Prob[Z_i = 1] = \tilde{f}_i$ and $\Prob[Z_i = 0] = 1- \tilde{f}_i$.
We have shown in Section~\ref{sec:packing-rouding} that $\Expec[\fail(Z_1,\ldots,Z_k)] < 1$ holds (cf. Theorem~\ref{thm:main-theorem-without-alteration}),  implying the existence of an $(\alpha,\beta)$-approximate solution. Given the above definitions, we employ the following notation to denote the conditional expectation of a function $\arbitraryFunction: \{0,1\}^k \to \{0,1\}$:
\begin{align*}
\Expec[\arbitraryFunction(z_1,\ldots,z_i, Z_{i+1},\ldots, Z_{k})] = \Prob[\arbitraryFunction(Z_1,\ldots,Z_{k})=1~|~Z_1=z_1,\ldots,Z_i=z_i]\,.\\[-21pt]
\end{align*}

\vspace{5pt}
As computing $\Expec[\fail(z_1,\ldots,z_i, Z_{i+1},\ldots, Z_{k})]$ is generally computationally prohibitive, we will now derive a pessimistic estimator $\est: \{0,1\}^k \to \mathbb{R}_{\geq 0}$, such that the following holds for all $i \in [k]$ and all $(z_1,\ldots,z_i) \in \{0,1\}^i$:

\vspace{3pt}
\begin{descriptionFixed}[2.25cm]
\item[Upper Bound] $\Expec[\fail(z_1,\ldots,z_i, Z_{i+1},\ldots, Z_{k})] \leq \Expec[\est(z_1,\ldots,z_i, Z_{i+1},\ldots, Z_{k})]$.~\hfill\tagIt{eq:pess:bound}
\item[Efficiency] $\Expec[\est(z_1,\ldots,z_i, Z_{i+1},\ldots, Z_{k})]$  can be computed efficiently.~\hfill\tagIt{eq:pess:efficiency}
\end{descriptionFixed}
\vspace{3pt} 
Furthermore, the estimator's value must initially be strictly less than 1 for the derandomization:

\vspace{3pt}
\begin{descriptionFixed}[2.25cm]
\item[Base Case] $\Expec[\est(Z_{1},\ldots, Z_{k})] < 1$ holds initially.~\hfill\tagIt{eq:pess:base}
\end{descriptionFixed}

\vspace{3pt}
In the following, we discuss how such a pessimistic estimator is used to derandomize the decisions made by the algorithm informally presented in Section~\ref{sec:anf-packing} before introducing the actual estimator $\estP$ in Lemma~\ref{lem:pessimistic-estimator}. Algorithm~\ref{alg:deterministic-rounding} first computes an LP solution just as Algorithm~\ref{alg:randomrounding},
but then uses the pessimistic estimator to guide its decision towards deterministically constructing an approximate solution. Specifically, each commodity is either routed or rejected such that the conditional expectation $\Expec[\estP(z_1,...,z_{i},Z_{i},...,Z_n)]$ is minimized. Given that initially $\Expec[\estP(Z_{1},\ldots, Z_{k})] < 1$, this procedure terminates with a solution $(z_1,\ldots,z_k)$ such that the failure function $\fail(z_1,\ldots,z_k)$ is strictly upper bounded by $1$. Specifically, $1 > \Expec[\estP(Z_{1},\ldots, Z_{k})] \geq \Expec[\estP(z_1,Z_{2},\ldots, Z_{k})] \geq \ldots \geq \Expec[\estP(z_1,\ldots,z_k)]$ is guaranteed and therefore, for the binary function $\fail$, $\fail(z_1,\ldots,z_k) = 0$ must hold. Furthermore, the algorithm is efficient (i.e., runs in polynomial time) as long as the pessimistic estimator function $\estP$ can be evaluated in polynomial time. W.l.o.g., we assume that only commodities that can be satisified in $G$ are given as input to Algorithm~\ref{alg:deterministic-rounding}.

\newlength{\commentWidth}
\setlength{\commentWidth}{8.5cm}
\newcommand{\rtcp}[1]{\tcp*[r]{\makebox[\commentWidth]{#1\hfill}}}
\newcommand{\ftcp}[1]{\tcp*[f]{\makebox[\commentWidth]{#1\hfill}}}

\newcommand{\failureestimate}{\ensuremath{\mathsf{failure\_estimate}}}

  \begin{figure}[t!]

  \begin{algorithm*}[H]
  \DontPrintSemicolon
  
  \SetKwInOut{Input}{Input}\SetKwInOut{Output}{Output}
  \SetKwFunction{ProcessPath}{ProcessPath}{}{}
  \SetKwFunction{reverse}{reverse}{}{}
  \SetKwFunction{LP}{LP}
  \SetKwFunction{LP}{LP}

  \Input{Directed Graph $G(V, E)$\\
  			Source-Sink Pair $(s_i, t_i)$ for each satisfiable commodity $i\in [k]$ \\
  			Capacity $c(u, v) ~ \forall (u, v)\in E$\\
  			Estimator $\estP: \{0,1\}^k \to \mathbb{R}_{\geq 0}$ for obtaining an $(\alpha,\beta)$-approximate sol.}
  \Output{$(\alpha,\beta)$-approximate solution to the ANF instance}
  
  \BlankLine

\COMPUTE optimal solution $\vec{\tilde{f}}$ to compact edge-flow LP (cf. Figure~\ref{fig:compact})\\

\LET $Z_i \in \{0,1\}$ be the random variable s.t. $\Prob[Z_i = 1] = \tilde{f}_i$ and $\Prob[Z_i = 0] = 1- \tilde{f}_i$ for $i\in [k]$\\

\COMPUTE $\failureestimate \gets \Expec[\estP(Z_1,...,Z_{i-1},Z_{i},...,Z_n)]$\\

\ForEach(\tcp*[f]{iterate over all commodities}){$i \in [k]$}{ 
			\eIf{$\Expec[\estP(z_1,\ldots,z_{i-1},0,Z_{i+1},...,Z_n)] < \failureestimate$}
			{
				
				\SET $z_i \gets 0$ 				\tcp*{commodity $i$ is \emph{not} routed}
			}{
				\SET $z_i \gets 1$ 			\tcp*{commodity $i$ is routed}
			}
			\UPDATE $\failureestimate \gets \Expec[\estP(z_1,\ldots,z_i,Z_{i+1},\ldots,Z_n)]$\\
}
\KwRet{solution given by $\vec{z}$: if $z_i = 1$ then $f_i = 1$ and $f_{i,e} = \tilde{f}_{i,e}  / \tilde{f}_i$, else $f_i = f_{i,e} = 0,\,\forall i \in [k]$}
  \caption{{Deterministic Approximation for the All-or-Nothing Flow Problem}}
  \label{alg:deterministic-rounding}
  \end{algorithm*}
  \vspace{-12pt}
  \end{figure}

We now introduce the following specific pessimistic estimator $\estP$ for which the above three correctness criteria (upper bound, efficiency, base case) are proven.
As before, in the following we always assume $\alpha = 1-1/m$ and $\beta = 3b\ln m/\ln\ln m$ for $m \geq 9$ and $b = 1.85$.

\vspace{-.05in}
\begin{restatable}[Pessimistic Estimator]%
{lemma}{pessimisticEstimatorForANF}
\label{lem:pessimistic-estimator}
The function $\estP$ is a pessimistic estimator for the ANF:
\label{def:pessimistic-estimator}
\allowdisplaybreaks
	\vspace{-.05in}
\begin{alignat*}{7}
	&\quad\quad\quad\quad\quad \estP(Z_1,\ldots,Z_k)  =  \estA + \sum_{(u,v) \in E} \estB,\,\,\textnormal{\it where}\\
	&%
	\estA  =  \expT[-\paramalpha  (1-\delta_{\alpha}) \exptilde]  \prodEnum[\Expec[\expT[\paramalpha Z_{i} \frac{ w_i}{w_{\max}}]]],
	\,\textnormal{\it with\ }\delta_{\alpha} = \frac{1}{m}\,,\, \exptilde = \frac{w_{LP}}{w_{\max}}\,,\, \paramalpha = \ln(1-\delta_{\alpha}); \\ %
	&\textnormal{\it and\ }%
	\estB  =  \expT[-\parambeta  (1+\delta_{\beta}) \exphat] %
	\prodEnum[\Expec[\expT[\parambeta  Z_{i} \frac{(\flowonuvbyi / \tilde{f}_i)}{\capofuv}]]],%
	\,\textnormal{\it with\ }\delta_{\beta} = \frac{3b\ln m}{\ln\ln m - 1},\, b = 1.85, \\
	& \exphat = 1,\, \parambeta = \ln(1+\delta_{\beta})\,.
\end{alignat*}
\end{restatable}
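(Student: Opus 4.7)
The plan is to verify the three required properties for the candidate estimator $\estP = \estA + \sum_{(u,v)\in E} \estB$. The strategy mirrors the two-event union bound used in Section~\ref{sec:packing-rouding}: $\estA$ will pessimistically upper bound the probability of the throughput-shortfall event $\cE_1$, while each $\estB$ upper bounds the probability that edge $(u,v)$ violates its allowed congestion by the factor $\beta$.

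For the upper bound property, the key observation is that $\fail$ is dominated pointwise by the sum of the throughput-failure indicator and the per-edge capacity-failure indicators, so it suffices to dominate each indicator separately. For the throughput piece, applying Markov's inequality to $e^{\paramalpha \sum_i Z_i w_i/w_{\max}}$ with the negative parameter $\paramalpha = \ln(1-\delta_\alpha)$ produces exactly the product form of $\estA$---this is the first step of the lower-tail Chernoff proof underlying Lemma~\ref{lem:random-weight}. For each edge $(u,v)$, applying Markov to the analogous upper-tail exponential with $\parambeta = \ln(1+\delta_\beta) > 0$ applied to $\sum_i Z_i (f_{i,(u,v)}/\tilde{f}_i)/c_{(u,v)}$ produces $\estB$. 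Because the inequality holds pointwise in $Z_1,\ldots,Z_k$ and the tail variables $Z_{i+1},\ldots,Z_k$ stay mutually independent after fixing any prefix, conditioning on $z_1,\ldots,z_i$ preserves both bounds term by term, yielding property \eqref{eq:pess:bound}.

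Efficiency follows immediately from the product-over-$i$ structure of $\estA$ and $\estB$: after plugging in the observed prefix, the first $i$ factors become constants, and each of the remaining $k-i$ factors is the closed-form moment generating function $(1-\tilde{f}_j) + \tilde{f}_j e^{\theta v_j}$, evaluable in $O(1)$ from the marginal $\tilde{f}_j$. A single evaluation of $\estP$ therefore costs $O(km)$, and since Algorithm~\ref{alg:deterministic-rounding} performs $O(k)$ such evaluations, its runtime overhead over the LP solve is polynomial, establishing property \eqref{eq:pess:efficiency}.

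The main obstacle is the base case \eqref{eq:pess:base}, i.e., $\Expec[\estP(Z_1,\ldots,Z_k)] < 1$, which must be argued quantitatively since $\Expec[\estA]$ alone is only barely below $1$ for $\delta_\alpha = 1/m$. The strategy is to reuse the quantitative union-bound argument preceding Theorem~\ref{thm:main-theorem-without-alteration}: the standard Chernoff tightening of the MGF product gives $\Expec[\estA] \leq e^{-\delta_\alpha^2 \tilde\mu/2} \leq e^{-1/(2m^2)}$ for $\tilde\mu = w_{LP}/w_{\max} \geq 1$, while the per-edge calculation of Lemma~\ref{lem:overflow-prob} yields $\sum_{(u,v) \in E} \Expec[\estB] \leq m \cdot e^{-1.5b\ln m - 3b\ln b \ln m/\ln\ln m - 1}$. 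Choosing $b = 1.85$ makes the capacity-failure sum at most $1/(6m^2)$ for all $m \geq 9$, exactly as in that preceding discussion, and then the inequality $e^{-x} \leq 1 - x + x^2/2$ yields $\Expec[\estP] \leq 1 - 1/(2m^2) + 1/(6m^2) + O(1/m^4) < 1$, which closes the verification.
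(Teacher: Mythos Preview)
Your proposal is correct and follows essentially the same route as the paper's proof: you derive $\estA$ and each $\estB$ as the Markov/Chernoff ``middle expression'' that upper-bounds the throughput-shortfall and per-edge overflow indicators (exactly the inequalities~(a) of Theorems~\ref{thm:chernoff-lower} and~\ref{thm:chernoff-upper}), you use the product-over-commodities structure for efficiency, and for the base case you reuse the quantitative bounds $e^{-\delta_\alpha^2\tilde\mu/2}$ and the per-edge overflow bound of Lemma~\ref{lem:overflow-prob} with $b=1.85$, combined via $e^{-x}\le 1-x+x^2/2$, to conclude $\Expec[\estP]<1$. This is precisely the argument the paper sketches, only spelled out in more detail; the paper summarizes the same computation by writing $\Expec[\estP(Z_1,\ldots,Z_k)]\le 1-1/(6m^2)$.
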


\begin{proof}
The following three properties are to be shown: (i) upper bound, (ii) efficiency, and (iii) base case (cf. Equations~\ref{eq:pess:bound} - \ref{eq:pess:base}). We first discuss properties (i) and (iii).

\sloppy
The analysis in Section~\ref{sec:randomized-algorithm} has demonstrated that the probability of obtaining an \mbox{$(\alpha,\beta)$-approximate} solution via randomized rounding is bounded from below by $1/(6\cdot m^2)$~(cf. Corollary~\ref{cor:randomized-rounding-result}). To obtain this result, a union bound argument was employed, which used probabilistic bounds on not achieving at least an $\alpha$ fraction of the optimal throughput and exceeding the capacity of each single edge by a factor of $\beta$. 

For the throughput, the Chernoff bound of Theorem~\ref{thm:chernoff-lower} is applied, while for each edge's capacity violation, the Chernoff bound of Theorem~\ref{thm:chernoff-upper} is used. The pessimistic estimators $\estAF$ and $\estBF$ are a direct result of these respective theorems:
\begin{itemize}
\item $\estAF$ is obtained from the application of the Chernoff bound of Theorem~\ref{thm:chernoff-lower} within the proof of Lemma~\ref{lem:random-weight}. Specifically, the application of the Chernoff bound in Lemma~\ref{lem:random-weight} yields the following --- restated over the variables $Z_i$ --- with the parameters $\delta_{\alpha}$, $\paramalpha$ and $\exptilde$ as specified above:
\[
\Prob[
\sumEnum[w_l \cdot Z_l] < \alpha\cdot w_{LP} %
] \overset{}{\leq} \expT[-\paramalpha \cdot (1-\delta)\cdot \exptilde] \cdot {\prod_{i \in[k]}\Expec[\expT[\paramalpha \cdot Z_{i} \cdot w_{i}/ w_{\max}]]} \overset{}{\leq}  e^{-1/(2\cdot m^2)} 
\]

 The  middle expression directly yields the pessimistic estimator for the throughput.
\item $\estBF$ is analogously obtained from the application of the Chernoff bound of Theorem~\ref{thm:chernoff-upper} in the Lemma~\ref{lem:overflow-prob} for each edge $(u,v) \in E$. Specifically, for a single edge $(u,v)$, the following is obtained when using the constants defined above:
\[
\Prob[\sum_{i\in [k]} f_{i,(u,v)}> \beta \cdot c(u,v)
] \leq 
\expT[-\parambeta \cdot (1+\delta_{\beta}) \cdot \exphat] \cdot \prod_i \Expec[\expT[\parambeta \cdot Z_{i} \cdot \flowonuvbyi / \capofuv]] \leq 1/(6\cdot m^2)
\]
Again, the middle expression is used to obtain the pessimistic estimator $\estBF$ for the specific edge $(u,v) \in E$.
\end{itemize}
Revisiting the union bound argument, we obtain that $\estP$ indeed yields an upper bound on the failure probability to construct an $(\alpha,\beta)$-approximate solution, and that initially $\Expec[\estP(Z_1,\ldots,Z_k)] \leq 1 - 1/(6\cdot m^2) < 1$ holds for $m \geq 9$. This shows that properties (i) and (iii) are satisfied.

Considering the efficiency property (ii), we note the following. 
Both $\estAF$ and $\estBF$ consist of products, where expectations for different commodities can be computed independently. Due to the binary nature of the variables $Z_i$, these expectations can be computed in constant time.
\end{proof}

Given the above outlined intuition of the derandomization process and the correctness of the pessimistic estimator due to Lemma~\ref{lem:pessimistic-estimator}, the following main theorem of this section is obtained.

\begin{theorem}
Using $\estP$ as a pessimistic estimator, Algorithm~\ref{alg:deterministic-rounding} is a deterministic \mbox{$(\alpha,\beta)$-approximation} for the ANF problem with $\alpha=1-1/m$ and $\beta = 3b\ln m/\ln\ln m\,$, with $b = 1.85
$,  for $m \geq 9$.
\end{theorem}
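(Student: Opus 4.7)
The plan is to prove the theorem by combining the three properties of the pessimistic estimator $\estP$ asserted in Lemma~\ref{lem:pessimistic-estimator} with the greedy derandomization executed by Algorithm~\ref{alg:deterministic-rounding}. First, I would invoke the base case: $\Expec[\estP(Z_1,\ldots,Z_k)] < 1$ holds at initialization. This is essentially the conclusion of the randomized analysis in Section~\ref{sec:packing-rouding} (transferred to the compact formulation via the equivalence proven in Section~\ref{sec:compact-equivalence}), specialized to $\alpha=1-1/m$ and $\beta=3b\ln m/\ln\ln m$ with $b=1.85$. The estimator splits into the throughput failure term $\estA$ and per-edge capacity failure terms $\estB$, and the Chernoff-style bounds from Section~\ref{sec:packing-rouding} guarantee that their expected sum is strictly below $1$ for these parameters when $m\ge 9$.

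Second, I would establish the monotonicity property driving the derandomization loop. Because $\estP$ is a polynomial in the indicators $Z_1,\ldots,Z_k$ and $Z_i$ is independent of the others with $\Prob[Z_i=1] = \tilde{f}_i$, the conditional expectation after fixing $z_1,\ldots,z_{i-1}$ decomposes as
\begin{equation*}
\Expec[\estP(z_1,\ldots,z_{i-1},Z_i,Z_{i+1},\ldots,Z_k)] = \tilde{f}_i\cdot A_1 + (1-\tilde{f}_i)\cdot A_0,
\end{equation*}
where $A_b = \Expec[\estP(z_1,\ldots,z_{i-1},b,Z_{i+1},\ldots,Z_k)]$. Since this is a convex combination, $\min(A_0,A_1)$ does not exceed the left-hand side, and Algorithm~\ref{alg:deterministic-rounding} explicitly selects $z_i$ achieving this minimum. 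Iterating over all commodities yields
\begin{equation*}
1 > \Expec[\estP(Z_1,\ldots,Z_k)] \ge \Expec[\estP(z_1,Z_2,\ldots,Z_k)] \ge \cdots \ge \estP(z_1,\ldots,z_k).
\end{equation*}

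Third, I would convert this estimator bound into the desired approximation guarantee. By the upper-bound property of Lemma~\ref{lem:pessimistic-estimator}, $\Expec[\fail(z_1,\ldots,z_k)] \le \estP(z_1,\ldots,z_k) < 1$. Since $\vec z$ is now deterministic, the left-hand side equals $\fail(z_1,\ldots,z_k)$, and because $\fail$ is $\{0,1\}$-valued, a strict upper bound below $1$ forces $\fail(z_1,\ldots,z_k) = 0$. By the definition of $\fail$, the output is therefore a $(1-1/m,\,3b\ln m/\ln\ln m)$-approximation of $\lpopt$, and hence of $\ipopt$ as well. Polynomial runtime follows from solving the compact edge-flow LP of Section~\ref{sec:compact} once plus $O(k)$ estimator evaluations, each polynomial by the efficiency property.

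The main obstacle in this plan sits inside Lemma~\ref{lem:pessimistic-estimator} itself, whose proof is deferred to the appendix: producing a function that is simultaneously (i) a pointwise upper bound on $\fail$ via Chernoff-type moment generating function arguments for both the lower tail on throughput and the upper tails on edge loads, (ii) efficient to condition on and evaluate by exploiting independence and the product structure of exponential moments, and (iii) initially strictly below $1$ under the chosen $(\alpha,\beta)$. Once the lemma is granted, the derandomization argument above is the standard Raghavan pessimistic-estimator template, and the remaining work is just bookkeeping over the $k$ greedy steps.
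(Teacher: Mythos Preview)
Your proposal is correct and follows essentially the same approach as the paper: invoke the three properties of $\estP$ from Lemma~\ref{lem:pessimistic-estimator}, use the convex-combination argument to show the greedy choices keep the conditional estimator non-increasing, and conclude that the binary failure indicator must be zero. The paper's own argument is presented informally in the paragraphs preceding the theorem (the chain $1 > \Expec[\estP(Z_{1},\ldots, Z_{k})] \geq \ldots \geq \estP(z_1,\ldots,z_k)$ and the appeal to efficiency), and your write-up simply makes the standard Raghavan template explicit.
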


\section{Potential Problem Extensions}
\label{sec:extensions}
The packing formulation for the ANF was introduced in Section~\ref{sec:anf-packing} together with a simple randomized rounding algorithm. Besides the practical tractability established in Section~\ref{sec:mwu}, the proposed packing framework for the ANF has further advantages. Specifically, it can be easily adapted to cater for problem extension such as when flows are restricted to $k$-splittable flows, must obey fault-tolerance criteria, or are restricted to shortest paths. 

In the following we describe some of these extensions and how the packing formulation may be adapted together with the separation procedure. Notably, some problem extension allow for compact LP formulations, however, casting the problems in terms of the packing formulation is generally less complex and therefore helps in establishing whether a problem extension can be efficiently approximated in the first place.

Henceforth, our
goal is to solve the maximum throughput problem in the
all-or-nothing model while restricting the nature of flows that are
allowed for each commodity. The ANF allows flow for each commodity to be
split in arbitrary ways while unsplittable flow requires all the flow
for a commodity to use a single path.  However, there are several
intermediate settings of interest, and other constraints, that occur in practice.
Recall that in setting up the formulation in
Figure~\ref{fig:lp-flow}, $\mathcal{F}_i$ for each
pair $i$ is the set of valid $s_i$-$t_i$ flows in $G$.  This
is a large implicit set, and the way we solve the LP relaxation is via the
separation oracle. The separation oracle corresponds to finding a minimum-cost
flow from $\mathcal{F}_i$ given some edge lengths/costs.  The MWU algorithm can be
viewed as an efficient, albeit approximate, way to solve the large
implicit LP relaxation via the separation oracle. Moreover, once the
LP is solved, the randomized rounding step picks one of the flows per
commodity. This flexibility allows us to solve the LP and round even
when $\mathcal{F}_i$ is restricted in some fashion.  We outline
a few extensions that can be addressed via this framework.

\smallskip \noindent
\textbf{Integer flows:} Recall that in ANF we allow splittable
flows. However in some settings it is useful to have flow for each
commodity on each edge to be integer valued; here we assume that
$d_i$ is an integer for each $i$. In order to handle this we can
set $\mathcal{F}_i$ to be the set of all integer $s_i$-$t_i$ flows.
Now the min-cost flow routine needs to find an integer flow  between
$s_i$ and $t_i$ of $d_i$ units. This is easy to ensure since there
always exists an integer valued min-cost flow as long as the demands and the edge capacities are all
integer valued. We reduce
each $c(e)$ to $\lfloor c(e)\rfloor$ without loss of generality.

\smallskip \noindent
\textbf{Splitting into a small number of paths:} In some applications it is important that the flow for each pair is not split by too much. How do we quantify this? One way is to consider $h$-splittable flows where $h$ is an integer parameter. This means that flow for each pair can be decomposed into at most $h$ paths. When $h=1$ we obtain unsplittable flow and if we set $h=|E|$ we obtain ANF. 
We can handle the special case
where each of the $h$ flow paths has to be used to send the same
amount of flow which is $d_i/h$. For this purpose we define
$\mathcal{F}_i$ to be the set of all such flows.  To compute a
min-cost flow in $\mathcal{F}_i$ we simply need to find a min-cost
flow of $h$ units from $s_i$ to $t_i$ in the graph with capacities
adjusted as follows:  for each edge $e$ with capacity $c(e)$ we
change it to $\lfloor h c(e)/d_i \rfloor$. Baeier et al.\ \cite{BaierKS05}
considered this maximum throughput problem, however, they only considered uni-criteria approximation algorithms and provided a reduction to the unsplittable case; the approximation ratios that one can obtain without violating capacities are very poor while our focus here is on bicriteria approximation that achieve close to optimum throughput.

\smallskip \noindent
\textbf{Fault-tolerance and routing along disjoint paths:}
In some settings the flow for a pair $(s_i,t_i)$ needs to be
fault-tolerant to edge and/or node failures. There are several ways
this is handled in the networking literature. One common approach is
to send the flow for each commodity along $h$ disjoint paths, each
carrying $d_i$ units. This can be handled by an approach
very similar to the preceding paragraph where we compute min-cost
flow on $h$ disjoint paths; note that in the preceding paragraph the $h$ paths could share edges. Another approach to fault-tolerance is to use what are called $h$-route flows \cite{Kishimoto96,AggarwalO}. One can find a min-cost  $h$-route flow in polynomial time \cite{AggarwalO}. Hence, one can also use the framework to maximize throughput while each routed pair uses an $h$-route flow. 

\smallskip \noindent
\textbf{Using few edges or short paths:} We now
consider the setting when the flow for a commodity is required to use a
small number of edges or the flow has to be routed along paths with small
number of hops. These constraints not only arise in practice but also
help improve the theoretical bounds on congestion. One can show that
if each flow uses only $d$ edges then the bicriteria approximation can
be improved; the congestion required for a constant factor
approximation becomes $O(\log d/\log \log d)$ rather than
$O(\log m/\log \log m)$; for single paths the analysis can be seen from \cite{CCGK07} and we can generalize it to our setting. Suppose we wish to route flow for each
commodity whose support consists only of some given number $h$ of
edges. As above we need to solve for min-cost $s_i$-$t_i$ flow that
satisfies this extra constraint. However this additional constraint is
no longer so easy to solve and in some cases can be NP-Hard.  However,
if one allows for a constant factor relaxation for the number of edges
$h$, and an additional constant factor in the edge congestion one can
address this more complex constraint by using 
linear programming based ideas (see~\cite{ChekuriI18} for an example).

\section{Simulation Results}
\label{sec:sim}

In this section we study the performance of our approximation algorithms for the ANF problem on real-world networks. 
Our proof-of-concept computational evaluation is meant to provide general guidelines about the relative efficacy of the algorithms in terms of the {\em achieved throughput approximation factor $\alpha$} and the {\em edge capacity violation ratio $\beta$}.
The achieved throughput approximation ratio is taken as the solution obtained by the run divided by the optimal LP solution (which is a lower bound on the exact approximation ratio based on the optimal IP solution rather that its LP relaxation). Notably, due to the bi-criteria nature of our approximations with solutions being allowed to exceed edge capacities (by at most a factor of $\beta$), solutions may yield empirical throughput approximation factors of $\alpha > 1$.

Beyond analyzing the performance of our randomized rounding and derandomized algorithms, we also investigate the impact of varying the methodology by which the LP is solved. Specifically, we study the performance of solving the compact LP formulation directly, of solving the multiplicative weight update algorithm (MWU), and of solving the MWU-based Permutation Routing (PR) heuristic described below. While the runtime of our prototypical MWU implementation generally exceeds the runtime of solving the compact LP formulation using a commercial solver, our MWU implementation serves as a proof-of-concept of its practical applicability and will also enable the extensions outlined in Section~\ref{sec:extensions}, which depend on the packing formulation. In addition, we remark that MWU may be useful for larger networks in practice (larger than the ones considered here), as it does not suffer from the same space complexity limitations as  solving the compact LP via standard LP solvers.

Note that the simulation results for the current state-of-the-art algorithm for constant-throughput approximations for the ANF problem~\cite{Liu19} --- originally designed in to handle uniform demands, edge capacities and weights --- have been reproduced in this paper when running the randomized rounding algorithm with the compact edge-flow LP, since this algorithm is in essence the same as the algorithm in~\cite{Liu19}, adapted here to handle non-uniform demands, edge capacities and weights (in addition to some fine tuning optimizations). Our theoretical approximation results in this paper actually also validate the simulation results in~\cite{Liu19}, since the simulations in~\cite{Liu19} already suggested that the edge capacity violations incurred by randomized rounding based on the compact edge-flow LP were logarithmic (and not polynomial as the theoretical guarantees of~\cite{Liu19} suggested).

\subsection{Permutation Routing Heuristic}

Without proper optimization, the runtime of the MWU algorithm can be slow due to the computation of $k$ minimum-cost flows as a separate procedure in each iteration. 
As a practical solution, we introduce a heuristic based on Algorithm~\ref{alg:MWU} that provides a significant reduction in computational cost, while still yielding solutions comparable to those by MWU in practice. We refer to this as the {\em Permutation Routing (PR)} algorithm. In the following we outline how this new algorithm differs from the original MWU algorithm and we refer the reader to Algorithm~\ref{alg:permutationrouting} for the complete pseudocode description.

\begin{algorithm}[h!]
\caption{Permutation Routing Algorithm for the Multi-Commodity ANF Problem}

\begin{algorithmic}[1]
    \Statex{\textbf{Inputs:} $\gamma \in \mathbb{R}^{+}$, Directed Graph G(V,E), $c:E \to \mathbb{R}^{+}$, a set $S$ of $k$ pairs of commodities $(s_{i},t_{i})$ each with demand $d_{i}$, weight $w_i$, an estimate $Est$ of the optimal fractional ANF solution for $(G,S)$  }
    \State{Initialize an empty flow $f(e)\leftarrow 0,\forall e\in E$}
    \State{Set $\eta\leftarrow \frac{\ln{|E|}}{\gamma}$}
    \State{Set $r\leftarrow \frac{\ln{|E|}}{\gamma^2}$}
    \State{Let $f_{i,e}\leftarrow 0$ be the fractional flow assignment for commodity $i$ on edges $e$ for all $i\in S,e\in E$}
    \State{Define edge costs $\ell(e)=1,\forall e\in E$ }
    \State{Make $r$ copies the $k$ commodities of $S$ and let $A$ be a list of these $rk$ commodities}
    \State{Let $B$ be a random permutation of $A$ }
    \For{each commodity copy $j$ in $B$}
        \State{Let commodity $j$ in $B$ correspond to original demand $(s_i,t_i)$}
        \State{Compute min-cost flow of $d_i$ units from $s_i$ to $t_i$ with edge costs defined by $\ell$ and obtain flow assignment $f'$ and solution cost $\rho=\sum_{e\in E}\ell(e)f'(e)$} 
        \State{Compute $\tau=\sum_{e\in E}\ell(e)c(e)$}
        \If{$\frac{w_j}{\rho}\geq\frac{Est}{\tau}$ and none of the following updates cause an edge capacity violation}
            \For{ each edge $e\in E$}
                \State{Update $f(e)\leftarrow f(e)+\frac{f'(e)}{r}$}
                \State{Update $f_{i,e}\leftarrow f_{i,e}+\frac{1}{r}$} 
                \State{Update $\ell(e)\leftarrow\exp{\bigg(\frac{\eta\cdot f(e)}{c(e)}\bigg)}$}
            \EndFor
        \EndIf
        
    \EndFor
    \State{Return $f(e),f_{i,e}$ for all $i\in S, e\in E$}
\end{algorithmic}
\caption{Permutation Routing Algorithm}
\label{alg:permutationrouting}
\end{algorithm}

 Our algorithm is motivated by theoretical algorithms for maximum throughput packing problems in the \emph{online} arrival model and the \emph{random arrival order} models. It is known that for packing problems, in the random arrival model, one can obtain arbitrarily good performance compared to the offline optimal solution if the resource requirements of the arriving items (these correspond to flows in our setting) are sufficiently small when compared to the capacities \cite{AD14,GuptaM,Kesselheimetal}. The analytical ideas are related to online learning and MWU. %

We develop our heuristic as follows: Recall that we are seeking a fractional solution.
We take each commodity pair $i$ with demand $d_i$ and split it into $r$ ``copies," each with a demand of $d_i/r$. Here $r$ is a sufficiently large parameter to ensure the property that $d_i/r$ is ``small" compared to the capacities. From the MWU analysis, and also the analysis in random arrival order models, one sees that $r = \Omega(\ln m/\gamma^2)$ suffices. Given the $k$ original pairs, we create $k\cdot r$ total pairs from the copies. We now randomly permute these pairs and consider them by one-by-one.  When considering a pair, the algorithm evaluates the ``goodness'' of the pair in a fashion very similar to that of the MWU algorithm. It maintains a length for each edge that is exponential in its current loads, and computes a minimum cost flow for the current pair (note that the pair's demand is only a $1/r$ fraction of its original demand); it accepts this pair if the cost of the flow is favorable compared to an estimate of the optimum solution. If it accepts the pair, it routes its entire demand (which is the $1/r$'th fraction of the original demand). Otherwise this pair is rejected and never considered again. Thus the total number of minimum cost flow computations is $O(k\cdot r)$ when compared to $O(k\cdot m \cdot\log m/\gamma^2)$ in the MWU algorithm. As mentioned above, a worst-case theoretical analysis requires $r = \Omega(\log m/\gamma^2)$ to guarantee a $(1-\gamma)$-approximation, however, in practice a smaller value of $r$ can be chosen. Note that an original pair $(s_i,t_i)$ with demand $d_i$ is routed to a fraction $r_i/r$ where $r_i$ is the number of copies of $i$ that are admitted by the random permutation algorithm. The algorithm requires an estimate of the optimum solution which can be obtained via binary search or other methods.

\subsection{Methodology}
\label{sec:experimental-design}

We now describe the 
problem instances and the implementations of our approximation algorithms. 

\paragraph{\bf Problem Instances}
Following \cite{Liu19}, we study real-world networks together with corresponding real-world source-sink pairs obtained from the survivable network design library (SNDlib)~\cite{sndlib}.
We randomly perturb the uniform weights, demands and edge capacities of the chosen networks  to test our algorithms' ability to accommodate variable weights and demands on networks with varying edge capacities.
Due to this choice, we find that only a fraction of the given commodities can be concurrently satisfied.
Our choice of networks from the SNDlib is given in Table~\ref{table:networks}, covering several general scenarios, e.g.~a small network with  large number of commodities, or a dense network with large number of commodities. We chose independent uniform random edge capacities from 20 to 60, commodity demands from 25 to 75, and commodity weights from 1 to 10 (the benchmark SNDlib data has all edge capacities at 40, demands at 50, and weights at 1).

\begin{table}[bth]
\begin{tabular}{lllll}

\textbf{Network } & \textbf{Vertices} & \textbf{ Edges} & \textbf{ Commodities } & \textbf{General Description}                    \\ \hline
Atlanta               & 15                     & 44                  & 210                           & Small network, high commodity count \\
Germany50             & 50                     & 176                  & 662                           & Sparse network,  high commodity count \\
Di-yuan               & 11                     & 84                  & 22                            & Dense network,  low commodity count  \\
Dfn-gwin              & 11                     & 94                  & 110                           & Dense network,  high commodity count  
\end{tabular}
\caption{List of studied adapted instances from SNDlib~\cite{sndlib}}
\label{table:networks}
\vspace{-.3in}
\end{table}

\paragraph{\bf Algorithms}
We have implemented both the randomized and derandomized rounding algorithms detailed in Sections~\ref{sec:randomized-algorithm} and~\ref{sec:derandomization}. We solve the compact formulation via CPLEX V12.10.0 and approximately solve the packing LP via the MWU algorithm or the faster permutation routing heuristic.
We choose $\eps=\frac{1}{9}$ and $b=1.85$ in Algorithms~\ref{alg:randomrounding} and \ref{alg:deterministic-rounding}, implying a target throughput approximation factor of $\alpha \geq 1-\eps=\frac{8}{9}$ and
target edge capacity violation ratio of $ \beta \leq {3b\ln{m}}/{\ln{\ln{m}}}={5.55\ln{m}}/{\ln{\ln{m}}} $, where $m$ is the number of network edges, for the algorithms.
More specifically, for the Atlanta and Germany50 networks, we target edge capacity violations $\beta\leq 15.78$ and $17.47$, respectively. 

We define an experiment as the execution of a higher level algorithm (either randomized or derandomized rounding) in concert with an LP-solving subroutine (CPLEX for compact LP or our MWU and PR implementations) on a particular network. 
For an experiment that includes randomized rounding, we execute this algorithm 10 times to obtain a total of 10 different samples per experiment. 
For each of these 10 executions, 100 rounds of rounding are recorded and of these rounded solutions, we report on the solution of highest throughput whose capacity violations lie below our theoretical bounds.
We consider three different $\gamma$ values, namely $0.15$, $0.2$, and $0.3$, to study performance vs. runtime trade offs of the MWU algorithm and the PR heuristic. 
Due to the slow convergence of MWU, we introduce speed-up mechanisms where (i) during any iteration, if the post-update smallest mincost flow solution is not at least 50 percent larger than the pre-update smallest min-cost flow solution, then we do not recompute this in the subsequent iteration, and (ii) the maximum number of iterations is capped at 10k.
\begin{figure}[h!]
	\vspace{-.2in}
	\centering
	\includegraphics[width=\textwidth]{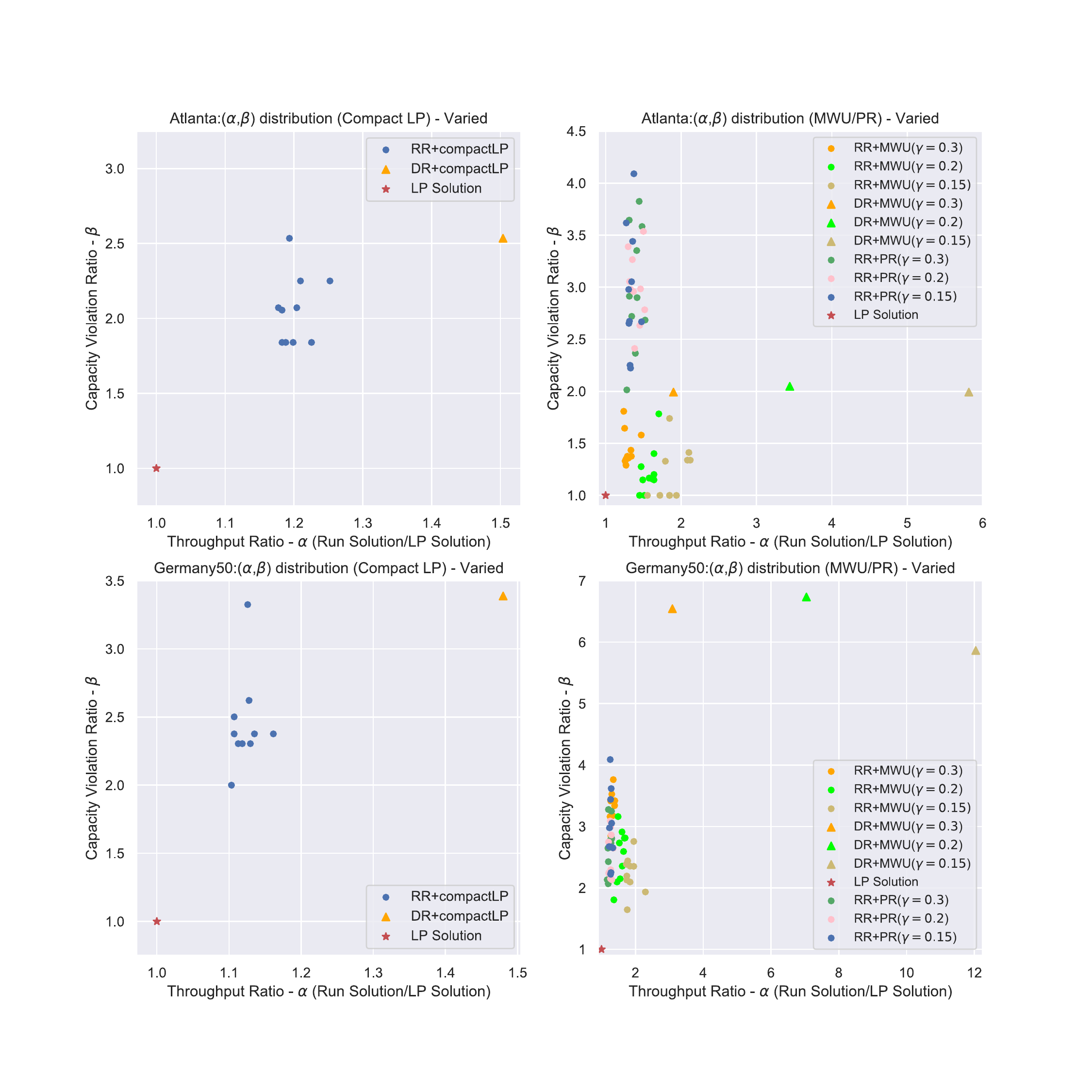}
	\label{fig:mainfigvaried}
	\centering
	\vspace{-.5in}
	\includegraphics[width=\textwidth]{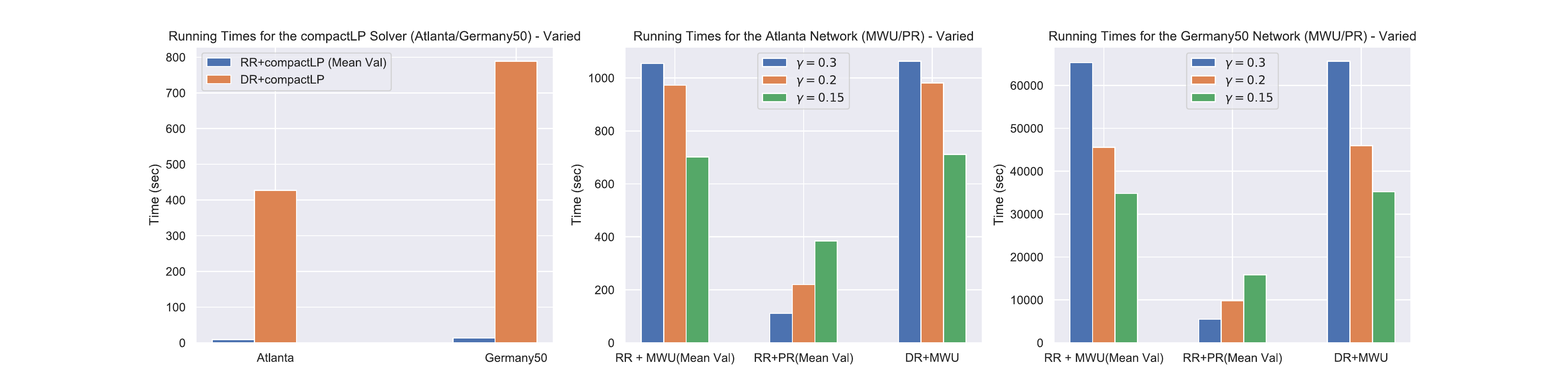}
	\caption{Experimental results  on the Atlanta (top) and Germany50 (middle) networks and their runtimes (bottom); RR refers to the randomized and DR to the derandomized rounding algorithm. Edge capacities vary uniformly at random from 20 to 60, commodity demands vary uniformly at random from 25 to 75, and commodity weights vary uniformly at random from 1 to 10. }
	\label{fig:runtimefig}
\end{figure}

\subsection{Experimental Results}

In this section we report on our computational results. 
We first focus our attention on the performance of the Atlanta and Germany50 networks 
but will then also discuss two smaller networks.
We report results in terms of the achieved throughput factor $\alpha$, edge capacity violation factor $\beta$, and the wall-clock running times. 

Our experiments are summarized visually in Figure~\ref{fig:runtimefig}, and we will refer to this figure for the remainder of this section. The qualitative plots at the top and the middle show the empirical throughput and edge capacity violation ratios obtained by executions of the various algorithms. Note that we report on 10 data points when applying randomized rounding in contrast to the single data point for the derandomized algorithm. For reference, we include a red star data point indicating the optimal
LP solution. 

We see for both the Atlanta and Germany50 networks that the compact LP combined with both the randomized and derandomized algorithms produces $\beta$ values that are well within our established theoretical bounds, although we see noticeably larger values of $\alpha$ and marginally larger values of $\beta$ when the derandomized rounding is employed.
The values of $\alpha$ and $\beta$ obtained from MWU are similarly concentrated around their means. Interestingly, in combination with the MWU algorithm, the deterministic rounding shows for Germany50 a significant increase in edge capacity violations while also achieving a much higher throughput. For the deterministic rounding of MWU, we observe that $\alpha$ increases as $\gamma$ decreases under roughly constant capacity violations.
With respect to the permutation routing subroutines, we observe more variance over the parameter space, and we typically see much higher capacity violation without a significant gain in throughput.

Regarding the runtimes, we remark that the solving the compact LP using CPLEX is in general significantly faster than our MWU or PR implementations. Furthermore, the randomized rounding algorithm significantly outperforms our efficient deterministic rounding algorithm. We believe this to be mainly due to our naive implementation of the pessimistic estimators, that does not cache intermediate results.

Regarding the performance of the permutation routing algorithm, we note the following: For both networks we see a runtime decrease of a factor of at least half compared to the vanilla MWU algorithm while slightly compromising on the generally less favorable higher capacity violations. 
For the MWU algorithm, we expect in general to see the runtime increase as $\gamma$ decreases, however, due to our speed-up mechanism, the opposite may be true.
This is attributed to the fact that with a smaller $\gamma$, the increase in flow is likewise smaller, and thus the updates to edge costs are smaller, implying that the threshold for skipping min-cost flow calculations is met more often.
Thus, runtime is reduced for smaller values of $\gamma$, though at the expense of worse throughput approximations. 

Figure~\ref{fig:runtimefiguniform} summarizes the results on networks Atlanta and Germany50 under the default uniform weights, demands and edge capacities given by \cite{sndlib}.
This figure also replicates experiments from \cite{Liu19}, though here we additionally test Algorithms~\ref{alg:MWU},\ref{alg:permutationrouting} in conjunction with Algorithms~\ref{alg:randomrounding},\ref{alg:deterministic-rounding}.
In Figure~\ref{fig:appendixfigvaried}, we present experimental results on two additional networks from \cite{sndlib}, namely DFN-GWIN and DI-YUAN.
In these experiments, we randomly perturb the weights, demands and edge capacities in the same manner given in Figure~\ref{fig:runtimefig}. 
Lastly, in Figure~\ref{fig:appendixfig}, we again present experimental results on DFN-GWIN and DI-YUAN, though we test uniform commodity weights, demands and edge capacities.

\begin{figure}[H]
    \vspace{-.2in}
	\centering
	\includegraphics[width=\textwidth]{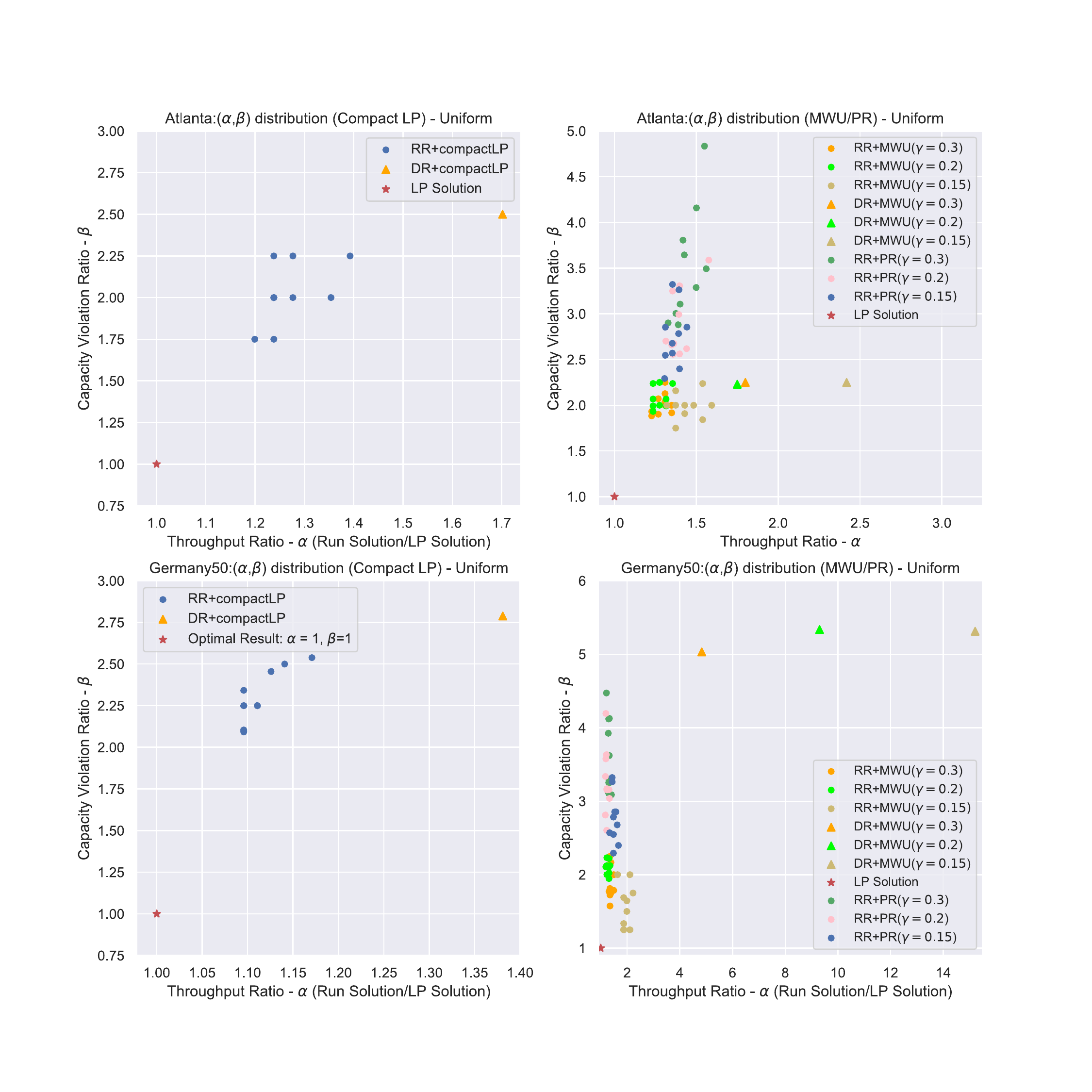}
	\label{fig:mainfig}
	\centering
	\includegraphics[width=\textwidth]{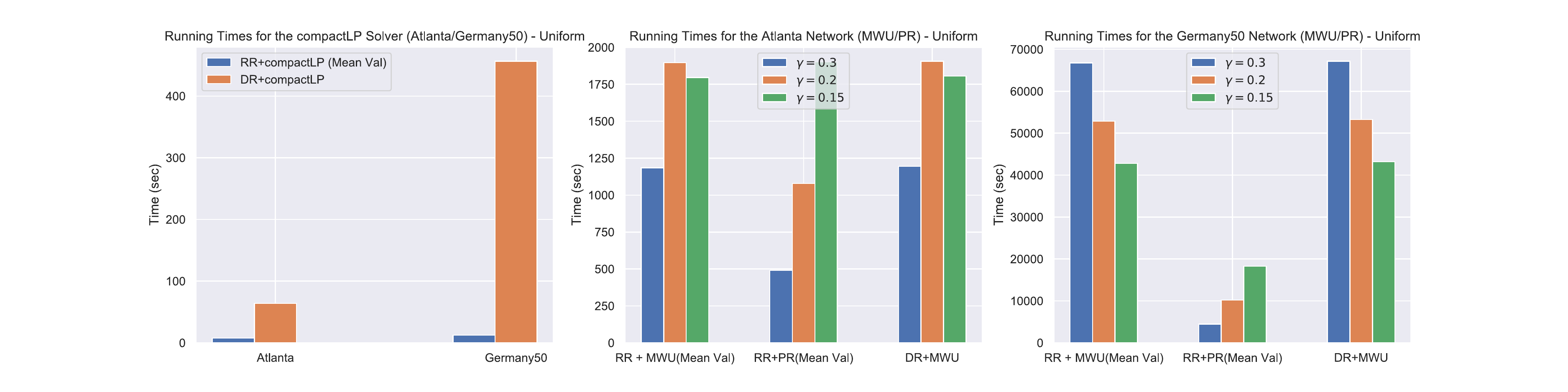}
	\caption{Experimental results  on the Atlanta (top) and Germany50 (middle) networks and their runtimes (bottom); RR refers to the randomized and DR to the derandomized rounding algorithm. Edge capacities, commodity weights and commodity demands are fixed to 40, 1 and 50, respectively per \cite{sndlib}.}
	\label{fig:runtimefiguniform}
\end{figure}

\begin{figure}[H]
    \vspace{-.2in}
	\centering
	\includegraphics[width=\textwidth]{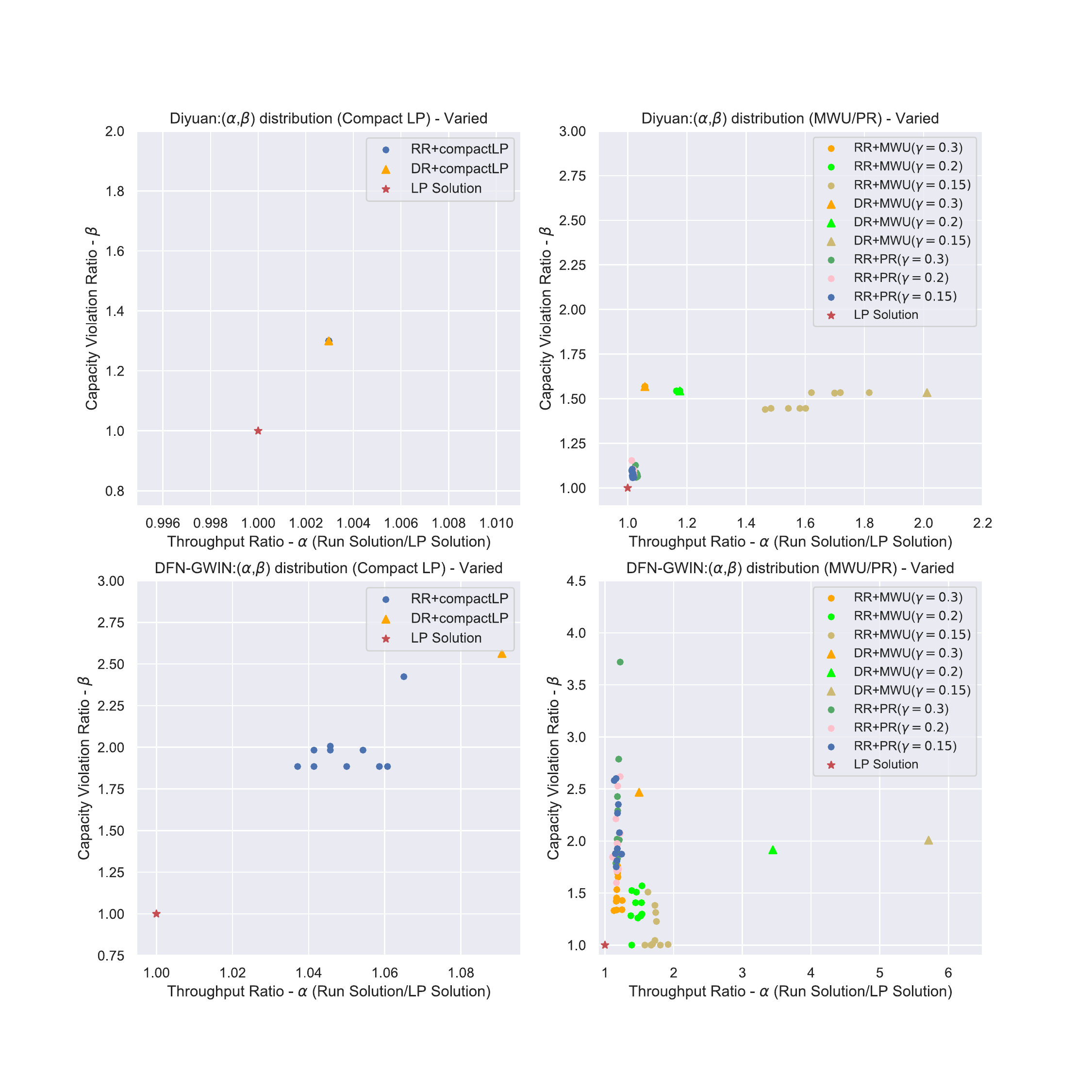}
	\centering
	\includegraphics[width=\textwidth]{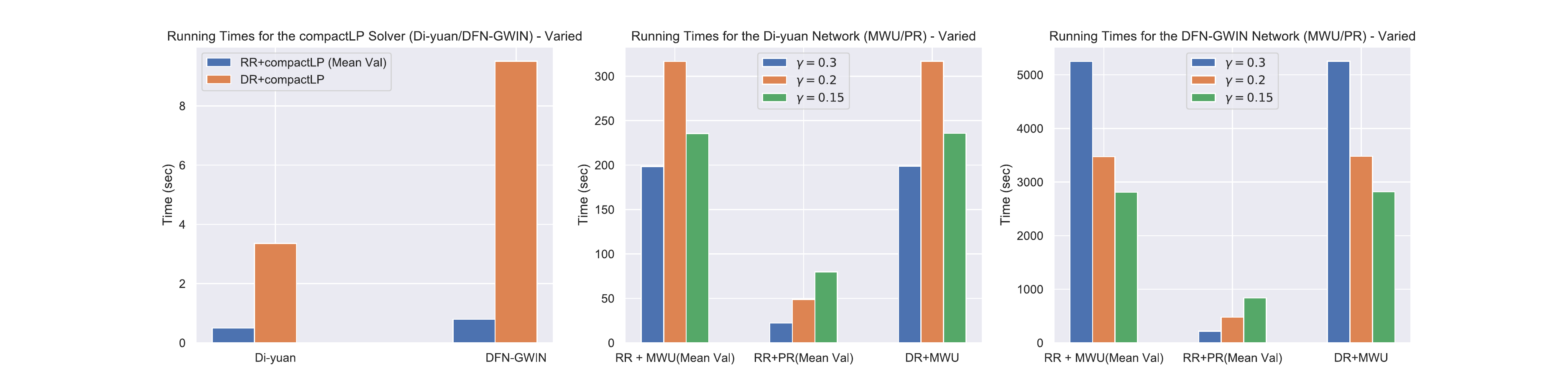}
    \caption{Experimental Results: $\alpha \times \beta$ plots and running times. Note that our theory maintains that $\beta\le 16.52$ and $\beta \le 16.66$ for the Di-yuan and DFN-GWIN networks, respectively. Edge capacities vary uniformly at random from 20 to 60, commodity demands vary uniformly at random from 25 to 75, and commodity weights vary uniformaly at random from 1 to 10.  }
    \label{fig:appendixfigvaried}
\end{figure}

\begin{figure}[H]
    \vspace{-.2in}
	\centering
	\includegraphics[width=\textwidth]{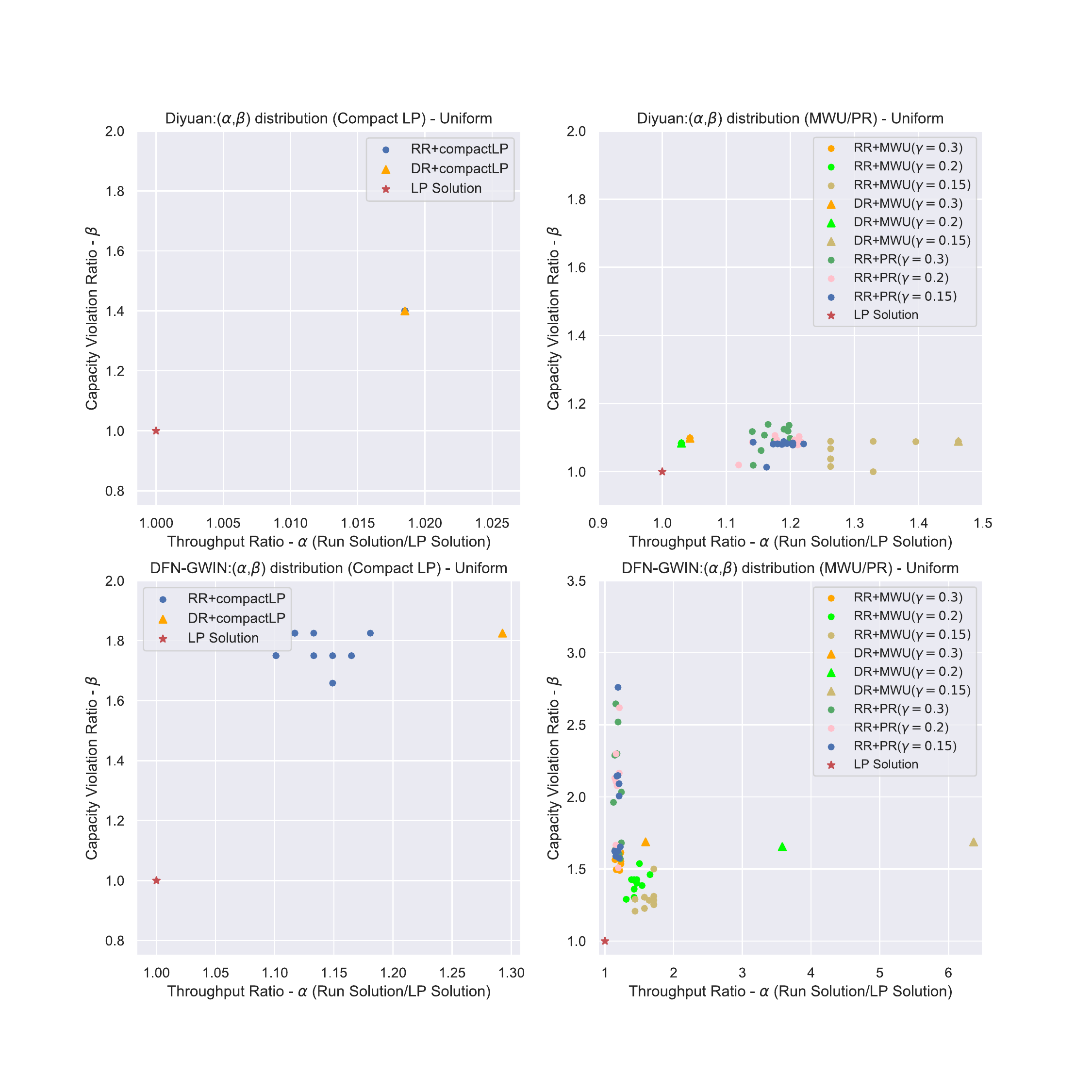}
	\centering
	\includegraphics[width=\textwidth]{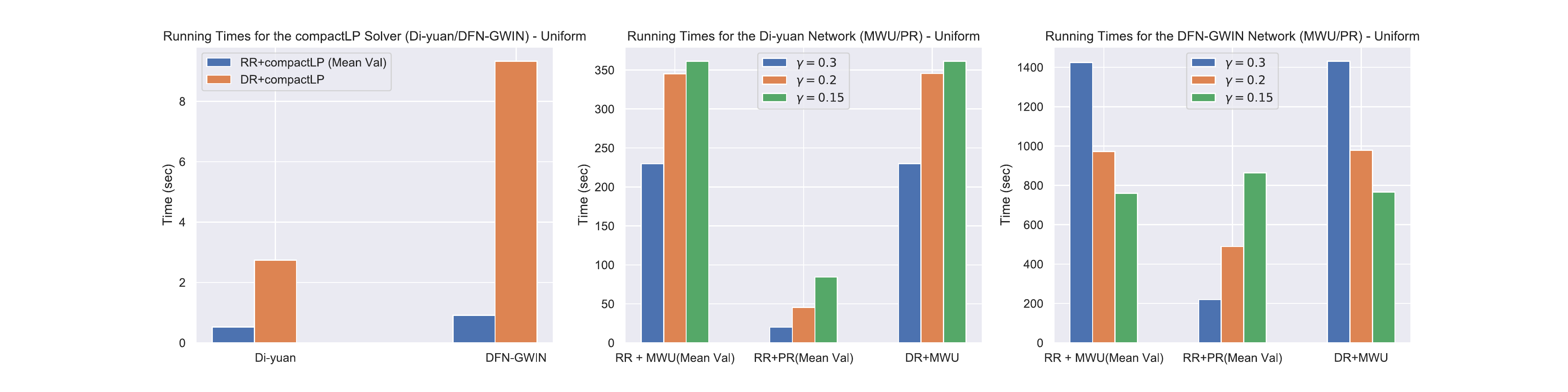}
    \caption{Experimental Results: $\alpha \times \beta$ plots and running times. Note that our theory maintains that $\beta\le 16.52$ and $\beta \le 16.66$ for the Di-yuan and DFN-GWIN networks, respectively. Network capacities, commodity weights and commodity demands are fixed to 40,1 and 50, respectively per \cite{sndlib}.}
    \label{fig:appendixfig}
\end{figure}

Concluding, we note that we see our results as a first step towards efficiently approximating the ANF and its potential extensions. While randomized rounding wins in terms of runtime, the deterministic rounding generally achieves slightly higher throughputs. Furthermore, while solving the compact LP is shown to be much quicker in practice, the proposed MWU algorithm will render tackling the problem extensions of Section~\ref{sec:extensions} tractable and our proposed permutation routing heuristic can in practice substantially reduce runtimes.
With respect to space, it is least efficient to solve the compact LP directly.
In fact, it may be impossible to do so if the network is large enough.
The MWU and Permutation Routing algorithms, on the other hand, rely on repeated and discarded computations of single commodity min-cost flow, whose corresponding LPs exhibit much fewer constraints than multicommodity ANF on the same network.

\section{Conclusion}
\label{sec:concl}

We presented a novel and significantly improved 
approximation of the maximum throughput routing problem
for all-or-nothing multiple commodities with arbitrary demands.
To this end, we derived formal bi-criteria  approximation bounds and 
presented a proof of concept on
efficient implementations of our algorithms in practice.
We also showed that our packing framework is very flexible and may hence
be of interest beyond the specific model considered in this paper,
and apply, e.g., to scenarios where flows should only be split
into a small number of paths or use few edges.

In future research, it would be interesting to develop improved rounding
approaches, e.g., using resampling ideas from the Lovasz-Local-Lemma, to explore the additional 
applications introduced by our packing framework, as well as to study  
opportunities for algorithm engineering, further improving the performance
of our algorithms in practice. 

\bibliographystyle{plainurl} %
\bibliography{ref,anf}

\appendix
\section{Alteration Approach}
\label{sec:alteration}
We describe an alternative rounding
approach to the one presented in Section~\ref{sec:anf-packing} that has some advantages in certain settings and gives a
better tradeoff in terms of repetitions.  The algorithm consists of
two phases. In the first phase, for each pair $i$ we randomly pick at
most one flow as described above (note that we may not pick any flow
in which case we think of it as an empty flow). For pair $i$, let $g_i$
be the chosen flow.  The second phase is an alteration phase to ensure
that the constraints are not violated more than the desired amount.
We order the pairs arbitrarily (without loss of generality from $1$ to
$k$) and consider them one by one. When considering pair $i$ we try to
route $i$ via flow $g_i$ if it is not empty.  If adding the $g_i$ to
the already chosen flows for pairs $1$ to $i-1$ does not violate any
edge capacity by more than a factor $(1 + 3b \log m/\log \log m)$, we add $i$ to
the routed pairs, otherwise we discard pair $i$. Note that a pair $i$
that was chosen in the first phase may get discarded in this second
\emph{alteration} step. Let $S$ be the random set of routed
pairs. From the construction it is clear that we can route pairs in
$S$ without violating any edge's capacity by a factor larger than $(1 +
3b \log m/\log \log m)$. Note that unlike the basic randomized
rounding algorithm we have a deterministic guarantee on this property.
Now we lower bound the expected weight of $S$.

\begin{lemma}
  Let $S$ be set of pairs routed at the end of the alteration phase.
  Then $E[w(S)] \ge (1-1/m^{\Omega(b)}) \lpval$.
  Moreover, if $y$ is a fractional solution sucht that $\lpval \ge c \lpopt$,
  for some $c \le 1$ then with probability at least $\frac{\eps -
    m^{-\Omega(b)}}{c(1+3b \ln m/\ln m)}$, $w(S) \ge (1-\eps) \lpval$.
\end{lemma}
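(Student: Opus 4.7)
The plan is to prove the two bounds by (i) coupling the altered set $S$ with the first-phase set $T$ through the event that no edge is overloaded after the first phase, and (ii) upper-bounding $w(S)$ deterministically using the congestion guarantee of the alteration and then applying a reverse Markov (Paley--Zygmund--style) inequality.

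First I would introduce the event $A = \{X_e \le (3b \ln m/\ln \ln m)\, c(e) \text{ for every } e \in E\}$, where $X_e$ is the total first-phase load on edge $e$. Lemma~\ref{lem:overflow-prob} gives $\Pr[\bar A] \le 1/m^{\Omega(b)}$ by union bound. The key observation is that $A$ implies $S = T$: when considering pair $i$ with chosen flow $g_i$, the surviving load $F_e^{(i-1)}$ is bounded by the first-phase cumulative load $X_e^{(i-1)} \le X_e$, and adding $g_i(e) \le c(e)$ yields $F_e^{(i-1)} + g_i(e) \le (3b\ln m/\ln\ln m)\, c(e) + c(e) \le (1 + 3b\ln m/\ln\ln m)\, c(e)$, so the alteration does not discard $i$. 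Thus $w(S) \ge w(T)\,\mathbf{1}_A$ pointwise.

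To convert this into an expectation bound I would write
\[
E[w(S)] \;\ge\; E[w(T)] - E[w(T)\,\mathbf{1}_{\bar A}] \;\ge\; \lpval - \sqrt{E[w(T)^2]\,\Pr[\bar A]}
\]
via Cauchy--Schwarz. Letting $Y_i$ be the indicator of $i\in T$, the independence of the first-phase choices yields $E[w(T)^2] = \lpval^2 + \sum_i w_i^2(E[Y_i] - E[Y_i]^2) \le \lpval^2 + \wmax \lpval \le 2\lpval^2$, where the last inequality uses the standing assumption $\lpval \ge \wmax$ already invoked in Lemma~\ref{lem:random-weight}. Combining these gives $E[w(S)] \ge (1 - 1/m^{\Omega(b)})\,\lpval$, as required.

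For the second claim I would first establish a deterministic upper bound on $w(S)$. Because the alteration guarantees congestion at most $1 + 3b\ln m/\ln\ln m$, scaling each routed flow by $1/(1+3b\ln m/\ln\ln m)$ produces a feasible solution to the packing LP of Figure~\ref{fig:lp-flow}(b) with value $w(S)/(1+3b\ln m/\ln\ln m)$. Hence $w(S) \le (1+3b\ln m/\ln\ln m)\,\lpopt$, and using $\lpopt \le \lpval/c$ yields the deterministic upper bound $M := (1+3b\ln m/\ln\ln m)\,\lpval/c$. Applying the reverse Markov inequality to $W=w(S) \in [0,M]$ with target $(1-\eps)\lpval$ gives
\[
\Pr[W \ge (1-\eps)\lpval] \;\ge\; \frac{E[W] - (1-\eps)\lpval}{M} \;\ge\; \frac{(\eps - 1/m^{\Omega(b)})\,\lpval}{M} \;=\; \frac{c\,(\eps - 1/m^{\Omega(b)})}{1+3b\ln m/\ln\ln m},
\]
matching the claimed form of the bound (I note that my derivation places $c$ in the numerator, which I suspect is the intended expression).

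The main obstacle is the first part: a naive estimate $E[w(T)\mathbf{1}_{\bar A}] \le (\sum_i w_i)\,\Pr[\bar A]$ is too weak since $\sum_i w_i$ need not be comparable to $\lpval$. The Cauchy--Schwarz step combined with the variance estimate $E[w(T)^2]\le 2\lpval^2$ is the decisive trick, as it lets the exponentially small $\Pr[\bar A]$ multiply against $\lpval$ itself rather than against an uninformative worst-case weight. Everything else is bookkeeping around the alteration's deterministic congestion guarantee and a single application of reverse Markov.
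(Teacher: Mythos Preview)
Your proof is correct, but the first part takes a different route from the paper's. The paper argues \emph{per commodity}: it writes $\Pr[Z_i=1]=\Pr[Y_i=1]\bigl(1-\Pr[Z_i=0\mid Y_i=1]\bigr)$ and observes that, conditioned on $Y_i=1$, rejection of pair $i$ in the alteration phase implies that the first-phase load from pairs $1,\ldots,i-1$ on some edge already exceeds $(3b\ln m/\ln\ln m)c(e)$. Since those loads are \emph{independent} of $Y_i$, the conditional probability is bounded directly by Lemma~\ref{lem:overflow-prob}, giving $\Pr[Z_i=1]\ge(1-1/m^{\Omega(b)})\Pr[Y_i=1]$ and hence $E[w(S)]\ge(1-1/m^{\Omega(b)})\lpval$ by summing. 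Your global Cauchy--Schwarz argument is a legitimate alternative, but it (i) costs a square root in the exponent (harmless here since $\Omega(b)/2=\Omega(b)$), and (ii) needs the side assumption $\lpval\ge\wmax$ to bound $E[w(T)^2]\le 2\lpval^2$, which the paper's per-commodity proof does not require. The paper's independence observation is the more elementary trick.

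For the second part your argument and the paper's are essentially identical: both use the deterministic bound $w(S)\le(1+3b\ln m/\ln\ln m)\lpopt$ obtained by scaling the alteration output to a feasible LP solution, followed by a reverse-Markov inequality. You are right that the derivation places $c$ in the numerator; the paper's own calculation also yields $(1-\alpha)\ge c(\eps-m^{-\Omega(b)})/(1+3b\ln m/\ln\ln m)$, so the placement of $c$ in the denominator (and the ``$\ln m/\ln m$'') in the lemma statement are typos.
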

\begin{proof}
  Consider a pair $i$. Let $Y_i$ be a binary random variable that is
  $1$ if  a non-empty flow is chosen in the random rounding
  stage. Let $Z_i$ be binary random variable that is $1$ if $i \in S$,
  that is, if $i$ is routed after the alteration phase.  We have
  $w(S) = \sum_i w_i Z_i$ and hence by linearity of expectation we
  have $E[w(S)] = \sum_i w_i \Pr[Z_i = 1]$.

  We now lower bound $\Pr[Z_i=1]$. We observe that
  $\Pr[Y_i = 1] = \sum_{f \in \mathcal{F}_i} y(f)$ by the random
  choice in the first step. We have
  $\Pr[Z_i=1] = \Pr[Y_i=1] (1 - \Pr[Z_i = 0\mid Y_i =1])$.
  The quantity
  $\Pr[Z_i = 0\mid Y_i = 1]$ is the probability that pair $i$ is
  rejected in second phase of the algorithm conditioned on the event
  that it is chosen in the first stage. Pair $i$ is rejected
  \emph{only if} there is some edge $e$ such that the total flow on
  $e$ is more than $(3b \log m/\log \log m) c(e)$ from the flows chosen
  in the first step of the algorithm; from
  Lemma~\ref{lem:overflow-prob} this probability is at most
  $1/m^{\Omega(b)}$. Thus $\Pr[Z_i = 0 \mid Y_i =  1] \le
  1/m^{\Omega(b)}$
  and hence $\Pr[Z_i = 1] \ge (1-1/m^{\Omega(b)})\sum_{f \in
    \mathcal{F}_i} y(f)$.
  Since $E[w(S)] = \sum_i w_i \Pr[Z_i = 1]$, we see that
  $E[w(S)] \ge (1-1/m^{\Omega(b)}) \lpval$.

  We now argue the second part. Let $\lpopt$ denote the value of an
  optimum solution to the LP relaxation. Consider the random variable
  $w(S)$.  We claim that $w(S) \le (1 + 3b \log m/\log \log m) \lpopt$
  deterministically. To see this recall that $S$ admits a routing that
  satisfies the capacity constraints to within a congestion of
  $(1 + 3b \log m/\log \log m)$. Therefore, by scaling down the routing
  of each demand in $S$ by a $(1 + 3b \log m/\log \log m)$ factor, one
  obtains a feasible fractional solution to the LP relaxation whose
  value is at least $w(S)/(1 + 3b \log m/\log \log m)$. This implies
  that $\lpopt \ge w(S)/(1 + 3b \log m/\log \log m)$, which proves the
  claim. Thus, if $\lpval \ge c \lpopt$ we have
  $E[w(S)] \ge (1-1/m^{\Omega(b)}) \lpval \ge c (1-1/m^{\Omega(b)})\lpopt$,
  and $w(S) \le (1+3b \log
  m/\log \log m) \lpopt$. Let $\alpha$ be the probability that
  $w(S) < (1-\eps)  \lpval$. Then we have the following.

  \begin{eqnarray*}
    E[w(S)] & \le & (1-\alpha) (1+3b \log m/\ln \ln m) \lpopt  + \alpha (1-\eps) \lpval  \\
    & \le & (1-\alpha) (1+3b \log m/\ln \ln m) \lpval/c + (1-\eps) \lpval
  \end{eqnarray*}
  However $E[w(S)] \ge (1-1/m^{\Omega(b)})\lpval$.   Rearranging and simplifying we have
  $$(1-\alpha) \ge c (\eps - m^{-\Omega(b)})/(1+3b \ln m/\ln \ln m).$$
  Note that $(1-\alpha)$ is the probability that $w(S) \ge (1-\eps)\lpval$. This finishes
  the proof.
\end{proof}

From the preceding lemma we see that the alteration algorithm
guarantees deterministically that the congestion is $O(b \ln m/\ln \ln
m)$ while the expected weight of the commodities routed by it is very
close to $\lpval$.  In fact with probability roughly
$\Omega(\eps/(1 + b \ln m))$ the value of the routed pairs is at least
$(1-\eps)\lpval$ assuming that $\lpval$ is a constant factor of $\lpopt$.
In most settings we would want to start with a fractional solution that is
very close to $\lpopt$ and hence the assumption is reasonable.
To guarantee a high probability bound\footnote{We say
that an event occurs with high probability, if it occurs with
probability at least $1-1/x^c$, where $x$ is the size of the input and
$c>0$ is a constant.} on achieving at least $(1-\eps)\lpval \geq
(1-\eps)\ipopt$, it suffices to repeat the algorithm $O(\ln m/\eps^2)$
times.

\section{Proofs of Chernoff Concentration Bounds}
\label{sec:concentration-bounds}

\allowdisplaybreaks

In this appendix we prove extended versions of the classic Chernoff bounds presented in Theorem~\ref{chernoff} to derandomize our approximation algorithm. The specific Chernoff extensions can be found in Appendix~\ref{app:chernoff-bound-proofs}, we first state some common convexity arguments below.

\subsection{Some Convexity Arguments}

\label{app:convexity-arguments}

\begin{lemma}
\label{app:lem:convexity-a}
The following holds for any $\param \in \mathbb{R}$ and $x \in [0,1]$:
\[
\exp(\param \cdot x) \leq 1 + (\exp(\param)-1)
\]
\end{lemma}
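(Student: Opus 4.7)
The plan must begin by reckoning with the fact that the right-hand side of the displayed inequality, $1 + (\exp(\param)-1)$, simplifies to $\exp(\param)$, so as literally written the claim reduces to $\exp(\param \cdot x) \leq \exp(\param)$ for all $\param \in \mathbb{R}$ and $x \in [0,1]$. This literal version is only true for $\param \geq 0$: then $\param \cdot x \leq \param$ since $x \leq 1$, and monotonicity of $\exp$ gives the bound immediately. For $\param < 0$ the inequality reverses, since $\param \cdot x \geq \param$; so, for example, taking $x = 0$ yields $\exp(0) = 1 > \exp(\param)$. The literal statement therefore cannot hold in the stated generality, and an $x$ is almost certainly missing from the final term --- the intended claim being $\exp(\param \cdot x) \leq 1 + (\exp(\param)-1)\cdot x$, the standard chord-over-graph bound for a convex function.

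My plan is to prove the corrected form and, separately, to dispatch the literal form on the half-line where it survives. For the corrected form, the approach is a one-line appeal to convexity of the exponential. Writing the argument as the convex combination $\param \cdot x = (1-x)\cdot 0 + x\cdot \param$, convexity of $\exp$ gives
\[
\exp(\param \cdot x) \;=\; \exp\bigl((1-x)\cdot 0 + x\cdot \param\bigr) \;\leq\; (1-x)\exp(0) + x\exp(\param) \;=\; 1 + (\exp(\param)-1)\cdot x,
\]
uniformly in $\param \in \mathbb{R}$ and $x \in [0,1]$, with no case split. For the literal statement restricted to $\param \geq 0$, the argument is even shorter: $\param \cdot x \leq \param$ since $x \in [0,1]$, so $\exp(\param \cdot x) \leq \exp(\param) = 1 + (\exp(\param)-1)$ by monotonicity of $\exp$.

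The only genuine obstacle here is interpretational rather than mathematical: deciding which form of the inequality to endorse. Since the subsequent Chernoff-type derivations in Appendix~\ref{app:chernoff-bound-proofs} will invoke this lemma to bound moment generating functions of Bernoulli-like summands (where the $x$-linear upper bound is exactly what allows $\Expec[\exp(\param X)]$ to be controlled by $\Expec[X]$), I would recommend correcting the statement by restoring the factor $x$ on the right-hand side and then using the convexity proof above. No further calculation is required, and the corrected inequality is the one that will plug into the Chernoff-bound proofs without loss.
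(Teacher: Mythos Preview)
Your analysis is correct: the stated inequality is a typo (the factor $x$ is missing on the right-hand side), and your convexity proof of the corrected form $\exp(\param x) \leq 1 + (\exp(\param)-1)x$ via the chord-over-graph argument is exactly the intended one --- the paper omits the proof but places the lemma under the heading ``Some Convexity Arguments,'' and the corrected form is precisely what is chained with Lemma~\ref{app:lem:convexity-b} to obtain Lemma~\ref{app:lem:convexity-rv}. Your diagnosis of where the literal statement fails ($\param < 0$) and your recommendation to restore the missing $x$ are both on target.
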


\begin{lemma}
\label{app:lem:convexity-b}
The following holds for any $\param \in \mathbb{R}$ and $x \in [0,1]$:
\[
1 + (\exp(\param)-1) \cdot x \leq \exp ((\exp(\param)-1)\cdot x)
\]
\end{lemma}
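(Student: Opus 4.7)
The plan is to recognize this as a direct instance of the classical inequality $1+y \leq e^y$, valid for all real $y$. Setting $y = (\exp(\param)-1)\cdot x$ immediately turns the claim into $1+y \leq \exp(y)$, so the whole lemma reduces to proving this one-variable inequality. Note that no restriction on the sign of $y$ is needed: since $\param$ is an arbitrary real, $\exp(\param)-1$ may be negative (when $\param<0$), in which case $y \leq 0$; when $\param \geq 0$ and $x \in [0,1]$, we have $y \geq 0$. So I must justify $1+y \leq e^y$ on all of $\mathbb{R}$, not just the nonnegative reals.

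First I would state the reduction explicitly: substitute $y := (\exp(\param)-1) \cdot x$, so the claim becomes $1+y \leq \exp(y)$. Then I would prove this auxiliary inequality by a standard calculus argument, defining $g(y) := \exp(y) - 1 - y$, computing $g'(y) = \exp(y) - 1$, and observing that $g'(y) < 0$ for $y<0$, $g'(0)=0$, and $g'(y)>0$ for $y>0$. Hence $g$ attains its global minimum at $y=0$ with $g(0)=0$, and therefore $g(y) \geq 0$ for every $y \in \mathbb{R}$, which is exactly $1+y \leq \exp(y)$.

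Finally I would substitute back to conclude
\[
1 + (\exp(\param)-1)\cdot x \;\leq\; \exp\bigl((\exp(\param)-1)\cdot x\bigr),
\]
completing the proof. There is no real obstacle here: the only thing to be careful about is not accidentally assuming $y \geq 0$, since $\param$ ranges over all of $\mathbb{R}$ and $\exp(\param)-1$ can be negative. The convexity of $\exp$ (or equivalently the Taylor remainder argument above) handles both signs uniformly.
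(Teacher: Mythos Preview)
Your proof is correct and is exactly the standard argument one would expect: the lemma is nothing more than the elementary inequality $1+y \le e^y$ applied with $y=(\exp(\param)-1)x$, and your calculus verification of that inequality covers all real $y$ as required. The paper states this lemma without proof (as one of several routine ``convexity arguments''), so there is no alternative approach to compare against; your write-up is appropriate.
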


\begin{lemma}
\label{app:lem:convexity-rv}
Let $\rvX \in [0,1]$ denote a single random variable of expectation $\expX = \Expec[\rvX]$. For any $\param \in \mathbb{R}$ the following holds for the random variable $\rvY = \exp(\param \cdot \rvX)$:
\begin{align}
\Expec[\rvY] \leq \exp((\exp(\param)-1) \cdot \mu_i) \label{app:eq:convexity-rv}
\end{align}
\end{lemma}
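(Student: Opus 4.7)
The plan is to chain together the two preceding convexity lemmas, applied first pointwise and then after taking expectations. The key insight is that Lemma~\ref{app:lem:convexity-a} provides a linear upper bound for the convex function $\exp(\theta x)$ on the interval $[0,1]$ (using convexity of the exponential together with the fact that a convex function lies below its chord on a bounded interval), while Lemma~\ref{app:lem:convexity-b} provides an exponential upper bound for the linear expression $1+(\exp(\theta)-1)x$ (essentially the inequality $1+y \leq e^y$).

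First, since $X \in [0,1]$ almost surely, I would apply Lemma~\ref{app:lem:convexity-a} pointwise to obtain
\[
\exp(\theta \cdot X) \leq 1 + (\exp(\theta)-1) \cdot X
\]
as a pointwise inequality between random variables. Taking expectations on both sides and using linearity of expectation together with $E[X] = \mu$ yields
\[
E[Y] = E[\exp(\theta \cdot X)] \leq 1 + (\exp(\theta)-1) \cdot \mu.
\]

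Next, I would apply Lemma~\ref{app:lem:convexity-b} with $x = \mu \in [0,1]$ (which is valid since $X \in [0,1]$ forces $\mu \in [0,1]$) to the right-hand side above, obtaining
\[
1 + (\exp(\theta)-1) \cdot \mu \leq \exp\bigl((\exp(\theta)-1) \cdot \mu\bigr).
\]
Chaining the two inequalities gives the desired bound $E[Y] \leq \exp((\exp(\theta)-1)\mu)$.

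There is no real obstacle here: the entire proof is an immediate two-line combination of the two convexity lemmas stated just above. The only minor subtlety is noting that $X \in [0,1]$ implies $\mu \in [0,1]$, so that Lemma~\ref{app:lem:convexity-b} is indeed applicable with $x = \mu$. Everything else is routine.
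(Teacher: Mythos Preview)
Your proposal is correct and matches the intended approach: the paper omits the proof of this lemma, but the placement of Lemmas~\ref{app:lem:convexity-a} and~\ref{app:lem:convexity-b} immediately before it makes clear that the argument is precisely the two-step chaining you describe. The only caveat is that Lemma~\ref{app:lem:convexity-a} as printed in the paper appears to be missing a factor of $x$ on the right-hand side (it should read $1+(\exp(\theta)-1)\cdot x$), which you have correctly inferred.
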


\begin{lemma}
	\label{app:lem:chernoff-4-helper}
	The following inequality holds for any $x \in [0,1]$:
	\begin{align}
		(1+x)\ln(1+x)  -x \geq x^2/3\,. \label{app:eq:chernoff-4-helper}
	\end{align}
\end{lemma}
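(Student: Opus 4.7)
The plan is to prove the inequality $(1+x)\ln(1+x) - x \geq x^2/3$ on $[0,1]$ by a direct calculus argument on the auxiliary function
\[
g(x) \;=\; (1+x)\ln(1+x) \;-\; x \;-\; x^2/3.
\]
The goal is to show $g(x) \geq 0$ for $x \in [0,1]$. First I would observe that $g(0) = 0$, so it suffices to show that $g$ is nondecreasing on $[0,1]$, equivalently that $g'(x) \geq 0$ on this interval.

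Differentiating twice,
\[
g'(x) \;=\; \ln(1+x) - \tfrac{2x}{3}, \qquad g''(x) \;=\; \tfrac{1}{1+x} - \tfrac{2}{3}.
\]
Note that $g'(0) = 0$, and that $g''(x) \geq 0$ on $[0, 1/2]$ while $g''(x) \leq 0$ on $[1/2, 1]$. So $g'$ is increasing on $[0,1/2]$, starting from $g'(0) = 0$, hence $g' \geq 0$ on $[0,1/2]$. On $[1/2,1]$, $g'$ is decreasing, so its minimum there is $g'(1) = \ln 2 - 2/3$. The only nontrivial numerical fact required is that $\ln 2 > 2/3$, which follows from the elementary bound $\ln 2 = \int_1^2 dt/t \geq 2/3$ using, e.g., the trapezoidal estimate $\int_1^2 dt/t \geq \tfrac{1}{2}(1 + \tfrac{1}{2}) + \tfrac{1}{2} \cdot 0 \cdot \ldots$ or simply the standard inequality $\ln 2 \approx 0.693 > 0.667$. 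Thus $g'(x) \geq 0$ throughout $[0,1]$, giving $g(x) \geq g(0) = 0$ as desired.

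An alternative route, which I would mention as a sanity check, is to use the Taylor expansion
\[
(1+x)\ln(1+x) - x \;=\; \sum_{m=2}^{\infty} \frac{(-1)^m x^m}{m(m-1)} \;=\; \frac{x^2}{2} - \frac{x^3}{6} + \frac{x^4}{12} - \cdots.
\]
For $x \in [0,1]$ the terms $x^m/(m(m-1))$ are nonincreasing in $m$, so the alternating series lower bound gives $(1+x)\ln(1+x) - x \geq x^2/2 - x^3/6 = x^2(3-x)/6 \geq x^2/3$, since $3 - x \geq 2$ on $[0,1]$.

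The only mild obstacle is the numerical check $\ln 2 > 2/3$ (or, in the Taylor route, justifying that the alternating series bound is valid because the terms $x^m/(m(m-1))$ are monotonically decreasing for $x \leq 1$, which amounts to $x \leq (m+1)/(m-1)$ and is immediate). Neither is substantive; both approaches give a short, self-contained proof.
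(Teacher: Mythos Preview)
Your proof is correct. The paper states this lemma without proof, treating it as a routine convexity fact, so there is nothing substantive to compare against; your calculus argument (showing $g'(x)=\ln(1+x)-2x/3\ge 0$ on $[0,1]$ via the sign change of $g''$ at $x=1/2$ and the endpoint check $g'(1)=\ln 2-2/3>0$) is exactly the kind of standard verification one would expect here. One small remark: your trapezoidal-rule aside is garbled and in fact the trapezoidal rule overestimates $\int_1^2 dt/t$ since $1/t$ is convex; the clean way to get $\ln 2>2/3$ by integration is the midpoint rule, $\int_1^2 dt/t > 1\cdot\frac{1}{3/2}=2/3$, or simply to cite the numerical value as you also do. Your alternating-series route is likewise valid and arguably the tidier of the two.
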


\begin{lemma}
\label{app:lem:chernoff-3-helper}
The following inequality holds for any $x \in [0,1)$:
\begin{align}
 -x -(1-x)\ln(1-x) \leq -x^2/2\,. \label{app:eq:chernoff-3-helper}
\end{align}
\end{lemma}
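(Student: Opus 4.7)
My plan is to give a short analytic proof by Taylor expansion of $\ln(1-x)$, which is the most transparent route and avoids any case analysis. I would also keep a purely calculus-based backup in mind in case a referee prefers that style.

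The first step is to expand $\ln(1-x) = -\sum_{k\ge 1} x^k/k$, which converges absolutely on $[0,1)$, and to rewrite the product $(1-x)\ln(1-x)$ as $\ln(1-x) - x\ln(1-x)$. After substituting the series for $\ln(1-x)$ in both terms, re-indexing the shifted sum, and combining coefficients of equal powers of $x$, the plan is to obtain the clean identity
\[
(1-x)\ln(1-x) \;=\; -x \,+\, \sum_{k\ge 2}\frac{x^k}{k(k-1)},
\]
using the telescoping identity $\frac{1}{k-1}-\frac{1}{k}=\frac{1}{k(k-1)}$.

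The second step is to substitute this identity into the left-hand side of the inequality to get the closed form
\[
-x \,-\, (1-x)\ln(1-x) \;=\; -\sum_{k\ge 2} \frac{x^k}{k(k-1)}.
\]
Because every term in the sum on the right is non-negative for $x\in[0,1)$ and because the $k=2$ term is exactly $x^2/2$, dropping all the $k\ge 3$ terms only makes the value larger, yielding the claimed bound $-x-(1-x)\ln(1-x)\le -x^2/2$.

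I do not anticipate any serious obstacle: the only slightly delicate point is the manipulation of the series, but both series involved are absolutely convergent on $[0,1)$, so the rearrangement is fully justified. As a sanity check, I would verify the boundary behavior at $x=0$ (both sides vanish) and note that the inequality degenerates to equality only at $x=0$. If a more elementary proof is preferred, an equivalent calculus argument works: setting $f(x):=-x-(1-x)\ln(1-x)+x^2/2$, one computes $f(0)=0$, $f'(x)=\ln(1-x)+x$ with $f'(0)=0$, and $f''(x)=-x/(1-x)\le 0$ on $[0,1)$, so $f'$ is non-increasing and hence $f'\le 0$, so $f$ is non-increasing from $0$ and therefore $f\le 0$ throughout $[0,1)$.
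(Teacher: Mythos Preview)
Your proof is correct; both the Taylor-series argument and the calculus backup are valid and cleanly executed. The paper itself does not include a proof of this lemma---it is stated as a standard convexity helper and left without proof---so there is nothing to compare against beyond noting that your approaches are the expected ones for such an inequality.
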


\begin{lemma}
	\label{app:lem:lnlnln}
	
	The following inequality holds for any $x > 0$:
	\begin{align}
		\ln \ln x - \ln \ln \ln x \geq 0.5\cdot\ln \ln x \,.
	\end{align}
\end{lemma}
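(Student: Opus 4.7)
The plan is to reduce the inequality to a one–variable calculus problem by the substitution $y = \ln \ln x$. First I would observe that the statement implicitly requires $x$ to be large enough for all three iterated logarithms to be defined and for $\ln\ln x$ to be positive, i.e.\ $x > e$, so that $y = \ln \ln x > 0$; in any smaller range the quantity $\ln\ln\ln x$ is not a real number and the claim is vacuous. Moving $\ln\ln\ln x$ to the other side, the inequality is equivalent to
\begin{equation*}
\tfrac{1}{2}\,\ln\ln x \;\geq\; \ln\ln\ln x,
\end{equation*}
which, after the substitution $y=\ln\ln x$, becomes the standalone inequality $y/2 \geq \ln y$ for all $y>0$.

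Next I would analyze the auxiliary function $g(y) = y/2 - \ln y$ on $(0,\infty)$. Its derivative is $g'(y) = \tfrac12 - 1/y$, which vanishes only at $y = 2$, and $g''(y) = 1/y^2 > 0$, so $y=2$ is the unique (and global) minimizer. A direct evaluation gives
\begin{equation*}
g(2) \;=\; 1 - \ln 2 \;>\; 0,
\end{equation*}
using only the elementary fact $\ln 2 < 1$. Hence $g(y) \geq g(2) > 0$ for every $y>0$, which is the desired one–variable inequality.

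Substituting back $y = \ln\ln x$ yields $\tfrac12 \ln\ln x > \ln\ln\ln x$, and rearranging gives
\begin{equation*}
\ln\ln x - \ln\ln\ln x \;\geq\; \tfrac12 \ln\ln x,
\end{equation*}
as claimed. I do not expect any real obstacle: the argument is a routine convexity/minimization step, and the only subtlety is flagging the implicit domain assumption $x > e$ (or, more generously, $x \geq 9$, which is the regime in which this lemma is actually invoked within Lemma~\ref{lem:overflow-prob}). If one wishes to state the bound with a margin rather than strict inequality, the explicit value $g(2) = 1 - \ln 2 \approx 0.307$ can be retained to give a quantitative slack.
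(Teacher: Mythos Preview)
Your argument is correct: the substitution $y=\ln\ln x$ reduces the claim to $y/2\ge\ln y$ for $y>0$, and minimizing $g(y)=y/2-\ln y$ at $y=2$ with $g(2)=1-\ln 2>0$ settles it; you also rightly flag that the statement is only meaningful for $x>e$. The paper states this lemma without proof in its appendix of auxiliary convexity facts, so there is no alternative argument to compare against---your proof is exactly the kind of routine verification one would expect here.
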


\subsection{Chernoff Bounds}

In the following, Theorems~\ref{thm:chernoff-lower} and~\ref{thm:chernoff-upper} extend the classic Chernoff bounds of Theorem~\ref{chernoff}, enabling us to obtain pessimistic estimators for the sake of derandomization.

\label{app:chernoff-bound-proofs}

\begin{restatable}{theorem}{thmChernoffLowerBound}
\label{thm:chernoff-lower}
Let $\rvXSum$ be the sum of $\numberVars$ independent random variables $\rvX[1],\ldots,  \rvX[\numberVars]$ with $\rvX \in [0,1]$ for $\varEnum$.
Denoting by $\expXtilde \leq \expX = \Expec[X_\ell]$ lower bounds on the expected value of random variable $\rvX$, $\varEnum$, the following holds for any $\delta \in (0,1)$ with $\exptilde = \sumEnum[\expXtilde]$ and $\param = \ln(1-\delta)$;
\begin{align}
	\Prob[\rvXSum \leq (1-\delta) \cdot \exptilde] \overset{(a)}{\leq}
	\expT[-\param \cdot (1-\delta) \cdot \exptilde] \cdot {\prodEnum[\Expec[\expT[\param \cdot \rvX]]]} \overset{(b)}{\leq}  e^{-\delta^2\cdot \exptilde/2 } \label{app:eq:chernoff:lower}
\end{align}
\end{restatable}
\begin{proof}
We first prove the inequality $\Prob[\rvXSum \leq (1-\delta) \cdot \exptilde] \leq \left( \frac{e^{-\delta}}{{(1-\delta)^{1-\delta}}}\right)^{\exptilde}$ for $\delta \in (0,1)$. Let $\rvY = \exp(\param \cdot \rvX)$, $\varEnum$, for $\param = \ln(1-\delta)$. Note that $\param < 0$ holds.

By Lemma~\ref{app:lem:convexity-rv} Equality~\ref{app:eq:convexity-rv} holds.
As $\exp(\param)-1=-\delta<0$ holds for $\param<0$, the exponential function $f(z)=\exp((\exp(\param)-1)\cdot z)$ is monotonically decreasing. Using the lower bound  $\expXtilde \leq \expX$ and $\exptilde = \sumEnum[\expXtilde]$ the following is obtained:
\begin{align}
\Expec[\rvY] \leq \exp( (\exp(\param)-1) \cdot \expXtilde)
\label{app:eq:chernoff-3}
\end{align}

 As the variables $\rvX[1],\ldots, \rvX[\numberVars]$ are pairwise independent, the variables $\rvY[1],\ldots,\rvY[\numberVars]$ are also pairwise independent. Accordingly, the following holds for $\rvYSum=\expT[\param \cdot \rvXSum]$:
\begin{align}
\Expec[\rvYSum] = \Expec[\expT[\param\cdot \sumEnum[\rvX]]] = \Expec[\prodEnum[\expT[\param\cdot \rvX]]] =  \prodEnum[\Expec[\rvY]]\,.
\label{app:eq:chernoff-1}
\end{align}

Accordingly, the following is obtained:
\allowdisplaybreaks
\begin{alignat}{10}
& && \hspace{-20pt} \Pr[\rvXSum \leq (1 - \delta)\cdot \exptilde] \\
& = && \Pr[\expT[\param \cdot \rvXSum] \geq \expT[\param \cdot (1-\delta) \cdot \exptilde]]  &&&  \comm[as \ensuremath{\param < 0}]\\
& \leq && \frac{\Expec[\exp[\param \cdot \rvXSum]]}{\expT[\param \cdot (1-\delta) \cdot \exptilde]} &&& \comm[by Markov's inequality] \\
& = && \frac{\prodEnum[\Expec[\rvY]]}{\expT[\param \cdot (1-\delta) \cdot \exptilde]} &&& \comm[by Equation~\ref{app:eq:chernoff-1}] \label{app:eq:pessimistic-chernoff}\\
& \leq && \frac{\prodEnum[\expT[(\exp(\param)-1)\cdot \exptilde_i]]}{\expT[\param \cdot (1-\delta) \cdot \exptilde]} &&& \comm[by Equation~\ref{app:eq:chernoff-3}]\\
& = && \expT[\big( \sumEnum[ \expT[\param] - 1]\cdot \expXtilde \big) - \big( \param \cdot (1-\delta) \cdot \exptilde \big) ] \hspace{-30pt} &&& \comm[one exponent]\\
& = && \expT[\big( \sum_{i \in [N]} ( (1-\delta) - 1)\cdot \exptilde_i \big) - \big( \ln(1-\delta) \cdot (1-\delta) \cdot \exptilde \big)] &&& \comm[using \ensuremath{\param=\ln(1-\delta)}]\\
& = && \expT[\big( -\delta \cdot \exptilde \big) - \big( \ln(1-\delta) \cdot (1-\delta) \cdot \exptilde \big)] &&& \comm[definition of \ensuremath{\exptilde=\sum \nolimits_{\varEnum} \expXtilde}]\\[-8pt]
& = && \left( \frac{e^{-\delta}}{{(1-\delta)^{1-\delta}}}\right)^{\exptilde} &&& \label{app:eq:chernoff:result}
\end{alignat}

Given Equation~\ref{app:eq:chernoff:result}, Inequality~(b) is a corollary of Lemma~\ref{app:lem:chernoff-3-helper}, which showed the following for $\delta \in (0,1)$
\begin{align}
 -\delta -(1-\delta)\ln(1-\delta) \leq -\delta^2/2\,.
\end{align}
Multiplying both sides with $\exptilde$ and exponentiating both sides yields the desired result.

Regarding the Inequality~(a), we note that this follows by the above proof from Equation~\ref{app:eq:pessimistic-chernoff}.
\end{proof}

\begin{restatable}{theorem}{thmChernoffUpperBound}
	\label{thm:chernoff-upper}
	Let $\rvXSum$ be the sum of $\numberVars$ random variables $\rvX[1],\ldots,  \rvX[\numberVars]$ with $\rvX \in [0,1]$ for $\varEnum$.
	Denoting by  $\expXhat \geq \expX = \Expec[X_\ell]$ upper bounds on the expected value of random variable $\rvX$, $\varEnum$, the following holds for any $\delta > 0$ with $\exphat = \sumEnum[\expXhat]$ and $\param=\ln(\delta+1)$:
	\begin{align}
		\Prob[ \rvXSum \geq (1+\delta) \cdot \exphat] \overset{(a)}{\leq}
		  \expT[-\param \cdot (1+\delta) \cdot \exphat] \cdot \prodEnum[\Expec[\expT[\param \cdot \rvX]]]
		\overset{(b)}{\leq}
		\left( \frac{e^\delta}{{(1+\delta)^{1+\delta}}}\right)^{\exphat} \,. \label{app:eq:chernoff:upper}
	\end{align}
\end{restatable}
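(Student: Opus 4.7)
The plan is to mirror the proof of Theorem~\ref{thm:chernoff-lower}, replacing the lower-tail move with the upper-tail one. Since now $\delta > 0$, the natural choice is $\theta = \ln(1+\delta) > 0$. The strategy is the standard moment-generating-function / Markov trick: apply Markov's inequality to $e^{\theta X}$ with $\theta > 0$ so that $\{X \geq (1+\delta)\hat{\mu}\} = \{e^{\theta X} \geq e^{\theta(1+\delta)\hat{\mu}}\}$, then use independence of the $X_\ell$ to factor $\mathsf{Ex}[e^{\theta X}] = \prod_\ell \mathsf{Ex}[e^{\theta X_\ell}]$.

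First I would apply Markov to obtain
\begin{align*}
\mathsf{Pr}[X \geq (1+\delta)\hat{\mu}] \leq \frac{\mathsf{Ex}[e^{\theta X}]}{e^{\theta(1+\delta)\hat{\mu}}} = e^{-\theta(1+\delta)\hat{\mu}}\prod_{\ell \in [k]} \mathsf{Ex}[e^{\theta X_\ell}],
\end{align*}
which already yields inequality~(a) of Equation~\eqref{app:eq:chernoff:upper}. Next, by Lemma~\ref{app:lem:convexity-rv}, each factor satisfies $\mathsf{Ex}[e^{\theta X_\ell}] \leq \exp((e^\theta - 1)\mu_\ell)$. The key sign flip compared to Theorem~\ref{thm:chernoff-lower} is that here $e^\theta - 1 = \delta > 0$, so the map $z \mapsto \exp((e^\theta - 1)z)$ is monotonically \emph{increasing}; consequently the upper bound $\mu_\ell \leq \hat{\mu}_\ell$ (rather than the lower bound used in the lower-tail case) is what lets me pass to $\exp((e^\theta - 1)\hat{\mu}_\ell)$. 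Multiplying across all $\ell$ and collecting the sum into $\hat{\mu} = \sum_\ell \hat{\mu}_\ell$ gives $\prod_\ell \mathsf{Ex}[e^{\theta X_\ell}] \leq \exp(\delta \hat{\mu})$.

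Combining with the Markov bound and substituting $\theta = \ln(1+\delta)$ turns the exponent into $\delta\hat{\mu} - \ln(1+\delta)\cdot(1+\delta)\hat{\mu}$, which exponentiates to the desired $\bigl(e^\delta / (1+\delta)^{1+\delta}\bigr)^{\hat{\mu}}$, establishing inequality~(b).

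The proof is routine and the only real point requiring care is the monotonicity direction noted above — that is why the theorem needs upper bounds $\hat{\mu}_\ell \geq \mu_\ell$ (and why the hypothesis direction differs from Theorem~\ref{thm:chernoff-lower}). No $x^2/3$ bound (Lemma~\ref{app:lem:chernoff-4-helper}) is needed for the stated form; that refinement would only be relevant if one wanted to further weaken the bound to the $e^{-\delta^2 \hat{\mu}/3}$ form, which is not what the theorem asserts.
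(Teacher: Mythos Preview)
Your proposal is correct and follows essentially the same approach as the paper's proof: Markov's inequality applied to $e^{\theta X}$ with $\theta=\ln(1+\delta)>0$, factoring by independence to obtain inequality~(a), then Lemma~\ref{app:lem:convexity-rv} together with the monotonicity of $z\mapsto\exp((e^\theta-1)z)$ (increasing since $\theta>0$) to replace $\mu_\ell$ by $\hat\mu_\ell$ and arrive at inequality~(b). Your observation about the sign flip and why upper bounds $\hat\mu_\ell\geq\mu_\ell$ are the right hypothesis here is exactly the point the paper's proof makes as well.
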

\begin{proof}
	Let $Y_i = \exp(\param \cdot X_i)$ for  $\param=\ln(\delta+1)$, $\varEnum$. Note that $\param > 0$ holds. As the variables $\rvX[1],\ldots,  \rvX[\numberVars]$ are pairwise independent, the variables $\rvY[1],\ldots,\rvY[\numberVars]$ are also pairwise independent. Considering $\rvY=\exp(\param \cdot \rvX)$, the following holds due to the pairwise independence of the variables $\rvY[1],\ldots,\rvY[\numberVars]$:
	\begin{alignat}{5}
		\textstyle
		\Expec[\rvY] & = &&\Expec[\exp(\param \cdot \rvXSum)] = \Expec[\expT[\param \cdot \sumEnum[\rvX]]] \notag\\
		&  = &&  \Expec[\prodEnum[\expT[\param \cdot \rvX]]] = \prodEnum[\Expec[Y_i]]\,. \label{app:eq:chernoff-6}
	\end{alignat}
	By Lemma~\ref{app:lem:convexity-rv} the Inequality~\ref{app:eq:convexity-rv} holds. As $\param > 0$ holds, the exponential function\break \mbox{$f(z) = \exp((\exp(\param)-1)\cdot z)$} is monotonically increasing. Using the upper bound  $\expXhat \geq \expX = \Expec[\rvX]$ on the expectation of $\rvX$, $\varEnum$, with $\exphat = \sumEnum[\expXhat] $, the following is obtained:
	\begin{align}
		\Expec[\rvY] \leq \exp( (\exp(\param)-1) \cdot \expXhat)\,.
		\label{app:eq:chernoff-2}
	\end{align}
	Using Equations~\ref{app:eq:chernoff-1} and~\ref{app:eq:chernoff-2} together with Markov's inequality, the left inequality of the  Chernoff bound \ref{app:eq:chernoff:upper} is obtained for any $\delta > 0$ by setting $\param = \ln(\delta + 1) > 0$:
	\allowdisplaybreaks
	\begin{alignat}{7}
		& && \hspace{-20pt} \Pr[X \geq (1+\delta)\cdot \exphat] \\
		& = && \Pr[\exp(\param \cdot X) \geq \exp(\param \cdot (1+\delta) \cdot \exphat )]  &&&  \\
		& \leq && \frac{\Expec[\exp(\param \cdot X)]}{\exp(\param \cdot (1+\delta) \cdot \exphat)} &&& \comm[by Markov's inequality] \\
		& = && \frac{\prod_{i \in [N]} \Expec[Y_i]}{\exp(\param \cdot (1+\delta) \cdot \exphat)} &&& \comm[by Equation~\ref{app:eq:chernoff-1}] \label{app:eq:pessimistic-chernoff-upper}\\
		& \leq && \frac{\prod_{i \in [N]} \exp((\exp(\param)-1)\cdot \exphat_i)}{\exp(\param \cdot (1+\delta) \cdot \exphat)} &&& \comm[by Equation~\ref{app:eq:chernoff-2}]  \\
		& = && \exp \bigg( \big( \sum_{i \in [N]} ( \exp(\param) - 1)\cdot \exphat_i \big) - \big( \param \cdot (1+\delta) \cdot \exphat \big) \bigg) \hspace{-30pt} &&& \\
		& = && \exp \bigg( \big( \sum_{i \in [N]} ( (1+\delta) - 1)\cdot \exphat_i \big) - \big( \ln(1+\delta) \cdot (1+\delta) \cdot \exphat \big) \bigg) &&& \comm[using $\param=\ln(\delta+1)$]\\
		& = && \exp \bigg( \big( \delta \cdot \exphat \big) - \big( \ln(1+\delta) \cdot (1+\delta) \cdot \exphat \big) \bigg) &&& \comm[def. of $ \exphat=\sum_{i \in [N]} \exphat_i$]\\
		& = && \frac{\exp \bigg( \delta \cdot \exphat \bigg)}{\exp \bigg( \ln(1+\delta)\cdot (1+\delta)\cdot \exphat \bigg) } = \left( \frac{e^\delta}{{(1+\delta)^{1+\delta}}}\right)^{\exphat} &&&
	\end{alignat}
	This completes the proof of inequality (b). Regarding the Inequality~(a), we note that this follows by the above proof from Equation~\ref{app:eq:pessimistic-chernoff-upper}.
\end{proof}

\end{document}

========== OLD INTRO ==========

\title[Improved Multicommodity Flow Throughput]
{Improved Throughput for All-or-Nothing\\Multicommodity Flows with Arbitrary Demands}




\author{Anya Chaturvedi}
\affiliation{\institution{School of Computing and Augmented Intelligence, Arizona State University, US}}
\email{anya.chaturvedi@asu.edu}
\author{Chandra Chekuri}
\affiliation{\institution{Dept. of Computer Science, University of Illinois at Urbana-Champaign, US}}
\email{chekuri@illinois.edu}

\author{Mengxue Liu}
\affiliation{\institution{School of Computing and Augmented Intelligence, Arizona State University, US}}
\email{mengxueliu.hust231@gmail.com}
\author{Andr\'ea W.\ Richa}
\affiliation{\institution{School of Computing and Augmented Intelligence, Arizona State University, US}}
\email{aricha@asu.edu}

\authornote{\color{red}[AR, add funding, to students too]}

\author{Matthias Rost}
\affiliation{\institution{Faculty of Electrical Engineering and Computer Science, TU Berlin, Germany}}
\email{mrost@inet.tu-berlin.de}

\author{Stefan Schmid}
\affiliation{\institution{Faculty of Computer Science, University of Vienna, Austria} }
\email{stefan_schmid@univie.ac.at}

\authornote{[Research supported by the ERC Consolidator project AdjustNet, grant agreement No.~864228.]}

\author{Jamison Weber}
\affiliation{\institution{School of Computing and Augmented Intelligence, Arizona State University, US}}
\email{jwweber@asu.edu}




\newcommand{\stefan}[1]{\textbf{stefan: #1}}

\begin{abstract}
Throughput is a main performance objective in communication networks. 
This paper considers a fundamental maximum throughput routing problem ---
the \emph{all-or-nothing multicommodity flow (ANF)} problem --- in arbitrary
  \emph{directed} graphs and in the practically relevant but challenging setting where {\em demands can
  be (much) larger than the edge capacities}. 
 Hence, in addition to assigning requests to valid flows for each routed commodity, an admission control mechanism is required 
 which prevents overloading the network when routing commodities. 

Formally, the input for the ANF problem is an edge-capacitated
  directed graph $G=(V,E)$, where $n=|V|,$ and $k$ source-destination node-pairs
  $(s_i,t_i)$ of demand $d_i> 0$ and weight $w_i > 0$. The
  goal is to route a maximum weight subset of the given pairs (i.e., the weighted {\em throughput}), respecting the edge capacities; a pair
  $(s_i,t_i)$ is routed if all of its demand $d_i$ is routed from
  $s_i$ to $t_i$ (this is the all-or-nothing aspect);  splitting (fractional)
  flows is allowed. 
  
  We make several contributions.
  On the theoretical side we 
  present a bi-criteria
  approximation randomized rounding framework for this NP-hard problem that achieves a constant approximation of the throughput while only violating the edge capacities by a logarithmic factor. We present two
  non-trivial linear programming relaxations that can be used in the framework.
  One is an
  exponential-size formulation (solvable in polynomial time using a separation oracle)
  that considers a ``packing'' view and
  allows a more flexible approach, while the other is a compact (polynomial-size) edge-flow formulation 
  that allows for easy solving via standard LP solvers.
  We prove the non-trivial "equivalence" of the
  two relaxations and
  highlight the advantages of each of the two approaches. Via these, we obtain a polynomial-time randomized 
  algorithm that yields an arbitrarily good approximation on the weighted throughput
  while violating the edge capacity constraints by at most
  an $O(\min\{k,\log n/\log \log n\})$ multiplicative factor. 
  We also describe a deterministic rounding algorithm by derandomization, 
  using the method of pessimistic estimators.

  We complement our theoretical results with a proof of concept empirical evaluation,
  considering a variety of network scenarios. 
  We study two different ways to solve the LP efficiently
   in terms of time and space:
  $(a)$ by solving the compact ANF formulation directly using an
  off-the-shelf solver, and $(b)$ by approximately solving the packing LP relaxation via a well-known
  multiplicative weight update (MWU) approach (based on Lagrangean relaxation) or via a faster MWU-based heuristic called permutation routing. We highlight the benefits of the ANF packing LP formulation by presenting some more general scenarios of interest to networking applications (such as routing along short paths or a small number of paths) that this formulation allows.
\end{abstract}

\maketitle
\thispagestyle{fancy}

\section{Introduction} \label{sec:intro}

The study of routing and multicommodity flow problems is motivated by
many real-world applications, e.g., related to the optimization of
communication and traffic networks, as well as by the crucial role
flows and cuts play in combinatorial optimization~\cite{ChekuriE15}.
In this paper, we are interested in throughput optimization in the
context of communication networks serving multiple commodities.
Throughput is a most fundamental performance metric in many networks
\cite{mogul2012we}, and we are particularly interested in the
practically relevant scenario where flows have certain minimal
performance or quality-of-service requirements, in the sense that they
need to be served in an \emph{all-or-nothing} manner with respect to their respective demands.

Our problem belongs to the family of {\em all-or-nothing (splittable) multicommodity flow} 
problems.
In contrast to most existing literature, we consider a more realistic 
model in the following respects:
\begin{itemize}
    \item The underlying communication graph can be \emph{directed}.
      This is motivated by the fact that in most practical
      communication networks (e.g., optical networks or wireless
      networks), the available capacities in the different link
      directions can differ.

    \item A single commodity demand can be larger than the capacity of any single
      link or path.  Consider for example a bulk transfer, or the fact
      that traffic patterns are often highly skewed, with a small
      number of elephant flows consuming a significant amount of
      bandwidth resources~\cite{roy2015inside}.  Only {\em splittable}
      flows can serve such demands.
    
    \item The total demand can be larger than the network capacity. To
      make efficient use of the given network resources, we hence need
      a clever {\em admission control} mechanism, in addition to a
      routing algorithm.
\end{itemize} 

We define the {\em All-or-Nothing (Splittable) Multicommodity Flow
  (ANF)} problem as follows: It takes as input
a flow network modeled as a capacitated directed graph $G(V,E)$, where
$V$ is the set of nodes, $E$ is the set of edges, and each edge $e$
has a capacity $c_{e}>0$. Let $n=|V|$ and $m=|E|$.  We are given a set
of source-destination pairs $(s_i, t_i)$, where $s_i,t_i\in V$, $ i
\in [k]$\footnote{Let $[x]$ denote the set $\{1,\ldots ,x\}$, for any
positive integer $x$.}, each with a given (non-uniform) demand $d_i>0$
and weight $w_i>0$.  The edge capacities $c_{e}$, the demands $d_i$
and the weights $w_i$ can be arbitrary positive functions on $n$ and
$k$, for any $e\in E$ or $i\in [k]$.  A valid set of flows for
commodities $1,\ldots,k$ in $G$ (i.e., a valid {\em multicommodity
  flow}),
	must satisfy standard flow conservation constraints for each
        commodity~$i$, which imply that the amount of flow for
        commodity $i$ entering a node $v$ has to be equal to the flow
        for commodity $i$ leaving $v$, if $v\not= s_i,t_i$.
	The {\em load} of an edge $e$, given by the sum of the flows
        for all commodities on $e$, must not exceed the edge's
        capacity $c_{e}$.
	Commodity $i$ is {\em satisfied} if $d_i$ units of flow of
        this commodity can be successfully routed from $s_i$ to $t_i$  in the network.
        (See also our mixed integer program edge-flow formulation in Figure~\ref{fig:compact}).



We aim to maximize the total profit of a subset of 
commodities that can be concurrently satisfied in a valid multicommodity flow. Specifically, 
the goal is to find a subset $K'\subseteq [k]$ of commodities to be
concurrently satisfied such that the (weighted) {\em throughput},
given by $\sum_{i\in K'} w_i$, is maximized over all possible
$K'$. The flow can be {\em split} arbitrarily along many branching
routes (subject to flow conservation and edge capacity constraints)
and does not have to be integral.

The ANF problem was introduced in \cite{ChekuriKS13} as a relaxation
of the classical Maximum Edge-Disjoint Paths problem (MEDP) and is
known to be NP-Hard and APX-hard even in the restricted setting of
unit demands and when the underlying graph is a tree
\cite{ChekuriKS13,1997}. In directed graphs, even with unit demands,
the problem is hard to approximate to within an $n^{\Omega(1/c)}$
factor even when edge capacities are allowed to be violated by a
factor $c$~\cite{chuzhoy2007hardness}. When demands can exceed the
minimum capacity, strong lower bounds exist even in very restricted
settings \cite{ShepherdV}. Hence, the literature has followed a
bi-criteria optimization approach where edge capacities can be violated slightly. Namely, in this paper we seek an {\em
  $(\alpha,\beta)$-approximation algorithm}:
For parameters $\alpha \in (0,1]$ and $\beta \geq 1$, we seek a
  polynomial-time algorithm that outputs a solution to the ANF problem
  whose throughput is at least an $\alpha$ {\em fraction of the
    maximum throughput} and whose {\em load on any edge $e$ is at most
    $\beta$ times the edge capacity $c_{e}$}, with high
  probability.\footnote{ With probability at least $1-1/n^c$, where
  $c>0$ is a constant.} The parameter $\beta$ hence provides an upper
  bound on the {\em edge capacity violation ratio} (or {\em
    congestion}) incurred by the algorithm.


\subsection{Our Contributions}
	
This paper revisits a fundamental maximum throughput routing problem, 
the all-or-nothing multicommodity flow (ANF) problem, considering 
a more general and practical setting where the network topology can be 
an {\em arbitrary directed graph}, with {\em arbitrary, non-uniform commodity demands} that can be much larger than the edge capacities, in contrast to most of the existing work in the literature. 
This model is challenging as it not only requires a clever algorithm
to efficiently route the splittable commodities across the directed
and capacitated network, but also an admission control policy.

We make several contributions.  On the  theoretical side, we present  a bi-criteria
  approximation randomized rounding framework for this NP-hard problem that achieves a constant approximation of the throughput while only violating the edge capacities by a logarithmic factor.
More specifically,
  \begin{itemize}
  \item We present {\em two non-trivial ANF linear programming relaxations}: One is an
    exponential-size formulation (solvable in polynomial time using a
    separation oracle) that considers a {\em ``packing''} view and
    allows a more flexible approach, while the other is a {\em strengthened relaxation} of a
    {\em compact edge-flow} mixed integer program (MIP) formulation 
    that 
    allows for easy solving via
    standard LP solvers.  We prove the non-trivial {\em "equivalence"} of the two
    relaxations and highlight the advantages of each of the two
    approaches.
  \item Via these relaxations,
  we obtain a polynomial-time {\em randomized rounding} algorithm that yields an {\em $(1-\epsilon)$ throughput
      approximation}, for any $1/m \leq \epsilon<1$, with an {\em edge
      capacity violation ratio} (also referred to as {\em congestion}) of $O(\min\{k,\log n/\log \log n\})$,
    with high probability. 
  
  \item We also present a {\em deterministic rounding algorithm by
    derandomization}, using the method of pessimistic
    estimators. Contrary to most algorithms obtained this way, our
    derandomized algorithm is simple enough to be also of relevance in
    practice.
\end{itemize}
In addition, our packing framework for ANF has
interesting networking applications, beyond the specific model considered in this paper.
We discuss different examples, related to {\em unsplittable flows}, 
flows that are {\em split into a small number of paths},
{\em routing along disjoint paths} for fault-tolerance, using
{\em few edges for the flow}, or routing flow along  {\em short paths}.


As a proof of concept,
we show how to {\em engineer our algorithms} for
practical scenarios. To this end, 
we couple three algorithms that allow one to compute 
the relaxed LP solutions efficiently, in terms of time and space, with both our randomized and derandomized algorithms. 
The first algorithm {\em directly solves the compact ANF formulation} using an
  off-the-shelf solver, in our case CPLEX; the second algorithm approximately solves the packing LP relaxation via a well-known
  {\em multiplicative weight update (MWU)} approach, based on Lagrangean relaxation; the last and third algorithm is a faster MWU-based heuristic called {\em permutation routing}. 
We provide general guidelines about the relative
efficacy of these algorithms in specific real-world networks.
As a contribution to the research community, to ensure reproducibility
and facilitate follow-up work, we will release our implementation (source code)
and experimental artefacts 
with this paper.

============== OLD INTRO ==========================